\documentclass[12pt,a4paper,oneside]{book}

\usepackage[utf8]{inputenc}
\usepackage[english]{babel}
\usepackage{verse}
\usepackage{appendix}
\usepackage{titlesec}
\usepackage{pgfornament}
\usepackage{fancyhdr}
\usepackage{amsmath,amsthm,amssymb,bbm,enumerate}
\usepackage{fancyref}
\usepackage{lastpage}
\usepackage{cite}
\usepackage{pdfpages}
\usepackage{graphicx}
\usepackage{booktabs}
\usepackage{float}
\usepackage{tikz,pgffor}
\usepackage{hyperref}

\usetikzlibrary{decorations.pathmorphing, decorations.markings, patterns, intersections, decorations.text, calc}

\restylefloat{figure}

\newcommand{\ie}{\textit{i.e. }} 
\newcommand{\eg}{\textit{e.g. }} 

\newcommand{\ket}[1]{\left |~#1~\right\rangle}
\newcommand{\bra}[1]{\left\langle~#1~\right |}
\newcommand{\ked}[1]{\left |~#1~\right )}
\newcommand{\brd}[1]{\left (~#1~\right |}

\newcommand{\braked}[1]{\left\langle #1 \right)}
\newcommand{\brdked}[1]{\left( #1 \right)}
\newcommand{\brdket}[1]{\left( #1 \right\rangle}
\newcommand{\braket}[1]{\left\langle #1 \right\rangle}
\newcommand{\abs}[1]{\left|#1\right|}

\newcommand{\one}{\mathbbm{1}}

\newcommand{\UCNOT}{U_{\text{CNOT}}}
\newcommand{\UQT}{U_{\text{QT}}}

\newcommand{\fin}{f^{\text{in}}}
\newcommand{\fins}{f^{\text{in},*}}
\newcommand{\fout}{f^{\text{out}}}
\newcommand{\fouts}{f^{\text{out},*}}

\newcommand{\uin}{u_{\text{in}}}
\newcommand{\uout}{u_{\text{out}}}
\renewcommand{\vin}{v_{\text{in}}}
\newcommand{\vout}{v_{\text{out}}}

\newcommand{\ain}{a^{\text{in}}}
\newcommand{\aind}{a^{\text{in},\dagger}}
\newcommand{\aout}{a^{\text{out}}}
\newcommand{\aoutd}{a^{\text{out},\dagger}}

\newcommand{\aintd}{a^{\text{int},\dagger}}

\newcommand{\invac}{\ket{\text{in}}}
\newcommand{\outvac}{\ket{\text{out}}}

\newcommand{\pd}[2]{\frac{\partial #1}{\partial #2}}


\usepackage[split]{splitidx}
\makeindex
\newindex[Index of Concepts]{concepts}
\newindex[Index of People]{people}
\newcommand{\ic}[1]{\sindex[concepts]{#1}}
\newcommand{\ip}[1]{\sindex[people]{#1}}
\newcommand{\co}[1]{#1\sindex[concepts]{#1}}

\newcommand{\reals}{\mathbb{R}}
\newcommand{\comps}{\mathbb{C}}

\DeclareMathOperator{\tr}{tr}
\DeclareMathOperator{\diag}{diag}
\DeclareMathOperator{\sgn}{sgn}


\newtheorem{thm}{Theorem}

\newcommand{\mc}[1]{\mathcal #1}
\newcommand{\ve}[0]{\varepsilon}

\author{Furkan Sem\.{ı}h Dündar}
\title{{\Huge The Firewall Paradox}\\{\small Second Edition}}
\date{}

\pagestyle{fancy}
\fancyhead{}

\setlength{\headheight}{15pt}


\setlength{\parindent}{0pt}
\setlength{\parskip}{1em}

\begin{document}
\titleformat{\chapter}[hang]{\Huge}{\centering}{0pt}{\Huge\centering}
\frontmatter

\includepdfmerge{./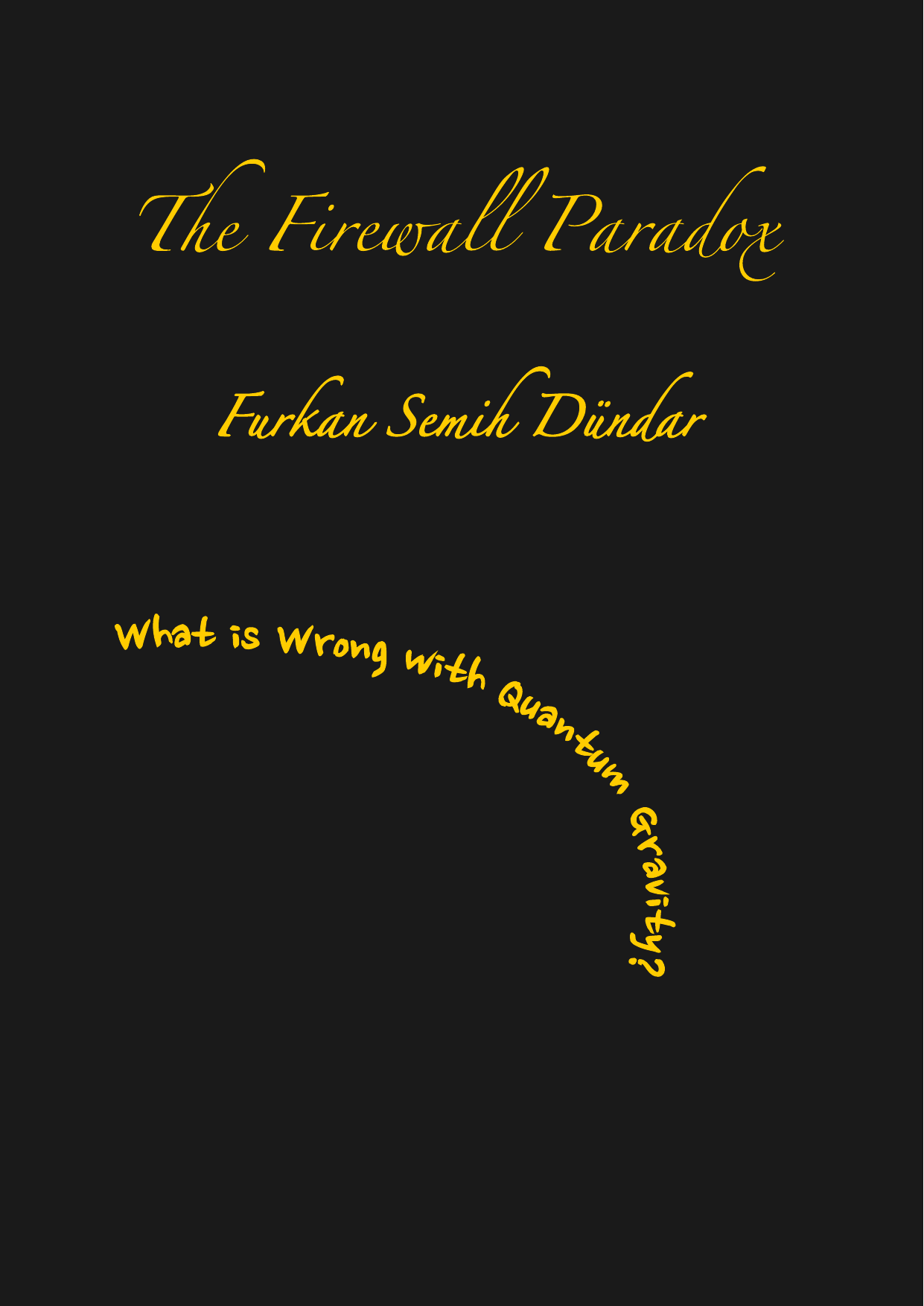}
\includepdfmerge{./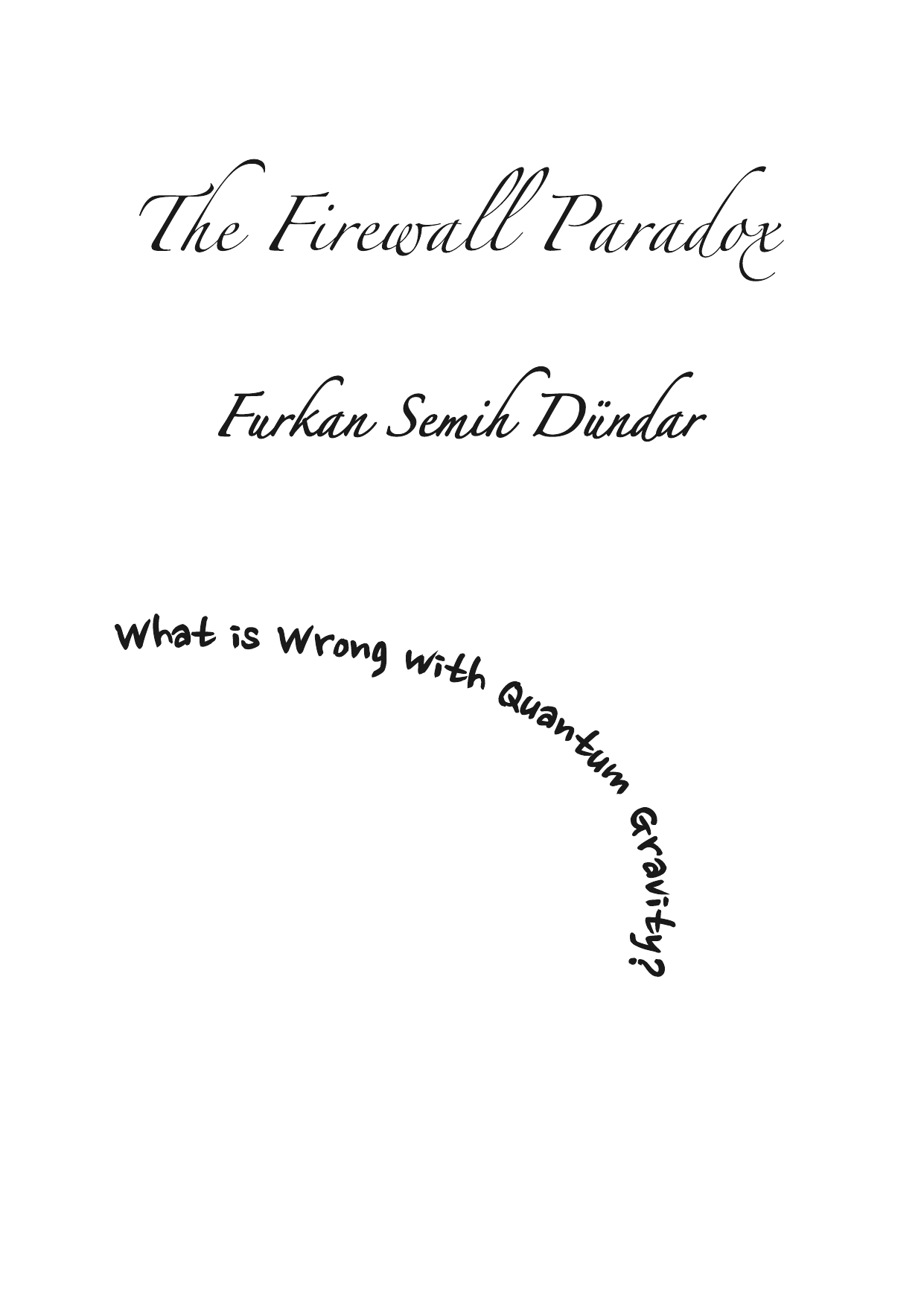}

\cleardoublepage

\hyphenation{Sch-warz-sch-ild}

\cleardoublepage
\phantomsection
\addcontentsline{toc}{chapter}{Abstract}

\vspace*{\fill}
\begin{center}
  \textbf{Abstract}
\end{center}

In this MSc. thesis, we have attempted to give an overview of the firewall paradox and various approaches towards its resolution. After an introductory chapter on some basic concepts in quantum field theory in curved spacetimes such as Hawking radiation, we introduce the paradox. It arises out of application of principles each of which is thought or assumed to be correct: 1) unitary black hole evaporation, 2) validity of quantum field theory in curved spacetime, 3) a measure of the number of black hole quantum states, 4) Einstein's equivalence principle. Then, we present various approaches that exist in the literature towards the resolution of the paradox.

\vspace*{\fill}

\begin{center}
  \textbf{Özet}
\end{center}

Bu yüksek lisans tezinde, ateşten set paradoksu ve onun çözümüne yönelik çeşitli yaklaşımları anlatma çabası içinde bulunuldu. Eğri uzay-zamanlarda kuantum alanlar kuramındaki, Hawking ışınımı gibi, bazı temel kavramlar üzerine olan bir giriş kısmından sonra ateşten set paradoksu anlatılıyor. Paradoks ayrı ayrı doğru olduğu düşünülen ya da sanılan ilkelerin hep beraberce uygulanması sonucunda doğuyor: 1) kara deliklerin kuantum mekaniği ile uyumlu biçimde buharlaşmaları, 2) eğri uzay-zamanlarda kuantum alanlar teorisinin geçerliliği, 3) kara deliklerin kuantum hallerini saymak için bir ölçü, 4) Einstein'ın eşdeğerlilik ilkesi. Sonrasında paradoksun çözümüne yönelik literatürde yer alan çeşitli yaklaşımlara yer veriliyor.

\vspace*{\fill}

%
\cleardoublepage
\phantomsection
\addcontentsline{toc}{chapter}{Dedication}

\vspace*{\fill}
  \textit{\hspace{3em}to Fuzûlî\ip{Fuzûlî}, Nietzsche\ip{Nietzsche, Friedrich
      Wilhelm}, Şule Gürbüz\ip{Gürbüz, Şule}, Palahniuk\ip{Palahniuk,
      Chuck}, Hawking\ip{Hawking, Stephen W.}, and Lady Gaga\ip{Lady
      Gaga} for showing different depths of the ocean called life. Or
    is it just a droplet?}
\vspace*{\fill}

\cleardoublepage
\phantomsection
\addcontentsline{toc}{chapter}{A Poem from the Era of Love}


\vspace*{\fill}

\begin{center}
  \begin{tikzpicture}[every node/.style={inner sep=0pt}]
    \node [align=center] (siir) {
      Tûti-i mûcize-gûyem ne desem lâf değil\\
      Çerh ile söyleşemem âyinesi sâf değil\\
      \\
      Ehl-i dildir diyemem sînesi sâf olmayana\\
      Ehl-i dil birbirini bilmemek insâf değil\\
      \\
      Yine endîşe bilir kadr-i dür-i güftârım\\
      Rûzgâr ise denî dehr ise sarrâf değil\\
      \\
      Girdi miftâh-ı der-i genc-i maânî elime\\
      Âleme bezl-i güher eylesem itlâf değil\\
      \\
      Levh-i Mahfûz-ı sühandır dil-i pâk-i Nef'î\\
      Tab'-ı yârân gibi dükkânçe-i sahhâf değil\\
    };

    \node [shift={(-1.5em,1.5em)}] (nw) at (siir.north west) {\pgfornament[width = 1.5cm]{39}};
    \node [shift={(1.5em,1.5em)}] (ne) at (siir.north east) {\pgfornament[width = 1.5cm, symmetry=v]{39}};
    \node [shift={(-1.5em,0em)}] (sw) at (siir.south west) {\pgfornament[width = 1.5cm, symmetry=h]{39}};
    \node [shift={(1.5em,0em)}] (se) at (siir.south east) {\pgfornament[width = 1.5cm, symmetry=c]{39}};
    
    \pgfornamentvline{nw}{sw}{west}{88}
    \pgfornamentvline{ne}{se}{east}{88}
    \pgfornamenthline{sw}{se}{south}{88}
    \pgfornamenthline{nw}{ne}{north}{88}
  \end{tikzpicture}
\end{center}






\begin{flushright}
  --Nef'î \cite{ipala}
\end{flushright}
\vspace*{\fill}

%
%

\cleardoublepage
\phantomsection
\addcontentsline{toc}{chapter}{Acknowledgements}

\vspace*{\fill}

I would like to thank my brother E. Burak Dündar\ip{Dündar, Enes Burak} for valuable discussions that led to the inclusion of the index as well as for artistic advice in the color choice for the cover page; Tahsin Çağrı Şişman\ip{Sisman@Şişman, Tahsin Çağrı} for lively debates; Özgür Kelekçi\ip{Kelekçi, Özgür} for unbounded discussions ranging from life to philosophy and science; Sadi Turgut\ip{Turgut, Sadi} and Özenç Güngör\ip{Güngör, Özenç} for illuminating discussions on quantum information; Ümit Alkuş\ip{Alkuş, Ümit} for sharing the lecture notes he took in quantum information classes; to Çağatay Menekay\ip{Menekay, Çağatay} for valuable discussions especially in regard to positive frequency of mode solutions; Bahtiyar Özgür Sarıoğlu\ip{Sarıoğlu, Bahtiyar Özgür} for valuable discussions; Sabine Hossenfelder\ip{Hossenfelder, Sabine} for useful comments that lead to my better understanding of entanglement of Hawking radiation from the perspective of an infalling observer; Daniel Harlow\ip{Harlow, Daniel} for a comment on the Harlow-Hayden conjecture; Dejan Stojkovic\ip{Stojkovic, Dejan} for valuable discussions on icezones; Samuel L. Braunstein\ip{Braunstein, Samuel L.} for pointing out his precursory idea of firewalls; members of METU-GR-HEP email group for providing a lively atmosphere for scientific discussions. Lastly, I would like to thank my advisor Bayram Tekin\ip{Tekin, Bayram} for the boldness and courage he presented in the choice of this thesis topic as well as bearing the trouble of having a student like myself. I mentioned him at the end, because the ones who come last and are present everywhere however unseen, constitute the basis of everything.

I would also like to thank TÜBİTAK for their graduate scholarship that I benefited during my studies in the last two years. Without this support, I would have to work in extra jobs and hence would have much less time to devote to my studies. I am grateful.

\vspace*{\fill}

\chapter*{Abbreviations}
\addcontentsline{toc}{chapter}{Abbreviations}

\begin{itemize}
\item [$\eta_{\mu \nu}$] Metric tensor (flat)
\item [$g_{\mu \nu}$] Metric tensor (general)
\item [$g$] Metric determinant ($g \equiv \det(g_{\mu \nu})$)
\item [$\partial_\mu$] Partial derivative
\item [$\nabla_\mu$] Covariant derivative (metric compatible)
\item [$\mathcal{L}$] Lagrangian density (usually referred to as only ``Lagrangian'')
\item [$S$] Action ($S \equiv \int d^n x |g|^{1/2} \mathcal{L}$)
\item [$\mathbb Z$] Set of integers
\item [$\mathbb Z^{\geq 0}$] Set of non-negative integers
\item [$\mathbb R$] Set of real numbers
\item [$\mathbb R^*$] Set of non-zero real numbers
\end{itemize}

We use Einstein's summation convention throughout the text unless otherwise
indicated. For example, $u^\mu v_\mu$ stands for $\sum_\mu u^\mu v_\mu$. Unless explicitly specified, we use the units in which $m_{\text{Planck}} = \hbar = c = 1$ holds.

\cleardoublepage
\tableofcontents

\mainmatter

\fancyhead[C]{\leftmark}


\chapter{Introduction to the Firewall Paradox}
\label{chap:introtoparadox}


Science has various ways of improving itself. The urge or need to explain observations of new phenomena is the most direct example that comes to mind. However, in the absence of access to regimes where new phenomena can be observed, paradoxes found in gedanken experiments are quite valuable. They force scientists to re-consider the basics on which they have depended so far. Upon this reconsideration, science can become capable of yielding better explanations of nature.


The \emph{firewall paradox}\ic{firewall paradox} has been introduced in the article \cite{Almheiri2012} by Ahmed Almheiri\ip{Almheiri, Ahmed}, Donald Marolf\ip{Marolf, Donald}, Joseph Polchinski\ip{Polchinski, Joseph} and James Sully\ip{Sully, James} (AMPS) that appeared on the arXiv on the 13th of July, 2012. This was the introduction of firewalls\footnote{However, the idea of a firewall has become \emph{imaginable} three years before in 2009: the ``energetic curtain'' in Samuel Braunstein's\ip{Braunstein, Samuel L.} \cite{Braunstein2009} is a precursor to firewalls. With the addition of Stefano Pirandola\ip{Pirandola, Stefano} and Karol Życzkowski\ip{Zyczkowski@Życzkowski, Karol} as authors, Braunstein's\ip{Braunstein, Samuel L.} work was published \cite{Braunstein2013}.}.

In the most basic terms, the firewall paradox is as follows. If black holes evaporate unitarily as expected by quantum mechanics, they become almost maximally entangled with the radiation they have emitted so far \cite{PageBHInfo}. Hence, newly emitted Hawking particles are almost maximally entangled with early radiation. However, these newly emitted quanta cannot \emph{also} be entangled with interior modes, which would otherwise violate basic principles of quantum mechanics. Because of the lack of entanglement in the latter case, quantum state around the horizon cannot be vacuum. On the other hand, equivalence principle requires that the place of event horizon cannot be determined locally: the region of spacetime around the horizon that an infalling observer passes through is not locally different from any other region of spacetime and is in Minkowski vacuum state. Therefore, equivalence principle together with the accepted wisdom\footnote{We mean \emph{black hole complementarity}\ic{black hole complementarity}. We discuss this idea in Section~\ref{sec:compl-black-hole}.} about black hole evaporation are not consistent. Infalling observers detect particles of high energy at the horizon \cite{Almheiri2012}, hence the name firewall.


In order to better understand the paradox in quantitative terms, we need some basic results and ideas, such as Hawking radiation and black hole complementarity, from quantum gravity. We deal with these concepts in Chapter~\ref{chap:intro}. Having acquired the basics, we focus our attention on the paradox in Chapter~\ref{chap:the-paradox}. Various proposals towards the resolution of the firewall paradox are included in Chapter~\ref{chap:mult-approaches}. We finalize the thesis in Chapter~\ref{chap:conclusion} by giving a conclusion.

\chapter{Basics}
\label{chap:intro}

Our aim in this chapter is to give a minimalistic account of various concepts we will later refer to and use in the text.

\section{General Relativity}
\label{sec:gr}

The general theory of relativity is the standard theory of gravitation we use today. It is the first of two pillars of modern physics, the second being the quantum theory. Gravitation is an interesting phenomenon, and perhaps it would not be rather shallow to claim, as it might have been expressed by others, that although it is the first force discovered in nature it is the least understood.

In general relativity, gravitation is seen as a manifestation of curvature of the geometry of spacetime. The geometry is encoded in the \emph{metric tensor}\ic{metric tensor} ($g_{\mu\nu}$). Also, there is another tensor named \emph{Riemann tensor}\ic{Riemann tensor} that carries all the information about curvature and is a function of the metric tensor\footnote{This is not strictly true, since it is indeed a function of a \emph{connection}\ic{connection} which is usually denoted as $\Gamma^\rho_{\;\;\mu \nu}$. In the present case, we will use a connection called the \emph{Christoffel connection}\ic{Christoffel symbol} which is the unique metric compatible connection in the case of zero torsion. It is symmetric under the exchange of $\mu \leftrightarrow \nu$.}. It is usually denoted as $R^\rho_{\;\;\mu \sigma \nu}$. We can contract various indices and obtain two other tensors that are used frequently. They are the \emph{Ricci tensor}\ic{Ricci tensor} ($R_{\mu \nu}$) and the \emph{scalar curvature}\ic{Scalar curvature} ($R$, sometimes referred to as the Ricci scalar\ic{Ricci scalar}). The definitions are:

\begin{align}
  R_{\mu \nu} \equiv R^\rho_{\;\; \mu \rho \nu} && R \equiv R^{\mu}_{\;\; \mu}.
\end{align}

In terms of these, quantitatively, the Einstein equation reads:

\begin{align}
  R_{\mu \nu} - \frac 1 2 g_{\mu \nu} R = 8\pi T_{\mu \nu},
\end{align}

where $g_{\mu \nu}$ is the metric tensor and $T_{\mu \nu}$ is the energy-momentum tensor, which depends only on the energy-matter content of the universe. The terms\footnote{The sum of these terms is called the \emph{Einstein tensor}\ic{Einstein tensor} ($G_{\mu \nu}$) which we mention just to note.} on the left hand side, on the other hand, are of purely geometric origin. Hence this equation relates energy-momentum to the geometry of the spacetime. Of course it can be read in reverse as well, in the direction that what kind of an energy-momentum tensor would yield the geometry at hand.

\subsection{Black Holes}\ic{black hole|textbf}
\label{sec:intro-gr-bh}

Black holes are a genuine prediction of general relativity. Classically, they are regions of spacetime from whose inside there cannot be any causal effect on the rest of the universe. However the idea of a ``black hole'' dates back to earlier times, and we would like to touch upon the history of the concept in a few paragraphs.

\ic{black hole!Mitchell-Laplace}John Michell (1783) and Pierre-Simon Laplace\footnote{An English translation of the text can be found in the appendix~A of \cite{hawking-ellis}.} (1799) independently thought of the idea of a \emph{dark star} whose gravitational pull on light particles emitted from its surface is so high that they never reach infinity, and hence are destined to turn back \cite{origin-of-bh-concept}. It seems that they both assume that light particles have some mass. The main idea behind the concept of dark star is that there may exist some stars on whose surface the escape velocity exceeds the speed of light.

As noted in \cite{origin-of-bh-concept-false-bh} ``the Newtonian dark body of Michell-Laplace is \emph{\textbf{not}} a black hole!'' (emphasis in the original). This is mainly because they think of an escape velocity, so when the body will be dark for observers at infinity, it will not be so for nearby observers. It is not really quite important whether what they describe is not a true black hole or not. What is important is that the concept of a dark star has become \emph{imaginable}.

Karl Schwarzschild\ip{Schwarzschild, Karl} found \cite{schwarzschild1916} (please see \cite{schwarzschild1916EngTrans} for an English translation) the first black hole solution of general relativity in 1916. This solution, known as the Schwarzschild black hole\ic{black hole!Schwarzschild}, describes a black hole that does not rotate and has zero electrical charge. In the coordinates that bears his name, it is described by the following metric\ic{metric!Schwarzschild}:

\begin{equation}
  \label{eq:1}
  ds^2 = -\left(1 - \frac{2M}{r} \right) dt^2 + \left(1 - \frac{2M}{r} \right)^{-1} dr^2 + r^2 d\Omega^2,
\end{equation}

where $d\Omega^2$ is the metric on unit 2-sphere. The singularity at $r=2M$ is fictitious and can be resolved after a suitable change of coordinates. However, there is a true singularity located at $r=0$.

\begin{figure}[t]
  \centering
  \includegraphics[width=10cm]{./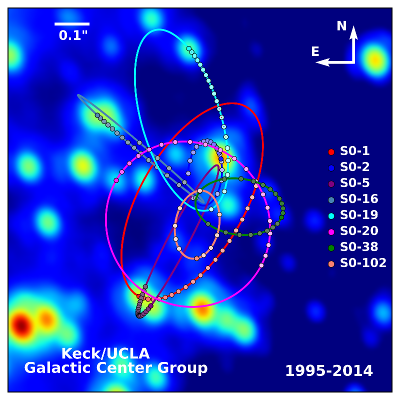}
  \caption{Orbits of various stars in the central arcsec of the Milkyway galaxy. In the region that is enclosed by every orbit lies a supermassive black hole of mass about four million solar masses \cite{bhAstroGhez}. (This image was created by Prof. Andrea Ghez and her research team at UCLA and are from data sets obtained with the W. M. Keck Telescopes.)}
  \label{fig:bh-astro-ghez}
\end{figure}

Kruskal coordinates\ic{Kruskal coordinates} on the other hand cover the entire manifold and is nonsingular on the event horizon. The metric reads\footnote{For more discussion one may see a standard book on the subject such as Sean Carroll's\ip{Carroll, Sean} \cite{carroll-gr-book}.}:

\begin{equation}
  \label{eq:21}
  ds^2 = \frac{32 M^3}{r} e^{-r/2M} (-dT^2 + dR^2) + r^2 d\Omega^2,
\end{equation}

where $T^2-R^2 = (1-r/2M) e^{r/2M}$. In the region covered by the Schwarzschild coordinates\ic{Schwarzschild coordinates} $T,R$ coordinates are defined as follow:

\begin{align}
  T &= \left( \frac{r}{2M} - 1\right)^{1/2} e^{r/4M} \sinh\left(\frac{t}{4M}\right),\\
  R &= \left( \frac{r}{2M} - 1\right)^{1/2} e^{r/4M} \cosh\left(\frac{t}{4M}\right).
\end{align}

The event horizon lies at $r=2M$ and its location in $T,R$ coordinates satisfy $T^2 = R^2$, or equivalently $T = \pm \abs R$. The singularity, on the other hand, is located at $r=0$. Its location is given by the relation $T^2-R^2 = 1$, or in other words by $T = \pm \sqrt{1+R^2}$. These are two hyperbolae. Figure~\ref{fig:kruskal-coor} includes representations of these features.

\begin{figure}
  \centering
  \begin{tikzpicture}[scale = 2, thick, domain=-2:2] 
  \draw[decorate, decoration=zigzag] plot (\x, { sqrt(1+\x*\x)});
  \draw[decorate, decoration=zigzag] plot (\x, {-sqrt(1+\x*\x)});
  \draw[dashed] (-2,-2) -- (2, 2);
  \draw[dashed] (-2, 2) -- (2,-2);
  \draw[->] (-2,0) -- (2,0) node [anchor=west] {$R$};
  \draw[->] (0,-2) -- (0,2) node [anchor=south] {$T$};
\end{tikzpicture}\vspace{1em}
  \caption[Description of the whole spacetime manifold in Kruskal coordinates. This is known as the \emph{Kruskal diagram}. Zigzag lines correspond to singularity, whereas dashed lines indicate the event horizon. There are indeed two singularities and event horizons. One of a white hole (at the bottom) and one of a black hole (at the top). The region on the right side extending to higher $R$ values is the spacetime outside the event horizon: the exterior Schwarzschild region. Its symmetric partner on the left side is the same but a causally disconnected region. They are connected through a worm hole. The worm hole is present in constant $T$ slices that does not intersect with any singularity. It is, however, \emph{nontraversable}: any observer who enters the worm hole cannot exit through the other mouth of the hole.]{Description of the whole spacetime manifold in Kruskal coordinates\ic{Kruskal coordinates}. This is known as the \emph{Kruskal diagram}\ic{Kruskal diagram}. Zigzag lines correspond to singularity, whereas dashed lines indicate the event horizon. There are indeed two singularities and event horizons. One of a white hole\ic{white hole} (at the bottom) and one of a black hole (at the top). The region on the right side extending to higher $R$ values is the spacetime outside the event horizon: the exterior Schwarzschild region\ic{black hole!Schwarzschild!exterior region}. Its symmetric partner on the left side is the same but a causally disconnected region. They are connected through a worm hole\ic{worm hole}. The worm hole is present in constant $T$ slices that does not intersect with any singularity. It is, however, \emph{nontraversable}: any observer who enters the worm hole cannot exit through the other mouth of the hole.}
  \label{fig:kruskal-coor}
\end{figure}
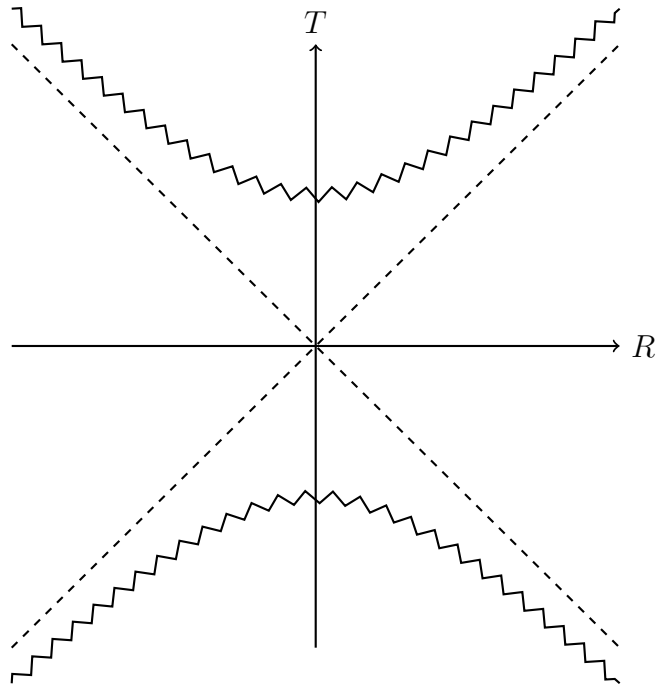

Another merit of this coordinate system, apart from covering the whole manifold, is that radial light rays follow paths that are \emph{straight lines} tilted $\pm 45^\circ$ from the $T$ axis; as in the Minkowski spacetime\ic{Minkowski spacetime}.

 The fact that radial light rays follow straight lines in Kruskal coordinates is quite useful in understanding the causal structure\ic{causal structure} of the spacetime, \ie understanding which events can effect which other events. However, the representation of the manifold such as the one in Figure~\ref{fig:kruskal-coor} extends indefinitely. To remedy the problem, \emph{Penrose diagrams}\ic{Penrose diagram} are quite useful. By a conformal transformation, the whole spacetime manifold is represented in a finite amount of paper space. In a \co{conformal transformation}, the light cone structure --hence the \co{causal structure} of spacetime-- is preserved. \co{Penrose diagram} for the Schwarzschild spacetime is drawn in Figure~\ref{fig:penrose-schwarzschild}.

\begin{figure}
  \centering
  \begin{tikzpicture}[scale = 2, thick] 
  \draw (0,0) -- (1,1) node [midway,above] {$\mathcal{I}^+$};
  \draw[decorate, decoration=zigzag] (1,1) -- (3,1) node[midway,above] {Singularity};
  \draw (3,1) -- (4,0) node[midway,above,anchor=south west] {$\mathcal{I}^+$};
  \draw (4,0) node[anchor=west] {$i^0$};
  \draw (0,0) node[anchor=east] {$i^0$};
  \draw (3,1) node[above] {$i^+$};
  \draw (1,1) node[above] {$i^+$};
  \draw (4,0) -- (3,-1) node[below] {$i^-$} node[midway,below,anchor=north west] {$\mathcal{I}^-$};
  \draw[decorate, decoration=zigzag] (3,-1) -- (1,-1) node[midway,below] {Singularity};
  \draw (1,-1) node[below] {$i^-$} -- (0,0) node [midway,below] {$\mathcal{I}^-$};
  \draw[dashed,thin] (1,1) -- (3,-1);
  \draw[dashed,thin] (1,-1) -- (3,1);
\end{tikzpicture}\vspace{1em}
  \caption{Penrose diagram for the extended Schwarzschild solution.}
  \label{fig:penrose-schwarzschild}
\end{figure}
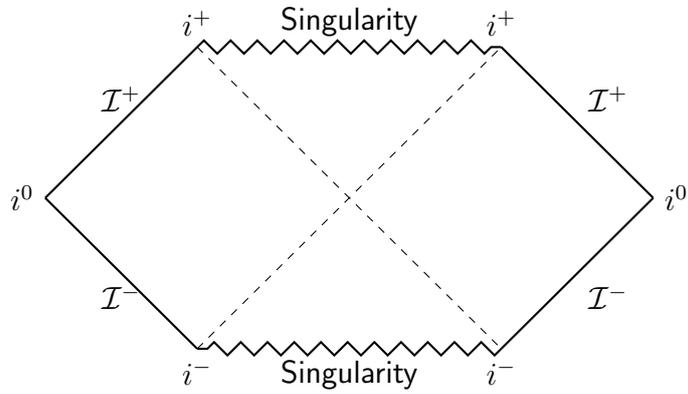

 There are three types of infinities seen in the figure: timelike, spacelike and lightlike. It would be useful to list them in a table:

 \begin{table}[h]
   \caption[Types of infinities present in a Penrose diagram.]{Types of infinities present in a \co{Penrose diagram}.}
   \vspace{1em}
   \centering
   \begin{tabular}{cll}\toprule
     Symbol    & Name               & Property                          \\ \midrule
     $i^0$     & Spacelike infinity & spacelike geodesics begin and end \\ \midrule
     $i^-$     & Timelike past      & timelike geodesics begin          \\ \midrule
     $i^+$     & Timelike future    & timelike geodesics end            \\ \midrule
     $\mc I^-$ & Lightlike past     & lightlike geodesics begin         \\ \midrule
     $\mc I^+$ & Lightlike future   & lightlike geodesics end           \\ \bottomrule
   \end{tabular}
   \label{tab:inf}
 \end{table}

The use of the word ``geodesic'' in Table~\ref{tab:inf} is important. For example, there are timelike paths that are asymptotically null and do not end in $i^+$. On the other hand, there are spacelike paths that may not end or begin at $i^0$. Figure~\ref{fig:pen-sch-asymp-null} illustrates a few examples.

\begin{figure}
  \centering
  \begin{tikzpicture}[scale = 2, thick] 
  \draw (0,0) -- (1,1) node [midway,above] {$\mathcal{I}^+$};
  \draw[decorate, decoration=zigzag] (1,1) -- (3,1) node[midway,above] {Singularity};
  \draw (3,1) -- (4,0) node[midway,above,anchor=south west] {$\mathcal{I}^+$};
  \draw (4,0) node[anchor=west] {$i^0$};
  \draw (0,0) node[anchor=east] {$i^0$};
  \draw (3,1) node[above] {$i^+$};
  \draw (1,1) node[above] {$i^+$};
  \draw (4,0) -- (3,-1) node[below] {$i^-$} node[midway,below,anchor=north west] {$\mathcal{I}^-$};
  \draw[decorate, decoration=zigzag] (3,-1) -- (1,-1) node[midway,below] {Singularity};
  \draw (1,-1) node[below] {$i^-$} -- (0,0) node [midway,below] {$\mathcal{I}^-$};
  \draw[dashed,thin] (1,1) -- (3,-1);
  \draw[dashed,thin] (1,-1) -- (3,1);
  \draw (3,-1) .. controls (3,0) .. (3.5,0.5);
  \draw (0.6,0.6) .. controls (1.2,0) and (2.8,0) .. (3.4,0.6);
\end{tikzpicture}\vspace{1em}
  \caption{The timelike path that begins at $i^-$, however does not end at $i^+$. It is asymptotically null and reaches $\mc I^+$ instead. The spacelike path begins at $\mc I^+$ of one side and ends at $\mc I^+$ of another region.}
  \label{fig:pen-sch-asymp-null}
\end{figure}
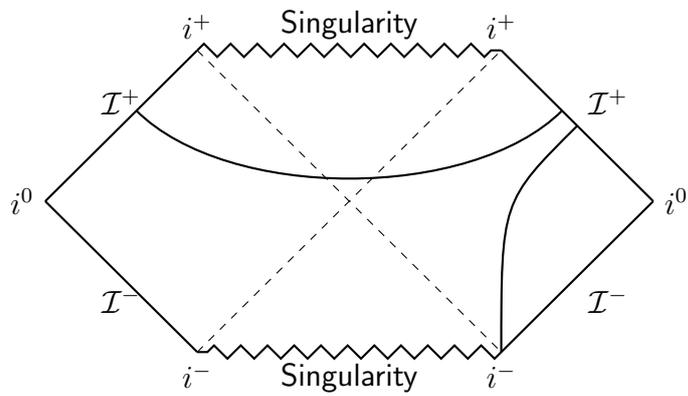

The geometry of spacetime, hence its causal structure, is determined by general relativity, which we discussed so far. The geometric quantity $R_{\mu\nu} - \frac{1}{2} R g_{\mu\nu}$ is on one side of the Einstein equation. The other side consists of the energy momentum tensor. Matter, on the other hand, is governed by quantum laws. In lack of a proper theory of quantum gravity, one way to approximately reconcile the geometry of spacetime with the quantum nature of matter is to generalize the quantum field theory to curved spacetimes. This is our next subject.

\section{Quantum Field Theory in Curved Spacetimes}
\label{sec:qft-in-curved-st}

In flat spacetime, predictions of quantum field theory (QFT) are in excellent agreement with observations. Building on this experience, one may want to generalize QFT to arbitrary, curved, spacetimes.

One way of achieving this goal, is to replace the flat metric ($\eta_{\mu \nu}$) with an appropriate metric ($g_{\mu \nu}$) and partial derivatives ($\partial_\mu$) with covariant derivatives ($\nabla_\mu$).

For example, in Minkowski spacetime, the Lagrangian for a massive real scalar field is as follows (in Cartesian coordinates):

\begin{equation}
  \label{eq:3}
  \mc L = \frac 1 2 \left( \partial^\mu \phi \partial_\mu \phi + m^2 \phi^2 \right).
\end{equation}

Please note that because we use the mostly positive metric signature, we have $m^2$ instead of $-m^2$ which is usually used in books on QFT because they adopt the mostly negative metric signature convention.

In order to carry this Lagrangian into curved spacetimes, we map the metric and derivative operations accordingly. Moreover we may add \cite{parker-toms} a coupling with the scalar curvature of the form $\xi R \phi^2$:

\begin{equation}
  \label{eq:4}
  \mc L = \frac 1 2 \left( \nabla^\mu \phi \nabla_\mu \phi + m^2 \phi^2 + \xi R \phi^2 \right).
\end{equation}

In $n$ spacetime dimensions, $\xi=0$ and $\xi = (n-2)/4(n-1)$ correspond to minimal and conformal couplings \cite{parker-toms} respectively. The Euler-Lagrange equation for $\phi$ is easy to derive, it reads:

\begin{equation}
  \label{eq:5}
  \left( \nabla^2 - m^2 - \xi R \right) \phi = 0.
\end{equation}

In this section, for the sake of simplicity, we will be interested in the minimally coupled massless real scalar fields. Hence the Lagrangian and Euler-Lagrange equation we are interested in are:

\begin{align}
  \label{eq:6}
  \mc L = \frac 1 2 \nabla^\mu \phi \nabla_\mu \phi , && \nabla^2 \phi = 0.
\end{align}

In Minkowski spacetime\ic{Minkowski spacetime}, one can define a QFT vacuum that all the inertial observers agree upon. On the other hand, the covariant formalism of QFT in curved spacetimes can be applied to Minkowski spacetime described by non-Cartesian coordinates. One may use a set of coordinates that are appropriate to, for example, accelerated observers. It is then seen that the QFT vacuum that inertial observers agree upon is not the appropriate vacuum state and accelerated observers detect particles. In the next section we discuss quantitative basis of similar phenomena.

\subsection{Particle Creation}
\label{sec:particle-creation}

Let us begin with an example. For example, in the case of gravitational collapse\ic{gravitational collapse} we may suppose that the initial matter density is so low that the spacetime is almost flat. After the implosion, the spacetime is that of Schwarzschild. Both of these are static spacetimes in themselves, whereas the whole process of gravitational collapse does not describe a static solution. We consider the portion of spacetime before the creation of a black hole as the ``in-region'' and the resulting portion of spacetime after the presence of the black hole as the ``out-region.''

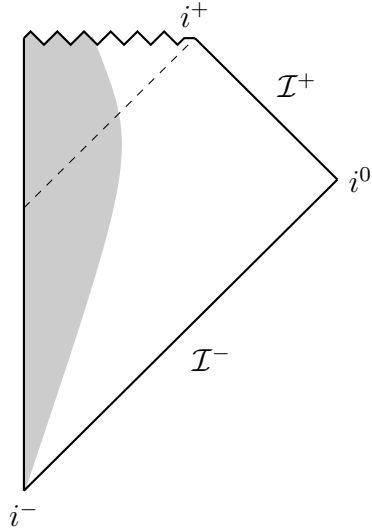
\begin{figure}
  \centering
  \begin{tikzpicture}[scale = 3, thick]
  \draw [decorate, decoration={lineto, pre=zigzag,pre length=0.9cm}, fill=black!20, color=black!20] (0,2) -- (0.3,2) -- (0,0);
  \draw [fill=black!20, color=black!20] (0.3,2) .. controls (0.5,1.5) .. (0,0);
  \draw (0,0) -- (0,2);
  \draw[decorate, decoration=zigzag] (0,2) -- (0.75,2) node [anchor=south] {$i^+$};
  \draw (0.75,2) -- (1.375,1.375) node [midway, anchor=south west] {$\mathcal{I}^+$} node [anchor=west] {$i^0$};
  \draw (1.375,1.375) -- (0,0) node [midway, anchor=north west] {$\mathcal{I}^-$} node [anchor=north] {$i^-$};
  \draw[thin, dashed] (0,1.25) -- (0.75,2);
\end{tikzpicture}\vspace{1em}
  \caption[Penrose diagram describing a gravitational collapse. Event horizon is indicated with a dashed line, whereas the collapsing matter is found in the grey region.]{\co{Penrose diagram} describing a \co{gravitational collapse}. Event horizon is indicated with a dashed line, whereas the collapsing matter is found in the grey region.}
  \label{fig:penrose-collapse}
\end{figure}

Of course the reason behind these names is the same as in QFT: we think of an initial stationary region and some interactions that occur afterwards. Later on, we obtain a final stationary region of spacetime. Not all spacetimes are of this form, however we will be interested in this type of spacetimes because of the presence of particle interpretation of the theory in stationary regions. There will be one exception, when discussing the Hawking radiation, the final region --the Cauchy surface-- we choose is the union of event horizon and lightlike future. There is a timelike Killing vector in $\mc I^+$; however no such vector exists on the event horizon. Therefore the particle interpretation is ambiguous on the event horizon. However this state of affairs will not be an obstacle, because we are mainly interested in the radiation emitted by the black hole that will reach the lightlike future.

If $\xi$ is a timelike Killing vector, meaning that $\xi \cdot \xi < 0$ and $\nabla_{\mu}\xi_\nu + \nabla_\nu \xi_\mu = 0$ hold, a solution of the field equation that satisfies $\mc L_{\xi} f = -i\omega f$, where $\omega > 0$ and $\mc L_{\xi}$ indicates the Lie derivative\ic{Lie derivative} with respect to $\xi$, is called a mode of positive frequency\ic{positive frequency!definition}. If we use $\xi$ as a timelike coordinate, then this condition is equivalent to $\partial_\xi f = -i\omega f$. For more on the quality of being a positive frequency solution, readers are referred to \cite{Fabbri2005, waldQFT-BH}.

Suppose we have two sets of positive frequency solutions: $\{\fout_k\}_k$ in the in-region and $\{\fin_k\}_k$ in the out-region\footnote{We find the notation adopted in \cite{Fabbri2005} convenient.}. Moreover we require these solutions to be normalized according to the $U(1)$-inner product\ic{U(1)@$U(1)$-inner product} to be described shortly as follows: $(f_k,f_l) = \delta(k-l)$, $(f_k,f_l^*) = 0$. We deal with the case where $k$ is a continuous variable. When it takes discrete values, the situation is similar: Dirac delta distributions are mapped to Kronecker delta symbols, integrals below are replaced with sums and so on.

We expand the field solution $\phi$ in each region, in and out, respectively.

\begin{align}
  \phi &= \int dk \left( \fout_k \aout_k + \fouts_k \aoutd_k \right), \label{eq:phi-out}\\
       &= \int dk \left( \fin_k \ain_k + \fins_k \aind_k \right). \label{eq:phi-in}
\end{align}

The recipe of \co{second quantization} is that the coefficients $\ain,\aout$ of $\fin,\fout$ are to be regarded as operators.

Because the sets $\{\fout_k\}_k \cup \{\fouts_k\}_k$ and $\{\fin_k\}_k \cup \{\fins_k\}_k$ are complete we can express an element of each set in terms a superposition of functions in the other set. The completeness property allows one to express $\aout$ in terms of $\ain$ and $\aind$, similarly $\ain$ in terms of $\aout$ and $\aoutd$.

The definition of the $U(1)$-inner product\ic{U(1)@$U(1)$-inner product} is quite important in this regard. It is an inner product between two field solutions and is defined as follows\footnote{For a detailed derivation, the reader may consult to Appendix~\ref{cha:derivation-u1-inner}.}:

\begin{equation}
  \label{eq:13}
  (\phi,\psi) = i \int_\Sigma d^nx \abs g^{1/2} n_\mu (\psi \nabla^\mu \phi^* - \phi^* \nabla^\mu \psi).
\end{equation}

For example, by looking at the expansion of the field $\phi$ in terms of out-modes \eqref{eq:phi-out} we see that $\aout_k = (\fout_k,\phi)$ is satisfied. However, we can also use the in-mode expansion \eqref{eq:phi-in} of the field. This will yield:

\begin{equation}
  \label{eq:22}
  \aout_k  = \int dl \left[ (\fout_k,\fin_l) \ain_l + (\fout_k,\fins_l) \aind_l \right].
\end{equation}

If we know the results of inner products between in and out mode functions, we would know the expression of $\aout$ in terms of $\ain$ and $\aind$. This information would give us, for example, the number of particles seen by an out-observer that the in-vacuum had. The results of the following inner products are called \emph{Bogoliubov coefficients}\ic{Bogoliubov coefficients}, $\alpha_{kl}$ and $\beta_{kl}$ defined as follows:

\begin{align}
  \alpha_{kl} = (\fin_l,\fout_k), && \beta_{kl} = -(\fins_l,\fout_k).
\end{align}

For instance, we would like to expand $\fout$ interms of $\fin$ and $\fins$. This is linear algebra. The result would be:

\begin{align}
  \label{eq:23}
  \fout_k &= \int dl \left[ (\fin_l,\fout_k) \fin_l - (\fins_l,\fout_k) \fins_l \right],\\
    &= \int dl (\alpha_{kl} \fin_l + \beta_{kl} \fins_l).
\end{align}

By taking the inner product of both sides with $\fin_{l'}$ or $\fins_{l'}$ one may verify the expansion. By expressing the inner products $(\fout_k,\fout_l) = \delta(k-l)$ and $(\fout_k,\fouts_l) = 0$ in the in-basis, one may obtain the following identities:

\begin{align}
  \int dq (\alpha^*_{kq} \alpha_{lq} - \beta^*_{kq}\beta_{lq}) &= \delta(k-l), \label{eq:bog-delta}\\
  \int dq (\alpha_{kq}\beta_{lq} - \beta_{kq}\alpha_{lq}) &= 0.\label{eq:bog-0}
\end{align}

Let us return to our aim of expressing $\aout$ in terms of in-operators. Using the above-defined Bogoliubov coefficients, we can rewrite \eqref{eq:22} in a more concise notation:

\begin{equation}
  \label{eq:24}
  \aout_k = \int dl ( \alpha_{kl}^* \ain_l - \beta_{kl}^* \aind_l ).
\end{equation}

On the other hand, with two sets of annihilation/creation operators, we define two vacua, \emph{in-vacuum}\ic{vacuum!in!definition} and \emph{out-vacuum}\ic{vacuum!out!definition} as follows:

\begin{align}
  \forall k, \quad \ain_k \invac = 0; && \forall k, \quad \aout_k \outvac = 0.
\end{align}

We are now in a position to ask how many particles there are in any given out-mode in the \emph{in-vacuum}\ic{vacuum!in}. The answer depends on the Bogoliubov coefficient $\beta_{kl}$. Expectation value of an out-mode number operator in the in-vacuum needs to be calculated. By using expression (\ref{eq:24}) for $\aout_k$, one gets $\aout_k \invac = - \int dl\; \beta_{kl}^* \aind_l \invac$. Hence one finds:

\begin{equation}
  \bra{\text{in}} \aoutd_k \aout_k \invac = \int dl\; \abs{\beta_{kl}}^2,
\end{equation}

where the commutation relation $[\ain_k,\aind_l] = \delta(k-l)$ has been used. If $\forall k,l; \beta_{kl} = 0$, by (\ref{eq:bog-delta}), we see that $\alpha_{kl}$ is a unitary transformation: positive frequency solutions $\{\fout_k\}_k$ and $\{\fin_k\}_k$ are related by a unitary transformation. On the contrary, if this is the case, naturally $\forall k,l; \beta_{kl} = 0$.

We shall give two examples where particle creation occurs. The first one is called the Unruh radiation \cite{UnruhRadiationNotesOnBH}. It concerns uniformly accelerated observers on flat background geometry. The second example is the celebrated Hawking radiation \cite{hawking1975}. It is about the radiation emitted by a black hole.

\subsection{Unruh Radiation}\ic{Unruh Radiation|textbf}
\label{sec:unruh-radiation}

Here we suppose the spacetime is $1+1$ dimensional, because it makes the illustration of the concept much more convenient. The field equation in the massless case is $\nabla^2 \phi = 0$. If we write this explicitly we obtain $g^{\mu\nu} \partial_\mu \partial_\nu \phi - g^{\mu\nu} \Gamma^\lambda_{\mu\nu} \partial_\lambda \phi = 0$ where $\Gamma^\lambda_{\mu\nu}$ is the Christoffel symbol\ic{Christoffel symbol}\ic{Christoffel symbol}. We solve the field equation first in Cartesian coordinates and then solve it in, what is called, the \emph{Rindler coordinates}\ic{Rindler coordinates}.

In Cartesian coordinate system, all the Christoffel symbols $\Gamma_{\mu\nu}^\lambda$ vanish. Moreover since the metric tensor is $g_{\mu\nu} = \diag(-1,+1)$, the inverse metric turns out to be numerically equal to the metric: $g^{\mu\nu} = \diag(-1,+1)$. All in all, the field equation is found as follows:

\begin{equation}
  \label{eq:12}
  (-\partial_t^2 + \partial_x^2) \phi = 0.
\end{equation}

Positive frequency\ic{positive frequency} solutions are $\exp(-i\omega t + ikx)$ where $\omega = \abs k$ and $k \in \reals^*$. We regard these solutions as $\fin_k$, after normalizing them. Normalized $\fin_k$ solutions are as follows:

\begin{equation}
  \label{eq:25}
  \fin_k = (4\pi \omega)^{-1/2} e^{-i\omega t + ikx}.
\end{equation}

The portion of spacetime covered by the Rindler coordinates\ic{Rindler coordinates} lies in the whole Minkowski spacetime that is covered by the usual Cartesian coordinates\ic{Cartesian coordinates}, therefore the use of in/out terminology is not as it is in the sense of describing two different regions of spacetime that are stationary. It is, however, in terms of a \emph{re}-interpretation of the quantum field at hand by different observers. The Rindler coordinates\ic{Rindler coordinates!definition}\footnote{The coordinates for which we use the name \emph{Rindler}, $\eta,\xi$, are related to the original coordinates introduced \cite{Rindler1966} by Wolfgang Rindler\ip{Rindler, Wolfgang}, $T,X$ through $T = a\eta$ and $X = \exp(a\xi)/a$.} are defined as follows:

\begin{align}
  t = \frac 1 a e^{a\xi} \sinh(a\eta), && x = \frac 1 a e^{a\xi} \cosh(a\eta),
\end{align}

where both $\eta$ and $\xi$ range from $-\infty$ to $\infty$ and $a$ is a positive parameter. These coordinates, however, only cover a \emph{quadrant} of the Minkowski spacetime. We could equally write the above definition with $t,x$ replaced by $-t,-x$. This would correspond to another quadrant of the spacetime. If we denote the former coordinates as $t_R,x_R$ and the latter ones as $t_L,x_L$, the proper definition that handles both cases will be as follows:

\begin{align}
  t_L &= -\frac{1}{a} e^{a\xi} \sinh(a\eta) & x_L &= -\frac{1}{a} e^{a\xi} \cosh(a\eta),\\
  t_R &= \frac 1a e^{a\xi} \sinh(a\eta) & x_R &= \frac 1a e^{a\xi} \cosh(a\eta).
\end{align}

Curves that are parameterized by $\eta$, on which $\xi$ is constant, correspond to worldlines, which are hyperbolae, that describe \emph{uniformly accelerated}\ic{uniform acceleration} motion: $a_\mu a^\mu$ is constant on each of these worldlines where $a^\mu$ is the acceleration four-vector. Requiring the magnitude of acceleration vector to be constant is the correct restriction to describe uniform acceleration\ic{uniform acceleration!definition}, it reproduces the correct trajectory in the Newtonian limit.

On the other hand, constant $\eta$ surfaces on which $\xi$ varies, are lines that are described by $t/x = \tanh(a\eta)$. Figure~\ref{fig:rindler} illustrates constant $\eta$ or $\xi$ surfaces.

\begin{figure}
  \centering
  \begin{tikzpicture}[scale = 2] 
  \draw[dashed] (-2,-2) -- (2, 2);
  \draw[dashed] (-2, 2) -- (2,-2);
  \draw[->] (-2,0) -- (2,0) node [anchor=west] {$x$};
  \draw[->] (0,-2) -- (0,2) node [anchor=south] {$t$};
  \foreach \a in {-0.8,-0.6,...,0.8}
      \draw[thin, domain=-2:2] plot (\x, {\a*\x});
  
  \draw[thin, domain=-2.99322284613:2.99322284613] plot ({0.2*cosh(\x)}, {0.2*sinh(\x)});
  \draw[thin, domain=-2.29243166956:2.29243166956] plot ({0.4*cosh(\x)}, {0.4*sinh(\x)});
  \draw[thin, domain=-1.87382024253:1.87382024253] plot ({0.6*cosh(\x)}, {0.6*sinh(\x)});
  \draw[thin, domain=-1.56679923697:1.56679923697] plot ({0.8*cosh(\x)}, {0.8*sinh(\x)});
  \draw[thin, domain=-1.31695789692:1.31695789692] plot ({1.0*cosh(\x)}, {1.0*sinh(\x)});
  \draw[thin, domain=-1.09861228867:1.09861228867] plot ({1.2*cosh(\x)}, {1.2*sinh(\x)});
  \draw[thin, domain=-0.89558809952:0.89558809952] plot ({1.4*cosh(\x)}, {1.4*sinh(\x)});
  \draw[thin, domain=-0.69314718056:0.69314718056] plot ({1.6*cosh(\x)}, {1.6*sinh(\x)});
  \draw[thin, domain=-0.46714530810:0.46714530810] plot ({1.8*cosh(\x)}, {1.8*sinh(\x)});

  \draw[thin, domain=-2.99322284613:2.99322284613] plot ({-0.2*cosh(\x)}, {-0.2*sinh(\x)});
  \draw[thin, domain=-2.29243166956:2.29243166956] plot ({-0.4*cosh(\x)}, {-0.4*sinh(\x)});
  \draw[thin, domain=-1.87382024253:1.87382024253] plot ({-0.6*cosh(\x)}, {-0.6*sinh(\x)});
  \draw[thin, domain=-1.56679923697:1.56679923697] plot ({-0.8*cosh(\x)}, {-0.8*sinh(\x)});
  \draw[thin, domain=-1.31695789692:1.31695789692] plot ({-1.0*cosh(\x)}, {-1.0*sinh(\x)});
  \draw[thin, domain=-1.09861228867:1.09861228867] plot ({-1.2*cosh(\x)}, {-1.2*sinh(\x)});
  \draw[thin, domain=-0.89558809952:0.89558809952] plot ({-1.4*cosh(\x)}, {-1.4*sinh(\x)});
  \draw[thin, domain=-0.69314718056:0.69314718056] plot ({-1.6*cosh(\x)}, {-1.6*sinh(\x)});
  \draw[thin, domain=-0.46714530810:0.46714530810] plot ({-1.8*cosh(\x)}, {-1.8*sinh(\x)});
\end{tikzpicture}\vspace{1em}
  \caption{An illustration of Rindler wedges. Constant $\eta$ surfaces are straight lines passing through the origin, whereas constant $\xi$ surfaces are hyperbolae that correspond to worldlines that describe uniformly accelerated motion.}
  \label{fig:rindler}
\end{figure}
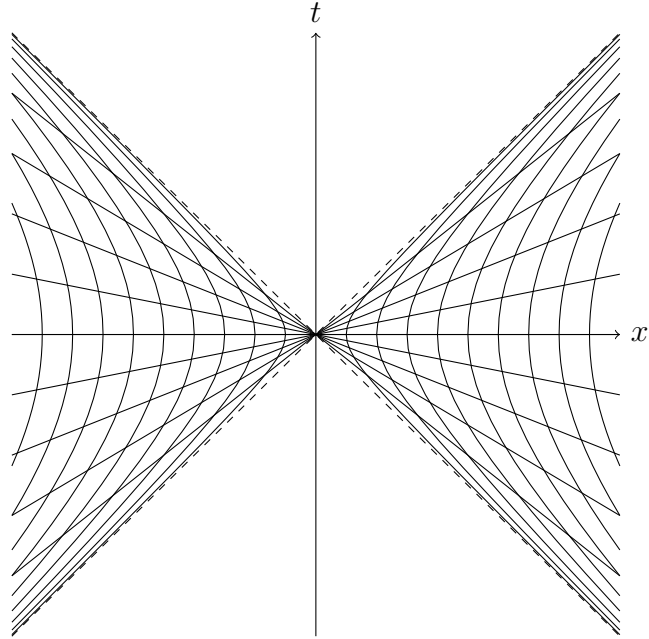

In Rindler coordinates\ic{Rindler coordinates}, the metric is found to be:

\begin{equation}
  \label{eq:26}
  ds^2 = e^{2a\xi} (-d\eta^2 + d\xi^2).
\end{equation}

The non-vanishing Christoffel symbols\ic{Christoffel symbol!in Rindler metric} are given as:

\begin{equation}
  \label{eq:27}
  \Gamma_{\xi\xi}^\xi = \Gamma_{\eta\eta}^\xi = \Gamma_{\eta\xi}^\eta = a.
\end{equation}

These are needed in the expansion of the field equation:

\begin{equation}
  \nabla^2 \phi = g^{\mu\nu} \partial_\mu \partial_\nu \phi - g^{\mu\nu} \Gamma^\lambda_{\mu\nu} \partial_\lambda \phi = 0.  
\end{equation}

In this case the part that contains the Christoffel symbol\ic{Christoffel symbol} vanishes. The equation becomes $e^{-2a\xi}(-\partial_\eta^2+\partial_\xi^2)\phi=0$. Because the exponential factor in front never vanishes, we get:

\begin{equation}
  \label{eq:28}
  (-\partial_\eta^2 + \partial_\xi^2)\phi = 0.
\end{equation}

The positive frequency solutions are easy to find. However, there is a subtlety. In Minkowski spacetime the timelike Killing vector we choose to define positive frequency is $\partial_t$. In the right Rindler wedge, $\partial_\eta$ points in the same direction. So, solutions $e^{-i\omega \eta + ik\xi}$ are positive frequency in this wedge. However, in the left Rindler wedge, $\partial_t$ and $\partial_\eta$ points in \emph{opposite} directions. Therefore, we choose $-\partial_\eta$ as timelike Killing vector in the left Rindler wedge. For that reason, positive frequency solutions in this wedge are $e^{i\omega \eta + ik\xi}$.

The $U(1)$-inner product\ic{U(1)@$U(1)$-inner product!in Rindler coordinates} in left or right Rindler wedges are given as follows:

\begin{equation}
  \label{eq:29}
  (\phi,\psi)_{R,L} = \mp i\int_{\text{const. }\eta} d\xi (\psi \partial_\eta \phi^* - \phi^* \partial_\eta \psi),
\end{equation}

where `$-$' sign is for the right Rindler wedge, whereas the `$+$' is for the left Rindler wedge. When normalized, positive frequency solutions in left or right Rindler wedges are found to be:

\begin{align}
  f^L_k &= (4\pi\omega)^{-1/2} e^{i\omega\eta + ik\xi},\\
  f^R_k &= (4\pi\omega)^{-1/2} e^{-i\omega\eta + ik\xi}.
\end{align}

Our aim is to calculate the number of particles in each mode that an accelerated observer will see in the Minkowski vacuum. We could directly start calculating the Bogoliubov coefficients, however there is a more elegant way due to William Unruh\ip{Unruh, William G.} \cite{UnruhRadiationNotesOnBH}. The approach is to construct solutions out of solutions in each Rindler wedge, that is analytic in the Minkowski spacetime. Reference \cite{BirrellDavies} gives the following two positive frequency solutions, which we normalize and define as $f_k^{(1)}, f_k^{(2)}$:

\begin{align}
  f_k^{(1)} &= \frac{e^{\pi \omega / 2a} f_k^R + e^{-\pi\omega / 2a} f_{-k}^{L,*}}{[2 \sinh(\pi\omega/a)]^{1/2}}, & f_k^{(2)} &= \frac{e^{-\pi\omega/2a} f_{-k}^{R,*} + e^{\pi\omega/2a} f_k^L}{[2 \sinh(\pi\omega/a)]^{1/2}}.\label{eq:40}
\end{align}

We can expand the field in these modes as follows:

\begin{equation}
  \phi = \int dk \; (f_k^{(1)} a_k^{(1)} + f_k^{(1),*} a_k^{(1),\dagger} + f_k^{(2)} a_k^{(2)} + f_k^{(2),*} a_k^{(2),\dagger}),\label{eq:38}
\end{equation}

and operators $a_k^{(1)},a_k^{(2)}$ annihilate Minkowski vacuum state, $\ket M$.

On the other hand, the field can be expanded in mode functions in left and right Rindler wedges:

\begin{equation}\label{eq:39}
  \phi = \int dk \; (f_k^L a_k^{L} + f_k^{L,*} a_k^{L,\dagger} + f_k^R a_k^{R} + f_k^{R,*} a_k^{R,\dagger}).
\end{equation}

The merit and elegance of this approach is that, by expanding and reordering the terms in the integrand appearing in (\ref{eq:38}), one can easily find expression for $a_k^L,a_k^R$ in terms of $a_k^{(1)},a_k^{(2)},a_k^{(1),\dagger},a_k^{(2),\dagger}$. After all, this is what we are after.

Using the definitions (\ref{eq:40}) in (\ref{eq:38}) and comparing the result with equation (\ref{eq:39}) we find:

\begin{align}
  a_k^L &= \frac{e^{-\pi\omega/2a} a_{-k}^{(1),\dagger} + e^{\pi\omega/2a} a_k^{(2)}}{[2 \sinh(\pi\omega/a)]^{1/2}}, & a_k^R &= \frac{e^{\pi\omega/2a} a_{k}^{(1)} + e^{-\pi\omega/2a} a_{-k}^{(2),\dagger}}{[2 \sinh(\pi\omega/a)]^{1/2}}.\label{eq:41}
\end{align}

In order to determine the average occupation number of the mode $f_k^L$ that an accelerated observer sees in the Minkowski vacuum, we need to calculate the expectation value of the corresponding number operator: $\bra M a_k^{L,\dagger} a_k^L \ket M$. Using (\ref{eq:41}) we find $a_k^L \ket M = e^{-\pi\omega/2a} a_{-k}^{(1),\dagger} \ket M / [2 \sinh(\pi\omega/a)]^{1/2}$. Hence the expectation value of number operator becomes:

\begin{equation}
  \bra M a_k^{L,\dagger} a_k^L \ket M = \frac{\delta(0)}{e^{2\pi\omega/a}-1}.
\end{equation}

The result of $\bra M a_k^{R,\dagger} a_k^R \ket M$ is the same. The appearance of the delta function follows from the continuum normalization of mode functions. If we constructed wave packets out of $f_k^L, f_k^R$, we would have obtained $\delta_{-k,-k}$, instead of $\delta(k-k)$, which equals one. This is in the end a Planckian distribution\ic{Planck distribution} with temperature\footnote{Note that if we used units for which $\hbar,G$ and $c$ are not unity, we would get $T = \hbar a / 2\pi c$.} $T=a/2\pi$. It is called the \emph{Unruh temperature}\ic{Unruh temperature}. We have seen that Minkowski vacuum is a thermal state, and is not empty.

Although we will allocate more space to the construction of wave packets in the next section while discussing the Hawking radiation, let us briefly mention the method and see that we will get a nondivergent expectation value for the number operator.

We already know the relation between $f^L,f^R$ and $f^{(1)},f^{(2)}$. This information will be quite useful. By superposing modes of similar wave vector, we create wave packets as follows\footnote{As Hawking did in \cite{hawking1975}.}:

\begin{equation}
  g^{L,R}_{jn} \equiv \ve^{-1/2} \int_{j\ve}^{(j+1)\ve} dk \; e^{-i2\pi n k / \ve} f_k^{L,R};\quad j,n \in \mathbb Z,
\end{equation}

for some $\ve > 0$. Let us call the associated annihilation operators with these wave packets as $b_{jn}^{L,R}$. We can expand the field as follows:

\begin{equation}
  \phi = \sum_{jn} \left( g_{jn}^L b_{jn}^L + g_{jn}^R b_{jn}^R + g_{jn}^{L,*} b_{jn}^{L,\dagger} + g_{jn}^{R,*} b_{jn}^{R,\dagger} \right).
\end{equation}

On the other hand, these wave packets satisfy the discrete versions of normalization conditions, \ie $(g^{L,R}_{jn},g^{L,R}_{j'n'}) = \delta_{jj'}\delta_{nn'}$ and $(g^{L,R}_{jn},g^{L,R;*}_{j'n'}) = 0$ with any inner product between right and left modes vanishing. This property allows us to write:

\begin{equation}
  b_{jn}^{L,R} = (g^{L,R}_{jn}, \phi).
\end{equation}

Expressing the field as an integral as in (\ref{eq:38}) allows us to calculate the expectation value of the number operator $b_{jn}^{L,R;\dagger}b_{jn}^{L,R}$ in the in-vaccum easily. We provide the final result:

\begin{equation}
  \bra{\text{in}} b_{jn}^{L,R;\dagger}b_{jn}^{L,R} \invac = \ve^{-1} \int_{j\ve}^{(j+1)\ve} dk \; \frac{1}{e^{2\pi\omega/a} - 1}.
\end{equation}

We could take the integral exactly, but we are interested in the regime when $\ve \ll 1$ is satisfied. In any case, this integral equals $\ve$ times the integrand evaluated at some point in the interval $(j\ve,j\ve + \ve)$. Because $\ve \ll 1$ is satisfied, in the zeroth order we can choose this point as $j\ve$ and hence associate a frequency $\omega$ to the wave packet as $\omega = j\ve$. Hence we find, in the limit $\ve \to 0^+$:

\begin{equation}
  \bra{\text{in}} b_{jn}^{L,R;\dagger}b_{jn}^{L,R} \invac = \frac{1}{e^{2\pi\omega/a} - 1}.
\end{equation}

Here we have a result that is finite and has no delta function singularity.

\subsection{Hawking Radiation}\ic{Hawking radiation|textbf}
\label{sec:hawking-radiation}

The discussion of Hawking radiation in the book by Alessandro Fabbri\ip{Fabbri, Alessandro} and José Navarro-Salas\ip{Navarro-Salas, José} \cite{Fabbri2005} is quite good and lucid, we will mainly follow it in regard to its approach and notation.

We begin by considering the spacetime described by the Vaidya metric\ic{metric!Vaidya}:

\begin{equation}
  \label{eq:2}
  ds^2 = -\left( 1 - \frac{2 M(v)}{r}\right) dv^2 + 2 dv dr + r^2 d\Omega^2.
\end{equation}

When $M(v)$ is constant we see that this is the Schwarzschild metric in Eddington-Finkelstein coordinates. Consequently then, if $M=0$ is satisfied we recover the Minkowski metric when we make a change of coordinate from $v$ to $t$ through $v=t+r$.

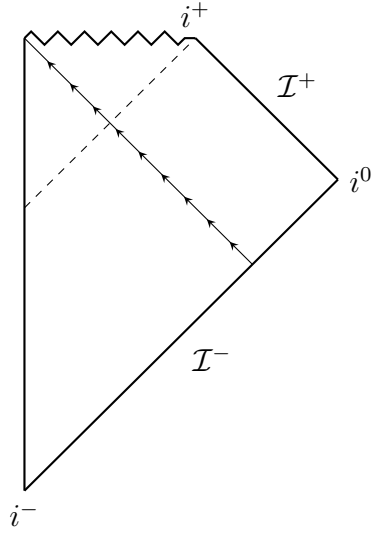
\begin{figure}
  \centering
  \begin{tikzpicture}[scale = 3, thick]
  \draw (0,0) -- (0,2);
  \draw[decorate, decoration=zigzag] (0,2) -- (0.75,2) node [anchor=south] {$i^+$};
  \draw (0.75,2) -- (1.375,1.375)  node [midway, anchor=south west] {$\mathcal{I}^+$} node [anchor=west] {$i^0$};
  \draw (1.375,1.375) -- (0,0) node [midway, anchor=north west] {$\mathcal{I}^-$} node [anchor=north] {$i^-$};
  \draw[thin, dashed] (0,1.25) -- (0.75,2);
  \draw[thin,postaction={decorate, decoration={markings, mark=between positions 0.1 and 1 step 0.1 with {\arrow{stealth}}}}] (1,1) -- (0,2);
\end{tikzpicture}\vspace{1em}
  \caption{Penrose diagram that describes the collapse of a null shell. The event horizon is shown as a dashed line.}
  \label{fig:penrose-vaidya}
\end{figure}

We consider the implosion of a spherically symmetric null shell as described in Figure~\ref{fig:penrose-vaidya}. The points in spacetime in whose future the implosion event lies constitute the flat portion of the whole spacetime, whereas the ones that are in the future of the implosion consitute the curved (Schwazrschild) portion.

Both in flat and curved portions of the spacetime we use spherical coordinates. In each region the metric reads:

\begin{equation}
  ds^2 = 
  \begin{cases}
     -dt^2 + dr^2 + r^2 d\Omega^2, & \mbox{(flat)}\\
     -(1-\frac{2M}{r}) dt^2 + (1-\frac{2M}{r})^{-1} dr^2 + r^2d\Omega^2 .& \mbox{(curved)}
  \end{cases}
\end{equation}

Because of the spherical symmetry present in each region, we can decompose the field solutions into spherical harmonics:

\begin{equation}
  \frac{h_l(t,r)}{r} Y_{l,m}(\theta,\phi).
\end{equation}

This combination satisfies the field equation $\nabla^2 \phi = 0$. In the in-region, we name $h_l$'s as $h^{\text{in}}_l$ and in the out-region as $h^{\text{out}}_l$. These two functions satisfy respectively the following equations:

\begin{align}
  \left[ -\partial_t^2 + \partial_r^2 - \frac{l(l+1)}{r^2}\right] h^{\text{in}}_l &= 0,\label{eq:49}\\
  \left[ -\partial_t^2 + \partial_{r^*}^2 - V_l(r)\right] h^{\text{out}}_l &= 0,\label{eq:50}
\end{align}

where $r^*$ is known as the \emph{tortoise coordinate}\ic{tortoise coordinates} that is defined through $r^* \equiv r + 2M \ln (r/2M-1)$ and the potential term is:

\begin{equation}
  V_l(r) = \left( 1 - \frac{2M}{r} \right) \left[ \frac{l(l+1)}{r^2} + \frac{2M}{r^3} \right].
\end{equation}

Using the two given metrics, it is an easy though somewhat lengthy process to derive these.

Since $\mc I^-$ is the lightlike past, there can only be incoming waves present in this region. Similarly, there can only be outgoing wave solutions that reach $\mc I^+$. Because these regions are approached as $r \to \infty$, we can easily solve the equations asymptotically and obtain positive frequency solutions in $\mc I^-$ and $\mc I^+$. As done in \cite{Fabbri2005}, we will first neglect the centrifugal and potential terms in the equations above. The importance of these terms is to yield the graybody factors. After all, there is only one spacetime and one field equation on it. When we trace back the evolution of a positive frequency outgoing mode in $\mc I^+$ backwards, the solution will be effected by the presence of these terms and as it reaches $\mc I^-$ it will be a superposition of negative and positive frequency incoming modes. In the end, these terms determine the graybody factors by determining which modes at $\mc I^-$ are present in which amount. We will handle the graybody factors later on.

Outgoing positive frequency solution near $\mc I^+$ is $e^{-i\omega \uout} Y_{l,m}(\theta,\phi) / r$, and the incoming one near $\mc I^-$ is $e^{-i\omega \vin} Y_{l,m}(\theta,\phi) / r$. For notational consistency we defined $\uout = t-r^*, \vout = t+r^*$ and $\uin = t-r, \vin = t+r$. The coordinates $t,r$ in each case refer to the ones in the corresponding metric: there is no reason that they are the same, but we will match the two metrics along the null infalling shell. This last step is required in determining the form of $\fout$ around $\mc I^-$.

The above mentioned solutions are unnormalized. In order to normalize them, we use the $U(1)$-inner product\ic{U(1)@$U(1)$-inner product} which are as follows for each region:

\begin{align}
  (\phi,\psi)_{\mc I^+} &= -i \int d\Omega du\;  r^2 (\psi \partial_u \phi^* - \phi^* \partial_u \psi), \\
  (\phi,\psi)_{\mc I^-} &= -i \int d\Omega dv\; r^2 (\psi \partial_v \phi^* - \phi^* \partial_v \psi).\label{eq:42}
\end{align}

The integrand in the first equation also involves a factor of $\left(1-\frac{2M}{r}\right)^{-1}$ when it is derived for constant $\vout$ hypersurfaces, however because $\mc I^+$ is reached as $v\to \infty$, this factor approaches $1$. We also divided each inner product by two, because it is unnecessary to take into account these constant factors.

Using these inner products, one can see that the normalized positive frequency solutions are:

\begin{align}
  \fin_\omega &= (4\pi \omega)^{-1/2} \frac{e^{-i\omega \vin}}{r} Y_{lm}(\theta,\varphi), & \fout_\omega &= (4\pi\omega)^{-1/2} \frac{e^{-i\omega \uout}}{r} Y_{lm} (\theta,\varphi).
\end{align}

Because whether the wave is ingoing or outgoing is already determined by the use of $\vin$ or $\uout$, we have chosen $\omega$ as a subscript instead of the usual $k$.

In the end our aim is to calculate the Bogoliubov coefficients, which includes evaluation of inner products. We choose the hypersurface to evaluate these quantities as lightlike past $\mc I^-$. However, we know the form of $\fout_k$ only near $\mc I^+$; we need to be able to evolve it backward in time to know its form near $\mc I^-$. This process, on the other hand, requires passing from out-region to in-region. For that purpose, we should match the two metrics along the imploding null shell.

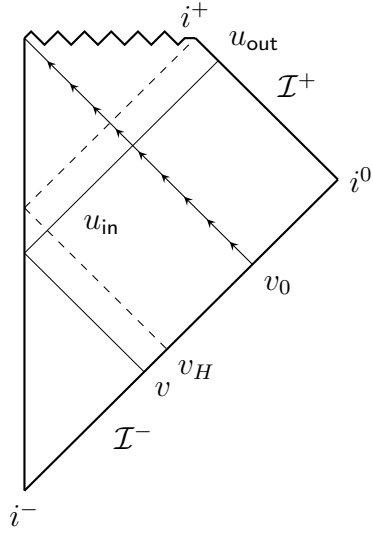
\begin{figure}
  \centering
  \begin{tikzpicture}[scale = 3, thick]
  \draw (0,0) -- (0,2);
  \draw[decorate, decoration=zigzag] (0,2) -- (0.75,2) node [anchor=south] {$i^+$};
  \draw (0.75,2) -- (1.375,1.375)  node [midway, anchor=south west] {$\mathcal{I}^+$} node [anchor=west] {$i^0$};
  \draw (1.375,1.375) -- (0,0) node [near end, anchor=north west] {$\mathcal{I}^-$} node [anchor=north] {$i^-$};
  \draw[thin, dashed] (0,1.25) -- (0.75,2);
  \draw[thin,postaction={decorate, decoration={markings, mark=between positions 0.1 and 1 step 0.1 with {\arrow{stealth}}}}] (1,1) -- (0,2);
  \draw (1,1) node [anchor=north west] {$v_0$};
  \draw[thin,dashed] (0,1.25) -- (0.625,0.625) node [anchor=north west] {$v_H$};
  \draw[very thin] (0.85,1.9) node [anchor=south west] {$\uout$} -- node [near end, below, anchor=north west] {$\uin$} (0,1.05) -- (0.525,0.525) node [anchor=north west] {$v$};;
\end{tikzpicture}\vspace{1em}
  \caption{Gravitational collapse scenario via radially imploding null shell.}
  \label{fig:vaidya-more}
\end{figure}

We basically require $r$ coordinate to be the same on $v = v_0$, that is $r_{\text{in}}(v_0,\uin) = r_{\text{out}}(v_0,\uout)$. Please see Figure~\ref{fig:vaidya-more}. In the in-region we have $r_{\text{in}} = (\vin-\uin)/2$ and in the out-region $r^* = r_{\text{out}} + 2M \ln(r_{\text{out}}/2M -1) =(\vout-\uout)/2$. Evaluating the last equality at $v = v_0$ and using $r_{\text{in}}(v_0,\uin)$ in place of $r_{\text{out}}(v_0,\uout)$ we obtain the following relation:

\begin{equation}\label{eq:uout-intermsof-uin}
  \uout = \uin - 4M \ln\left( \frac{v_0-\uin}{4M} - 1 \right).
\end{equation}

We would like to evolve an outgoing mode near $\mc I^+$ backward in time. It is $e^{-i\omega \uout}/r$ to begin with\footnote{We suppress the spherical harmonic part and the normalization constant.}. When crossing $v=v_0$ we should use (\ref{eq:uout-intermsof-uin}) in order to find the value of $\uin$ that matches to the given $\uout$. Now the wave has the form $\exp(-i\omega \uout(\uin)) / r$. We can trace this mode back to $r=0$ where it turns into an ingoing mode coming from $\mc I^-$. Because $\uin = t+r$ and $\vin = t-r$, at the origin these two coordinates have equal values. We should interchange $\uin$ in $\uout(\uin)$ with $v$ that has the same value as $\uin$. Now near $\mc I^-$, we have the solution $\exp(-i\omega \uout(v))/r$. There is one subtlety to consider. For $v > v_H$, as can be seen from Figure~\ref{fig:vaidya-more}, incoming modes do not reach $\mc I^+$ but instead are trapped behind the horizon where in the end they hit the singularity. For that purpose we need to multiply what we have found by an appropriate Heaviside step function\ic{Heaviside step function} in order to indicate that the support of these modes at $\mc I^-$ is $v \in (-\infty,v_H)$. Because there should be no confusion, we use the letter `$\theta$' to denote this function. The factor we need to multiply the solution with is $\theta(v_H-v)$. Finally, we see that when traced backwards, an outgoing mode $\exp(-i\omega\uout)/r$ happens to be $\exp(-i\omega\uout(v))\theta(v_H-v)/r$.

There is another subtlety, on the other hand, and is not vitally important to find the form of the mode near $\mc I^-$. It is about the relative phase between the incoming and the resulting outgoing mode that occurs after the wave hits $r=0$ in the flat region. We trace back the evolution of the following outgoing mode:

\begin{equation}
  \fout_\omega = (4\pi\omega)^{-1/2} \frac{e^{-i\omega\uout}}{r} Y_{lm}.
\end{equation}

When passing the imploding null shell, we expect the wave to be continuous, for that purpose the matching wave in the in-region is:

\begin{equation}
  (4\pi\omega)^{-1/2} \frac{e^{-i\omega\uout(\uin)}}{r} Y_{lm}.\label{eq:10}
\end{equation}

We know on the other hand, when traced back to $\mc I^-$, what is obtained should be proportional to the following:

\begin{equation}
  (4\pi\omega)^{-1/2} \frac{e^{-i\omega\uout(v)}}{r} Y_{lm}.\label{eq:11}
\end{equation}

In the end the whole solution in the in-region should be what is in (\ref{eq:10}) plus a constant times (\ref{eq:11}). Because there is no source at the origin, this constant should be one and we obtain the whole solution in the in-region as follows:

\begin{equation}
  (4\pi\omega)^{-1/2} \left( \frac{e^{-i\omega\uout(\uin)}}{r} - \frac{e^{-i\omega\uout(v)}}{r} \theta(v_H-v) \right) Y_{lm}.
\end{equation}

Therefore, we conclude that when traced backwards the outgoing mode $\fout_\omega$ is found to be the following near $\mc I^-$:

\begin{equation}
  -(4\pi\omega)^{-1/2} \frac{e^{-i\omega\uout(v)}}{r} \theta(v_H-v) Y_{lm}.
\end{equation}

After this point, we need to calculate the Bogoliubov coefficients between the in-modes and the out-modes. Since the angular integration of the inner product (\ref{eq:42}) is over $S^2$ with no weight and the field modes are proportional to spherical harmonics which are orthonormal, we suppress angular indices such as $l,m$ in Bogoliubov coefficients and simply write $\alpha_{\omega\omega'}, \beta_{\omega\omega'}$. Because nonvanishing Bogoliubov coefficients must have the same angular indices on both sides. For a derivation of the coefficients, the reader is referred to Appendix~\ref{chap:bog-coef-Hawking}. The result is:

\begin{align}
  \alpha_{\omega\omega'} &= -i\sigma \left( \frac{\omega' M}{\pi} \right)^{1/2} \frac{e^{\sigma 2\pi \omega M}}{\sinh^{1/2}(4\pi \omega M)} \frac{e^{i(\omega' - \omega)v_H}}{(4M)^{i4\omega M} \abs{\omega' - \omega}^{1+i4\omega M}},\\
  \beta_{\omega\omega'} &= i \left( \frac{\omega' M}{\pi} \right)^{1/2} \frac{e^{- 2\pi \omega M}}{\sinh^{1/2}(4\pi \omega M)} \frac{e^{-i(\omega' + \omega)v_H}}{(4M)^{i4\omega M} (\omega' + \omega)^{1+i4\omega M}},
\end{align}

where $\sigma = \sgn(\omega'-\omega)$. If we used these Bogoliubov coefficients, the number of particles emitted from the black hole an observer would detect near $\mc I^+$ would diverge. For this reason, similar to what Hawking\ip{Hawking, Stephen W.} did in his seminal article \cite{hawking1975} we shall construct wave packets out of out-modes. We define a wave packet $g_{jn}$, not to be confused with the metric, as follows:

\begin{equation}
  g_{jn} \equiv \ve^{-1/2} \int_{j\ve}^{(j+1)\ve} d\omega \; e^{-i2\pi n \omega/\ve} \fout_\omega \quad ; j \in \mathbb{Z}^{\geq 0}, n \in \mathbb{Z}.
\end{equation}

It is an easy exercise to show that $(g_{jn},g_{j'n'}) = \delta_{jj'}\delta_{nn'}$ and $(g_{jn},g^*_{j'n'}) = 0$. We can expand $g_{jn}$ in terms of in-modes and read out the new semi-discrete Bogoliubov coefficients:

\begin{equation}
  g_{jn} = \int d\omega \; (\alpha_{jn\omega} \fin_\omega + \beta_{jn\omega} \fins_\omega).
\end{equation}

Similar to what we did before, we can express the annihilation operator $b_{jn}$ related to these wave packets in terms of $\ain_\omega$ and $\aind_\omega$:

\begin{equation}
  b_{jn} = \int d\omega \; (\alpha_{jn\omega}^* \ain_\omega - \beta_{jn\omega}^* \aind_\omega).
\end{equation}

It then follows that the expectation value of the number operator $b_{jn}^\dagger b_{jn}$ in the in-vacuum is:

\begin{equation}
  \bra{\text{in}} b_{jn}^\dagger b_{jn} \invac = \int d\omega' \; \abs{\beta_{jn\omega'}}^2. \label{eq:44}
\end{equation}

Because $j\ve \sim \omega$ when $\ve \ll 1$, we have chosen the integration variable as $\omega'$.

Expressing the result of the inner-product between these wave packets, one can obtain relations between these coefficients. For example, the analog of (\ref{eq:bog-delta}) reads as:

\begin{equation}
  \int d\omega' \; (\alpha^*_{jn\omega'}\alpha_{j'n'\omega'} - \beta^*_{jn\omega'}\beta_{j'n'\omega'}) = \delta_{jj'} \delta_{nn'}.
\end{equation}

For $j=j', n=n'$ we obtain:

\begin{equation}
  \int d\omega' \; (\abs{\alpha_{jn\omega'}}^2 - \abs{\beta_{jn\omega'}}^2) = 1.\label{eq:45}
\end{equation}

Finding a relation between $\abs{\alpha_{jn\omega'}}$ and $\abs{\beta_{jn\omega'}}$ we can use (\ref{eq:45}) to calculate (\ref{eq:44}). Rest of the Appendix~\ref{chap:bog-coef-Hawking} focuses on semi-discrete Bogoliubov coefficients.

These are found to be as follows:

\begin{align}
  \alpha_{jn\omega'} &= -i\sigma h(\omega';\ve) \int_{j\ve}^{(j+1)\ve} \negmedspace\negmedspace d\omega \;
     \frac{e^{\sigma 2\pi \omega M}}{\sinh^{1/2}(4\pi\omega M)}
     \frac{e^{-i(v_H + 2\pi n /\ve)\omega}}{(4M)^{i4\omega M} \abs{\omega' - \omega}^{1+i4\omega M}},\label{eq:46}\\
   \beta_{jn\omega'} &= i h^*(\omega';\ve) \int_{j\ve}^{(j+1)\ve} \negmedspace\negmedspace d\omega \; 
     \frac{e^{- 2\pi \omega M}}{\sinh^{1/2}(4\pi\omega M)}
     \frac{e^{-i(v_H + 2\pi n /\ve)\omega}}{(4M)^{i4\omega M} (\omega' + \omega)^{1+i4\omega M}},\label{eq:47}
\end{align}

where $h(\omega';\ve) = e^{i\omega' v_H} (\omega' M / \pi \ve)^{1/2}$.

Now, we shall suppose $\ve \ll 1$ and $M \ve \gg 1$ to hold at the same time. This is reasonable. After all we are interested in masses $M$ that is well over the Planck mass\ic{Planck mass} ($m_{\text{Planck}} = (\hbar c / G)^{1/2} = 2.18\times 10^{-8}$~kg) where classical spacetime picture, general relativity, holds. For example, solar mass equals $1.99 \times 10^{30}$~kg \cite{solarMassNasa}. In Planck units\ic{Planck units}, it equals $9.14\times 10^{37}$. So one has pretty much freedom in satisfying $\ve \ll 1$ and $M \ve \gg 1$ at the same time.

Notice that the integration over $\fout_\omega$ to construct wave packets $g_{jn}$ is taken from $\omega = j\ve$ to $\omega = j\ve + \ve$. Since $\ve \ll 1$, we can say that the frequency, $\omega$, of the wave packet $g_{jn}$ is about $j\ve$: so $\omega \approx j\ve$.

We focus on the Hawking radiation observed at late times. Since an outgoing mode at late times when traced back over time is \emph{highly} blue shifted (see equation (\ref{eq:11})) we expect that almost all of the contribution to the integral $\int d\omega' \; \abs{\beta_{jn\omega'}}^2$ comes from the region where $\omega' \gg \omega$.

Now, if $M\ve \gg 1$, the exponentials of real-valued arguments appearing in (\ref{eq:46}) and (\ref{eq:47}) are the dominant terms that determine the result of the integral. On the other hand for $\omega' \gg \omega$, one has $\omega'-\omega \approx \omega'+\omega$. Therefore, in this regime, one can conclude that $\abs{\alpha_{jn\omega'} / \beta_{jn\omega'}} = \exp(4\pi\omega M)$, where we wrote $\omega$ in place of $j\ve$. Also remember that $\sigma = 1$.

On the contrary, if $\omega' \ll \omega$, one may say that $\abs{\alpha_{jn\omega'} / \beta_{jn\omega'}} = 1$. Let us separate the integral (\ref{eq:45}) in three parts:

\begin{equation}
  \left(\int_0^{\ll \omega} + \int_{\sim \omega} + \int_{\gg \omega}^\infty \right) d\omega' \; (\abs{\alpha_{jn\omega'}}^2 - \abs{\beta_{jn\omega'}}^2) = 1.
\end{equation}

Because $\abs{\alpha_{jn\omega'} / \beta_{jn\omega'}} = 1$ in the first part, there is no contribution coming from this part of the integral. We neglect the middle part where $\omega' \sim \omega$ because at late times we expect backtraced waves to be high frequency and the resulting Bogoliubov coefficients between high and low frequency modes to be negligible. It is this part we neglect when considering the late time radiation.

In the last part, because $\abs{\alpha_{jn\omega'} / \beta_{jn\omega'}} = \exp(4\pi\omega M)$ holds we can write:

\begin{equation}
  \int_{\gg \omega}^\infty d\omega' \; (e^{8\pi\omega M} - 1) \abs{\beta_{jn\omega'}}^2 = 1\quad\text{(at late times)}.
\end{equation}

Finally, we obtain the desired result for the celebrated Hawking radiation:

\begin{equation}
  \bra{\text{in}} b_{jn}^\dagger b_{jn} \invac = \frac{1}{e^{8\pi \omega M} - 1}.
\end{equation}

When we compare this equation with the Planckian distribution $1/(e^{\hbar \omega / kT} - 1)$ (in the text we set $\hbar = k = 1$) we see that Hawking radiation is at a temperature of $T = 1/8\pi M$, which is known as the \emph{Hawking temperature}\ic{Hawking temperature} of a black hole.

Finally, we would like to comment on the graybody factors. In order to fully account for the Hawking radiation, we should not have omitted the centrifugal and potential terms in equations (\ref{eq:49}) and (\ref{eq:50}). This would in turn reduce the amplitude of the wave that reaches as Hawking radiation to $\mc I^+$.  This is a scattering problem: some portion of the wave returns to event horizon whereas the remaining portion is observed as black hole vapor. For that purpose, we should map the outgoing wave packets $g_{jn}$ to $g'_{jn} = \text{(phase)} \times \Gamma_{jn}^{1/2} g_{jn}$ for some $\Gamma_{jn} > 0$. The coefficient $\Gamma_{jn}$ should not depend on any other parameter, apart from angular $l,m$ that we have suppressed so far, as Bogoliubov coefficients do. Because this is a scattering problem and in providing the indices `$j,n$' we give all the information about the wave. Therefore, while the backtraced mode near $\mc I^-$ will be normalized, the modes that reach $\mc I^+$ will give the following inner product $(g'_{jn}, g'_{j'n'}) = \Gamma_{jn} \delta_{jj'} \delta_{nn'}$. Therefore the relation (\ref{eq:45}) would be modified as:

\begin{equation}
  \int d\omega' \; (\abs{\alpha'_{jn\omega'}}^2 - \abs{\beta'_{jn\omega'}}^2) = \Gamma_{jn},
\end{equation}

where $\alpha'_{jn\omega'} = (\fin_{\omega'}, g'_{jn})$ and $\beta_{jn\omega'} = -(\fins_{\omega'}, g'_{jn})$ are the new Bogoliubov coefficients. Because the change with respective to earlier Bogoliubov coefficients is only in magnitude, the ratio between them will remain intact. Hence the spectrum we would obtain for particles that reach $\mc I^+$ is:

\begin{equation}
  \bra{\text{in}} b_{jnlm}^\dagger b_{jnlm} \invac = \frac{\Gamma_{jnlm}}{e^{8\pi \omega M} - 1},
\end{equation}

where we have written angular dependency explicitly.

Because the expansion of the field is the same as before, therefore is the relation of $b_{jn}$ to $\ain_{\omega'}$ and $\aind_{\omega'}$.

For more discussion on quantum field theory in curved spacetimes, readers may like to consult \cite{mukhanov, parker-toms, BirrellDavies}.

\section{Quantum Entanglement}\ic{entanglement|textbf}
\label{sec:qit}

Our main \emph{guide} in this section will be the lecture notes of John Preskill\ip{Preskill, John} \cite{preskill-ln}. Therefore we do not refer to this source everytime we use it. When other sources are used, we shall make explicit references.

Entanglement is a phenomenon that is not found in classical physics and is uniquely of quantum nature. When the Hilbert space ($\mathcal{H}$) describing the quantum system of interest is bipartite, \ie $\mathcal{H} = \mathcal{H}_A \otimes \mathcal{H}_B$ for some $\mathcal{H}_A,\mathcal{H}_B$, entangled states are present in $\mathcal{H}$. An entangled state\ic{state!entangled} is one that cannot be written as a product state\ic{state!product}\footnote{A product state $\ket \psi \in \mathcal H$ is a state that can be written as $\ket \psi = \ket a \otimes \ket b$ for some $\ket a \in \mathcal H_A,\ket b \in \mathcal H_B$.}.

As it is the case with almost any concept, there are \emph{degrees} of entanglement. For that purpose, we turn to Schmidt decomposition\ic{Schmidt!decomposition} of quantum states.

We consider a bipartite Hilbert space: $\mc H = \mc H_A \otimes \mc H_B$ and a state $\ket \psi$ in it. Because it is in $\mc H$, we can express it as follows:

\begin{align}
  \ket \psi &= \sum_{ij} a_{ij} \ket i \otimes \ked j ,\\
\intertext{where $\{\ket i\}_i,\{\ked j\}_j$ are two orthonormal bases in $\mc H_A, \mc H_B$. Let us define new vectors in $\mc H_B$ as $\ked{i'} \equiv \sum_j a_{ij}\ked j$, where some $\ked{i'}$ vectors may be zero depending on $a_{ij}$. We write $\ket \psi$ as:}
\ket \psi &= \sum_i \ket i \otimes \ked{i'}.
\end{align}

On the other hand, the system as a whole has the density matrix $\rho = \ket \psi \bra \psi$. By performing a partial trace on part $B$, one can obtain a reduced density matrix\footnote{Explicitly, $\rho_A = \tr_B \rho = \sum_i \brd i \rho \ked i$. Notice that whereas $\rho$ acts on the states in the whole Hilbert space, $\rho_A$ acts only on the ones that are in $\mc H_A$. This is easy to see. One can expand $\rho$ as $\rho = \sum_{ijkl} \rho_{ijkl} \ket i \bra j \otimes \ked k \brd l$ and once the trace over the part $B$ is performed, what is obtained is again a density matrix but this time the one that acts only on $\mc H_A$.} $\rho_A$ that can be used to calculate the expectation values of experiments performed only on part $A$, disregarding the existence of part $B$. Because density matrices are Hermitian, they can be diagonalized and $\rho_A$ is no exception. We now choose $\{\ket i\}_i$ as the basis in which $\rho_A$ is diagonal, that is:

\begin{equation}
  \label{eq:8}
  \rho_A = \sum_i p_i \ket i \bra i .
\end{equation}

What we are going to do is to calculate $\rho_A$ by performing the partial trace of $\rho$ over part $B$ explicitly.

\begin{align}
  \rho_A &= \tr_B \rho,\\
         &= \sum_i \brd i \rho \ked i ,\\
         &= \sum_i \brdket{i | \psi} \braked{\psi | i} ,\\
  \intertext{Using $\ket \psi = \sum_j \ket j \otimes \ked{j'}$, we continue}
         &= \sum_{ijk} \brdked{i | k'} \ket k \bra j \brdked{j' | i} ,\\
         &= \sum_{jk} \brdked{j' | k'} \ket k \bra j .
\end{align}

However we know that $\rho_A$ is diagonal in the basis we have chosen for $\mc H_A$. Therefore $\brdked{j' | k'}$ must be proportional to Kronecker delta and moreover it must yield the correct eigenvalue of $\rho_A$. In other words, $\brdked{j' | k'} = \delta_{jk} p_{j}$ must be satisfied where the summation convention is not used.

We see that the vectors $\ked{i'}$ turn out to be orthogonal after all, albeit not necessarily orthonormal\footnote{If they are orthonormal, that means some $p_{i'}$ is equal to one. However, the condition that the eigenvalues (which are nonnegative) of any density matrix sum up to one requires all other eigenvalues to be vanishing in this case. Hence there happens to be only one \emph{nonzero} $\ked{i'}$ vector.}. We would like to rewrite these vectors as its norm times a unit vector. For that purpose we are in need of notation for these new vectors. It would be nice to use $\ket{i'}$ to denote these new vectors in $\mc H_B$. First it is a clean notation, second it somehow hints at the dependence of $\ket{i'}$ on the particular basis, that diagonalizes $\rho_A$, chosen in $\mc H_A$.

In the end, we have shown that we can write any vector $\ket \psi$ in a bipartite Hilbert space in the following form known as the \emph{Schmidt form}\ic{Schmidt!form}:

\begin{equation}
  \label{eq:9}
  \ket \psi = \sum_i p_i^{1/2} \ket i \otimes \ket {i'}.
\end{equation}

The number of nonzero eigenvalues of $\rho_A$, \ie nonzero $p_i$ values, is called the \emph{Schmidt number}\ic{Schmidt!number} of $\ket \psi$. A product state has the Schmidt number one, of course. Therefore, another way of saying whether the state is entangled or not is to look if its Schmidt number\ic{Schmidt!number} is greater than one. If it is, the state is entangled.

We have classified some states in a bipartite Hilbert state as entangled and mentioned that there are degrees of entanglement. Here we shall mention a particular type of entanglement: \emph{maximal entanglement}\ic{entanglement!maximal}. We quote \cite{Susskind2012} as regards its meaning:

\begin{quote}
  The meaning of maximal entanglement is that for every observable in
  $A$, one can predict the result of measuring it by measuring the
  corresponding observable in $B$.
\end{quote}

A state is called maximally entangled\ic{entanglement!maximal!definition} if and only if the reduced density matrix $\rho_A$ is proportional to identity operator in $\mc H_A$ \cite{Susskind2012}. We provide a mathematically precise interpretation of the quote above, in the following theorem.

\begin{thm}
  A bipartite Hilbert space $\mc H = \mc H_A \otimes \mc H_B$ is considered where $N \equiv \dim \mc H_A = \dim \mc H_B$. Then, a state $\ket \psi \in \mc H$ is maximally entangled if and only if it can be written as $\ket \psi = \sum_i \alpha_i \ket{a_i} \otimes \ket{A_i}$ for some $\alpha_i \in \comps$ where $\ket{a_i}$'s are orthonormal eigenvectors of any chosen Hermitian operator $\mc O$ and $\ket{A_i}$'s are eigenvectors of $U^\dagger \mc O' U$ which is similar to $\mc O$ and $U$ is a unitary matrix.
\end{thm}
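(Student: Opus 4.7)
The plan is to reduce the statement to a direct consequence of the Schmidt decomposition \eqref{eq:9} derived above, using the spectral theorem to translate freely between ``eigenbasis of a Hermitian operator'' and ``orthonormal basis of $\mc H_A$''. Morally, the claim says that a maximally entangled state admits a Schmidt-type expansion with respect to \emph{every} orthonormal basis of $\mc H_A$, the partner vectors on $\mc H_B$ automatically forming an orthonormal set and therefore assembleable into a Hermitian operator with the same spectrum as the originally chosen $\mc O$.

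For the ``only if'' direction I would start from the Schmidt form, which for a maximally entangled state reduces to $\ket \psi = N^{-1/2}\sum_i \ket i \otimes \ket{i'}$, since the condition $\rho_A \propto \one$ forces all Schmidt weights $p_i$ to equal $1/N$. Given an arbitrary Hermitian $\mc O$ on $\mc H_A$ with orthonormal eigenbasis $\{\ket{a_i}\}$ and real eigenvalues $\{\lambda_i\}$, I introduce the unitary $V$ defined by $V\ket i = \ket{a_i}$, invert to obtain $\ket i = \sum_j V^*_{ji}\ket{a_j}$, substitute into the Schmidt expansion and reorder the double sum. This yields $\ket \psi = N^{-1/2}\sum_j \ket{a_j}\otimes \ket{A_j}$ with $\ket{A_j} \equiv \sum_i V^*_{ji}\ket{i'}$, and orthonormality of $\{\ket{A_j}\}$ follows at once from the orthonormality of $\{\ket{i'}\}$ together with the unitarity of $V$. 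Setting $\alpha_j = N^{-1/2}$, I then define $\mc O_B \equiv \sum_j \lambda_j \ket{A_j}\bra{A_j}$, which by construction has the eigenbasis $\{\ket{A_j}\}$ and precisely the spectrum of $\mc O$; picking any reference Hermitian $\mc O'$ on $\mc H_B$ isospectral to $\mc O$ together with a unitary $U$ carrying its eigenbasis onto $\{\ket{A_j}\}$ lets me express $\mc O_B = U^\dagger \mc O' U$, which is then similar to $\mc O$ in the required sense.

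For the converse, the key observation is that whenever $\{\ket{a_i}\}$ and $\{\ket{A_i}\}$ are both orthonormal, the expression $\ket \psi = \sum_i \alpha_i \ket{a_i}\otimes \ket{A_i}$ \emph{is} a Schmidt decomposition of $\ket \psi$, so tracing over $\mc H_B$ gives $\rho_A = \sum_i |\alpha_i|^2 \ket{a_i}\bra{a_i}$. I would apply the hypothesis to two Hermitian operators on $\mc H_A$ whose eigenbases are not related by a mere permutation; since $\rho_A$ is then diagonal in two genuinely distinct orthonormal bases, it must be proportional to the identity in $\mc H_A$, which is precisely the definition of maximal entanglement given in the text. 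One could equivalently appeal to the uniqueness of the Schmidt spectrum: if every orthonormal basis of $\mc H_A$ serves as a Schmidt basis, the Schmidt coefficients must all coincide, so $|\alpha_i|^2 = 1/N$.

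The main obstacle I anticipate is interpretational rather than computational: the slightly opaque phrase ``eigenvectors of $U^\dagger \mc O' U$ which is similar to $\mc O$'' must be unpacked into the concrete content that $\{\ket{A_i}\}$ is an orthonormal basis coming from a Hermitian operator sharing the spectrum of $\mc O$. Once that translation is made, the algebraic heart of the theorem is a one-line change of basis performed inside the Schmidt decomposition. A minor bookkeeping point is degeneracy of $\mc O$: the eigenbasis $\{\ket{a_i}\}$ is then non-unique, but the argument uses only its orthonormality and so is unaffected, provided one commits to a specific choice before defining the partner vectors $\ket{A_j}$.
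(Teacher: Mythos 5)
Your forward direction is essentially the paper's: both arguments start from the Schmidt form with all weights equal to $1/N$, observe that $\rho_A \propto \one$ makes \emph{every} orthonormal basis of $\mc H_A$ --- in particular the eigenbasis of any chosen $\mc O$ --- a Schmidt basis, and then package the orthonormal partner vectors $\ket{A_i}$ into a Hermitian operator isospectral to $\mc O$. Your converse, however, takes a genuinely different and in one respect tidier route. The paper works in two stages: it first rules out $\alpha_i = 0$ by exhibiting an operator $\mc T$ whose eigenbasis mixes $\ket{a_1}$ and $\ket{a_2}$, and then rules out unequal moduli by re-expanding $\ket\psi$ in the $\ket{\pm}$ basis and showing that the induced map $V$ on $\mc H_B$ cannot be unitary unless $\abs{\alpha_1}=\abs{\alpha_2}$. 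You instead read the hypothesis as the statement that $\rho_A$ is diagonal in the eigenbasis of every Hermitian $\mc O$, and conclude $\rho_A\propto\one$ in one stroke; this subsumes the paper's separate $\alpha_i=0$ case, since a vanishing coefficient is just one more unequal diagonal entry.

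There is, however, one step that fails as written: the claim that diagonality of $\rho_A$ in \emph{two} genuinely distinct orthonormal bases forces $\rho_A\propto\one$ is false. The operator $\diag(1/3,1/3,1/6,1/6)$ is diagonal in $\{e_1,e_2,e_3,e_4\}$ and also in the basis obtained by rotating within each degenerate eigenspace, and the two bases need share no vector; yet the operator is not scalar. Two bases never suffice when $\rho_A$ is degenerate but not scalar. What you actually need --- and what the universal quantifier over $\mc O$ supplies --- is diagonality in the rotated bases obtained by replacing $\ket{a_i},\ket{a_j}$ with $(\ket{a_i}\pm\ket{a_j})/\sqrt 2$ for each pair $i\neq j$: the off-diagonal matrix element of $\rho_A$ in such a basis equals $(\abs{\alpha_i}^2-\abs{\alpha_j}^2)/2$, and its vanishing for all pairs flattens the spectrum. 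This pairwise rotation is exactly the content of the paper's $\ket{\pm}$ construction, so once repaired your converse and the paper's converge on the same computation; your alternative formulation ("every orthonormal basis is a Schmidt basis, hence every unit vector is an eigenvector of $\rho_A$, hence $\rho_A\propto\one$") is the correct statement of the fix. Finally, the issue you flag about orthonormality of $\{\ket{A_i}\}$ does need to be settled rather than assumed: eigenvectors of an operator similar to $\mc O$ are guaranteed mutually orthogonal only when the eigenvalues are distinct, and the paper secures this by choosing $\mc O$ nondegenerate before reading the expansion as a Schmidt decomposition; your converse requires the same restriction.
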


\begin{proof}
  $(\Rightarrow)$. First, we suppose that $\ket \psi$ is maximally entangled. In the Schmidt decomposition, we used a basis for $\mc H_A$ that diagonalizes the reduced density matrix $\rho_A$. Because $\rho_A$ is proportional to identity, every orthonormal basis satisfies this condition. For any Hermitian operator $\mc O$, as an orthonormal basis, we can choose its eigenvectors $\ket{a_i}$. If there are repeated eigenvalues, it should be understood that eigenvectors of the same eigenvalue are made orthonormal to each other through the Gram-Schmidt process. After the Schmidt decomposition, we can write $\ket \psi$ as follows:

  \begin{equation}
    \ket \psi = N^{-1/2} \sum_i \ket{a_i} \otimes \ket{A_i}.
  \end{equation}

  The point is that $\ket{A_i}$'s are orthonormal as well. Because of this fact, there exists a unitary operator, which is of course invertible, $U$ such that $\ket{a_i} = U \ket{A_i}$. It then follows that if $\ket{a_i}$ is an eigenvector of $\mc O$ with eigenvalue $\lambda_i$, $\ket{A_i}$ is an eigenvector of $\mc O' = U^\dagger \mc O U$ with the same eigenvalue. First part of the proof is complete.

  $(\Leftarrow)$. Second, we suppose that for every Hermitian operator $\mc O$ we can write $\ket \psi$ in the following form:

  \begin{equation}
    \ket \psi = \sum_i \alpha_i \ket{a_i} \otimes \ket{A_i},\label{eq:37}
  \end{equation}

where $\ket{a_i}$'s are orthonormal eigenvectors of $\mc O$ and $\ket{A_i}$'s are those of a similar operator $\mc O'$. We would like to show that $\forall i, \alpha_i \neq 0$.

Suppose $\exists i, \alpha_i = 0$, and renumerate the indices such that $\alpha_1 = 0, \alpha_2 \neq 0$. Instead of $\mc O$, consider the following operator:

\begin{multline}
  \mc T = \frac \lambda 2 (\ket{a_1} + \ket{a_2})(\bra{a_1} + \bra{a_2}) + \frac{3\lambda}{2} (\ket{a_1} - \ket{a_2})(\bra{a_1} - \bra{a_2})\\ + \sum_{i=3}^N \lambda_i \ket{a_i}\bra{a_i},
\end{multline}

where $\lambda_i$ is the eigenvalue of $\mc O$ with eigenvector $\ket{a_i}$ and $\lambda$ is some nonzero real number. In the expansion (\ref{eq:37}) of $\ket \psi$, $\alpha_1$ is already zero, it does not matter if $\ket{a_1}$ is not an eigenvector of $\mc T$. However, even though all $\ket{a_i}$ for $i>2$ are eigenvectors of $\mc T$; $\ket{a_2}$ is not one of its eigenvectors. We have obtained a contradiction. Therefore, $\forall i, \alpha_i \neq 0$ must be true.

When we show all of $\alpha_i$'s have equal magnitude, the proof will have been completed. Our approach will be similar. We consider an operator $\mc O$ and expand $\ket \psi$ in the same way:

\begin{equation}
  \ket \psi = \sum_i \alpha_i \ket{a_i} \otimes \ket{A_i}, \label{eq:56}
\end{equation}

however this time we know that $\forall i, \alpha_i \neq 0$. Let us choose $\mc O$ that has distinct eigenvalues. We have already required $\ket{a_i}$'s to constitute an orthonormal set, however because $\mc O$ has distinct eigenvalues, $\mc O'$ that is similar to it does have distinct eigenvalues as well. Eigenvalues are independent of the basis chosen. We conclude that $\ket{A_i}$'s are orthonormal to each other.

We suppose there are two $\alpha_i$ values whose moduli are not equal to each other. After renumerating the indices in the expansion of $\ket \psi$, we say these are $\alpha_1$ and $\alpha_2$.

Instead of the operator $\mc O$, we could have chosen $\tilde{\mc O}$ that has eigenvectors $\ket +$, $\ket -$, $\ket{a_3}, \ldots$ . First of these two are given by the following definitions:

\begin{align}
  \ket + &= \frac{\ket{a_1} + \ket{a_2}}{\sqrt 2} & \ket - &= \frac{\ket{a_1} - \ket{a_2}}{\sqrt 2}.
\end{align}

Because $\ket{a_1}$ and $\ket{a_2}$ are unitarily related to $\ket{+}$ and $\ket{-}$, the operator $\tilde{\mc O}$ is similar to $\mc O$. When expanded in the eigenvectors of $\tilde{\mc O}$ and some similar operator which we call $\tilde{\mc O}'$, the state $\ket \psi$ can be written as follows:

\begin{align}
  \ket \psi = \beta_1 \ket + \otimes V \ket{A_1} + \beta_2 \ket - \otimes \ket{A_2} + \sum_{i=3}^N \beta_i \ket{a_i} \otimes V \ket{A_i},\label{eq:57}
\end{align}

where $V$ is a some unitary operator that relates $\{\ket{A_i}\}_i$ to eigenvectors of $\tilde{\mc O}'$. The two expansions, (\ref{eq:56}) and (\ref{eq:57}), must be equal to each other. It is easily seen that $V$ should leave $\ket{A_i}$ for $i>2$ intact. When we write $\ket{a_1}$ and $\ket{a_2}$ in terms of $\ket{+}$ and $\ket{-}$ in equation (\ref{eq:56}), we obtain the following:

\begin{align}
  \beta_1 \ket{+} \otimes V\ket{A_1} &= \left( \frac{\abs{\alpha_1}^2 + \abs{\alpha_2}^2}{2} \right)^{1/2} \ket{+} \otimes \frac{\alpha_1 \ket{A_1} + \alpha_2 \ket{A_2}}{(\abs{\alpha_1}^2 + \abs{\alpha_2}^2)^{1/2}},\\
  \beta_2 \ket{-} \otimes V\ket{A_2} &= \left( \frac{\abs{\alpha_1}^2 + \abs{\alpha_2}^2}{2} \right)^{1/2} \ket{+} \otimes \frac{\alpha_1 \ket{A_1} - \alpha_2 \ket{A_2}}{(\abs{\alpha_1}^2 + \abs{\alpha_2}^2)^{1/2}}.
\end{align}

We define $\ket P, \ket M$ through $\ket P = V \ket{A_1}, \ket M = V \ket{A_2}$. The above equations give us $\ket P, \ket M$ in terms of $\ket{A_1},\ket{A_2}$:

\begin{align}
  \ket P = \frac{\alpha_1 \ket{A_1} + \alpha_2 \ket{A_2}}{(\abs{\alpha_1}^2 + \abs{\alpha_2}^2)^{1/2}}, \quad \ket M = \frac{\alpha_1 \ket{A_1} - \alpha_2 \ket{A_2}}{(\abs{\alpha_1}^2 + \abs{\alpha_2}^2)^{1/2}}.
\end{align}

The operator $V$ can be expressed in Dirac notation as: $V = \ket P \bra{A_1} + \ket M \bra{A_2} + \sum_{i=3}^N \ket{A_i}\bra{A_i}$. Since $V$ is unitary, $V^\dagger V = \one$ must hold. Working in the $\{\ket{A_i}\}_i$ basis, unitarity of $V$ is seen to imply $\abs{\alpha_1}^2 - \abs{\alpha_2}^2 = 0$. We obtain a contradiction, because we assumed $\alpha_1$ and $\alpha_2$ to be of different magnitudes. Therefore, $\forall i,j; \abs{\alpha_i} = \abs{\alpha_j}$. So $\ket \psi$ is a maximally entangled state. This completes the proof.
\end{proof}

In order to get acquainted with the concept let us look at a simple example concerning two qubits\ic{qubit}. We remind the reader that a \emph{qubit}\ic{qubit!definition} is a two level system. It can be considered as the spin degrees of freedom of a spin-1/2 particle, but not necessarily so.

The quantum state defined as $\ket \psi = 2^{-1/2} (\ket + \otimes \ket + + \ket - \otimes \ket -)$ is maximally entangled\ic{entanglement!maximal!example}. We perform a spin-$z$ measurement. Then, if we observe the second particle in $\pm z$ spin state, the first one will be in this state as well. On the other hand, suppose we changed our mind and wanted to make a spin-$x$ measurement instead. We can rewrite the same quantum state in the $x$-basis:

\begin{equation}
  \ket \psi = \frac{\ket{x+} \otimes \ket{x+} + \ket{x-} \otimes \ket{x-}}{\sqrt{2}}.
\end{equation}

Here we find the same $++, --$ correlation between the spin-$x$ measurements of the particles. However, the state $\ket \psi$ we used is somehow special. If we chose $\ket \psi \propto \ket{n+} \otimes \ket + + \ket{n-} \otimes \ket -$, we would know that the result of the first particle's spin measurement would yield, however it would not necessarily point in the same direction as the second particle's spin.

If the quantum state chosen is not maximally entangled, then we lose our ability to predict what result would be obtained for a similar experiment on the first particle. For example, consider the following state:

\begin{equation}
  \ket \phi = \frac{\ket{+} \otimes \ket{+} + \ket{x+} \otimes \ket{-}}{\sqrt{2}}.
\end{equation}

This is not a maximally entangled state. The reduced density matrix $\rho_A$ in the $z$-basis equals:

\begin{equation}
  \rho_A = \frac{1}{4}
  \begin{pmatrix}
    3 & 1\\
    1 & 1
  \end{pmatrix}.
\end{equation}

By performing a spin-$z$ measurement on the second particle, one may infer what state the first particle will have been in. However the possibilities will not be eigenstates of an operator that is similar to $S_z$. For this reason, we cannot be sure what we would obtain for a similar measurement done on the first particle. This is all because the state is not maximally entangled: $\braket{+\mid x+} \neq 0$.

\section{Complementarity}\ic{complementarity|textbf}
\label{sec:complementarity}

The concept of \emph{complementarity} has been introduced by Niels Bohr\ip{Bohr, Niels} in 1927 \cite{Plotnitsky2012}. It would be useful to understand this idea through his response \cite{Bohr1935} to Albert Einstein\ip{Einstein, Albert}, Boris Podolsky\ip{Podolsky, Boris} and Nathan Rosen\ip{Rosen, Nathan}'s (EPR) paper \cite{Einstein1935} that put forward the EPR paradox\ic{EPR paradox}.

We first discuss EPR's article \cite{Einstein1935} to highlight the main points, and then return to Bohr's\ip{Bohr, Niels} reply in order to understand the idea of complementarity.

\subsection{EPR Paradox}\ic{EPR paradox|textbf}
\label{sec:epr-paradox}

In the paper \cite{Einstein1935} a particular two-particle state is considered, which we write in ket notation (apart from a constant factor) as follows:

\begin{align}
  \label{eq:7}
  \ket \Psi &= \int_{-\infty}^\infty dp\; e^{i p x_0} \ket{-p} \otimes \ket p,\\
  \intertext{which can be written as}
  \ket \Psi &= \int_{-\infty}^\infty dx \ket{x+x_0} \otimes \ket{x},
\end{align}

where $x_0$ is some free parameter. It is thought that there is no interaction between the two particles after some time. If one makes a position/momentum measurement on the second particle and finds the states $\ket x / \ket p$, then the state of the first particle must be $\ket{x+x_0} / e^{i p x_0} \ket{-p}$. Of course, because position and momentum are noncommuting observables in quantum theory, one cannot specify their values with absolute precision at some instant of time.

The authors postulated the following criterion of reality\ic{Criterion of reality} \cite{Einstein1935}:

\begin{quote}
  If, without in any way disturbing a system, we can predict with
  certainty (i.e., with probability equal to unity) the value of a
  physical quantity, then there exists an element of physical reality
  corresponding to this physical quantity.
\end{quote}

In the measurements that are done only on the second particle, the first one is never disturbed even though the state in which it is in can be known with certainty. It is thought that if wave functions give a complete description of reality, then since one does not disturb the first particle while doing experiments on the second one, its position and momentum can be known simultaneously, hence they ``have simultaneous reality.'' \cite{Einstein1935} (See below).

The main proposition the article depends on is ``either (1) the description of reality given by the wave function in quantum mechanics is not complete or (2) these two quantities [corresponding to noncommuting observables] cannot have simultaneous reality.'' \cite{Einstein1935}.

In logical terms: (1) or (2) is true. However just in the previous paragraph, it has been shown that if (1) is false, then (2) is false as well. Hence, the proposition that (1) or (2) is true turns out false. It is then concluded that (1) must be true: ``the quantum-mechanical description of physical reality given by wave functions is not complete.'' \cite{Einstein1935}.

\subsection{Use of Complementarity by Bohr}
\label{sec:bohrs-response-epr}

The title of the paper \cite{Einstein1935} that put forward the EPR paradox was ``Can Quantum-Mechanical Description of Physical Reality Be Considered Complete?'' and Bohr wrote an article \cite{Bohr1935} with the same title a few months later, in response. Of course, this was clearly a challenge. The essence of Bohr's argument is that the criterion of reality\ic{Criterion of reality} used by EPR is ambiguous and quantum mechanics describes what is there to be explained physically \cite{Bohr1935}.

Bohr considers a single slit experiment, which may be a part of a bigger experimental setup, to begin with. Because the slit can be thought of as performing a position measurement on the particle, even though the particle had some definite momentum before the interaction, its momentum would become uncertain according to the Heisenberg's uncertainty principle\ic{uncertainty principle}\footnote{A proposed analogue experiment for understanding this relation that is remote from daily experience is included in Appendix~\ref{cha:an-anal-heis}.}:

\begin{align}
  \Delta x \Delta p \geq \frac \hbar 2.
\end{align}

Because ``the uncertainty $\Delta p$ is inseparably connected with the possibility of an exchange of momentum between the particle and the diaphragm'' he then asks the question of ``to what extent the momentum thus exchanged can be taken into account in the description of the phenomenon to be studied by the experimental arrangement concerned, of which the passing of the particle through the slit may be considered as the initial stage.'' \cite{Bohr1935}. Afterwards, he notes that since the diaphragm is held fixed in space, the ability to take into account the exchanged momentum is lost.

In order to take into account the momentum exchange, Bohr conceives a similar but distinct experimental setup. In this case the diaphragm is free to move in direction perpendicular to slit, \ie sideways. However, the diaphragm itself becomes an object of experimentation like the electron \cite{Bohr1935}.

All in all, what Bohr\ip{Bohr, Niels} says is that experimental setups that are suited for measurement of position or momentum variables are in essence different. As one learns more about one of these conjugate variables, one's knowledge of the other becomes more uncertain. It is in this way that there is a \emph{complementarity}\ic{complementarity}: one cannot reach complete information about both of the conjugate variables.

Lastly, we would like to comment on where in the EPR's criterion of reality\ic{Criterion of reality} that the claimed ``essential ambiguity'' \cite{Bohr1935} lies, according to Bohr\ip{Bohr, Niels}. It is the part ``without in any way disturbing a system'' \cite{Bohr1935}. In essence, EPR thinks that because one in principle can ``predict \emph{either one or the other} of the two conjugate measurable quantities \ldots quantum objects independently possess \emph{both} of these quantities, even though we can never predict both of them \emph{simultaneously}'' \cite{Plotnitsky2012} (emphasis in the original). As Bohr\ip{Bohr, Niels} has illustrated, one cannot use the same apparatus to measure any one of conjugate variables, that is, the cases when the diaphragm is fixed or not are \emph{different}. The very choice of the instrument as regards which one of the conjugate variables are measured \emph{is} a disturbance to the system, which consists of both the quantum object and the measurement instrument. Then according to Bohr, the argumentation of EPR in the direction that quantum mechanics is incomplete is invalidated because in quantum mechanics one has to consider the measurement apparatus as well. EPR's criterion is ambiguous because it does not take into account the measurement apparatus. 

Readers who look for a more careful analysis of the EPR paradox\ic{EPR paradox} might find reference \cite{Plotnitsky2012} quite useful.

\subsection{Complementarity in Black Hole Physics}
\label{sec:compl-black-hole}

General relativity, as its name alludes to, is \emph{relativistic}. It puts emphasis on the observer. This fact might sometimes be overlooked because relativistic or quantum phenomena are quite far from the realm of conditions under which human intuition has evolved.

Let us imagine a stone that is thrown into a black hole. It will certainly fall and in the end hit the singularity, in \emph{its own perspective}. An observer that falls down with the stone will confirm this series of events, however the confirmation will never reach any observer that remains outside the black hole anyway.

On the other hand, for example, an observer that is far away from the black hole will see that the stone will get closer and closer to the horizon however will \emph{never} cross it. This is \emph{relativity} after all.

There are a few ways to see that the stone never crosses the event horizon according to the distant observer. For concreteness we shall think of a Schwarzshild black hole and a radially infalling stone. First is to solve the geodesic equation for the stone using asymptotic time ($t$) to parametrize the path. As $t \to \infty$ one sees that the position of the stone approaches $r=2M^+$. The second is to use the spacelike slices ($\Sigma_t$) on which asymptotic time is constant. The point where the stone lies in time $t$, is found by taking the intersection of $\Sigma_t$ with the stone's trajectory in spacetime. We would like to show this last observation in more pictorial terms.

For that purpose, we use the Kruskal coordinates\ic{Kruskal coordinates} for the Schwarzschild black hole\ic{black hole!Schwarzschild}. The reader may want to check out section~\ref{sec:intro-gr-bh} on black holes. In these coordinates, a constant $t$ hypersurface is described by the following relation: $T/R = \tanh(t/4M)$. Since $t \in \reals$, the result of $\tanh(t/4M)$ takes all values in the interval $(-1,1)$. So constant $t$ hypersurfaces are represented as straight lines passing through the origin of the coordinate system that has slope ranging between $-1$ and $1$. The slope increases with $t$. We draw the \co{Kruskal diagram} with some sample constant $t$ hypersurfaces on top, in Figure~\ref{fig:kruskal-coor-const-t-slices}.

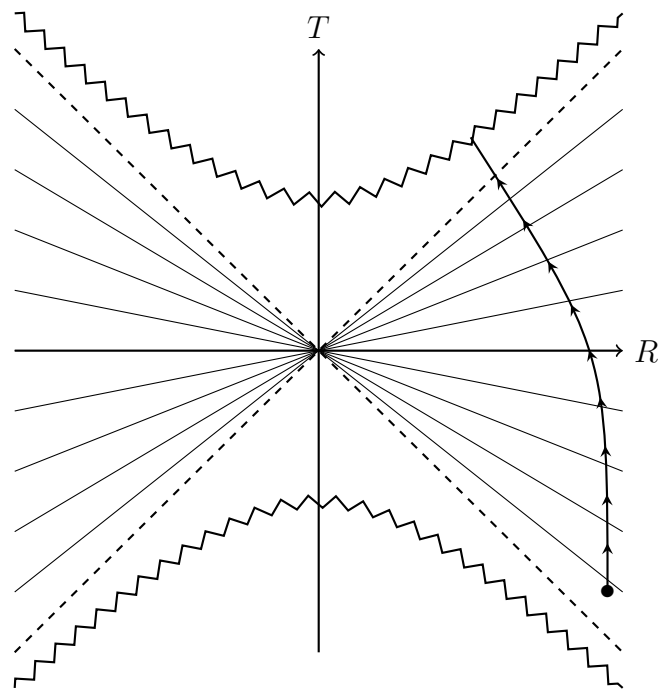
\begin{figure}
  \centering
  \begin{tikzpicture}[scale = 2, thick, domain=-2:2] 
  \draw[decorate, decoration=zigzag] plot (\x, { sqrt(1+\x*\x)});
  \draw[decorate, decoration=zigzag] plot (\x, {-sqrt(1+\x*\x)});
  \draw[dashed] (-2,-2) -- (2, 2);
  \draw[dashed] (-2, 2) -- (2,-2);
  \draw[->] (-2,0) -- (2,0) node [anchor=west] {$R$};
  \draw[->] (0,-2) -- (0,2) node [anchor=south] {$T$};
  \foreach \a in {-0.8,-0.6,...,0.8}
    \draw[thin] plot (\x, {\a*\x});
  \draw[thick,postaction={decorate, decoration={markings, mark=between positions 0.1 and 1 step 0.1 with {\arrow{stealth}}}}] (1.9,-1.6) .. controls (1.9,0) .. (1,1.4142);
  \node at (1.9,-1.6) {$\bullet$};
\end{tikzpicture}\vspace{1em}
  \caption{Kruskal diagram representing some sample constant $t$ slices. Lines of higher slope correspond to higher $t$ values. The distant observer has access to only half of the slices, \ie the parts that are found in the right hand side.}
  \label{fig:kruskal-coor-const-t-slices}
\end{figure}

As is clear from Figure~\ref{fig:kruskal-coor-const-t-slices}, as $t$ approaches $\infty$ the intersection point of stone's trajectory and $\Sigma_t$ approaches the event horizon. In the limit $t \to \infty$, the stone is seen as a point on the event horizon.


We have touched upon the general relativity side. We would now like to mention a result in quantum mechanics known as the \emph{no-cloning theorem}\ic{No-cloning theorem}. It simply means that quantum mechanics does not allow arbitrary states to be copied. The word arbitrary is important here. Otherwise there are processes that may copy some particular states. Various proofs of the no-cloning theorem are presented in Appendix~\ref{chap:proof-no-cloning}.


On the other hand, if the black hole evaporation conserves information it seems that the no-cloning theorem\ic{No-cloning theorem} can be violated, leaving us with a paradox\footnote{Readers may like to see, for example, reference \cite{Susskind2005} for discussions related to the no-cloning paradox.}. One assumes that a qubit is thrown into a black hole, and because the information that is carried by the qubit will be present in the black hole vapor, cloning of arbitrary states can be realized hence in violation of the no-cloning theorem\ic{No-cloning theorem}. All that is to be done is to consider a spacelike slice that contains both the infalling qubit and the part of vapor that carries its information. Figure~\ref{fig:cloning-paradox} illustrates such a spacelike hypersurface in the gravitational collapse scenario.

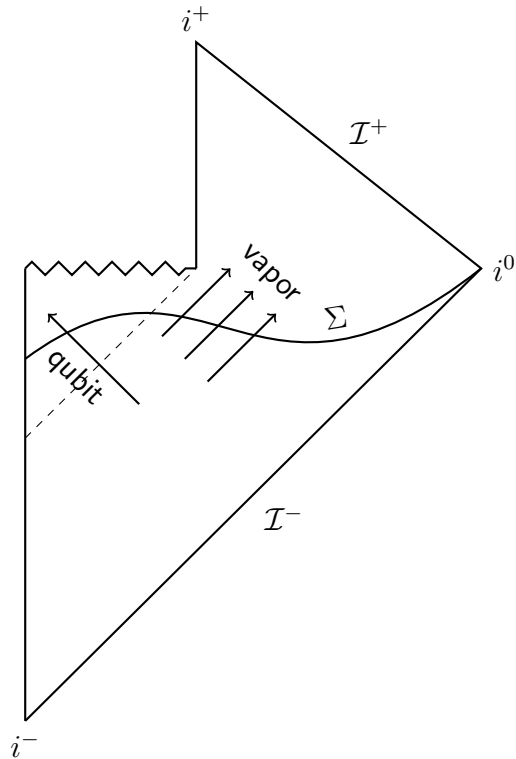
\begin{figure}
  \centering
  \begin{tikzpicture}[scale = 3, thick]
  \draw (0,0) -- (0,2);
  \draw[decorate, decoration=zigzag] (0,2) -- (0.75,2);
  \draw (0.75,2) -- (0.75,3) node [anchor=south] {$i^+$} -- node [midway,anchor=south west] {$\mathcal{I}^+$} (2,2) node [anchor=west] {$i^0$} -- node [midway,anchor=north west] {$\mathcal{I}^-$} (0,0) node [anchor=north] {$i^-$};
  \draw[thin, dashed] (0,1.25) -- (0.75,2);
  \draw [->] (0.5,1.4) -- (0.1,1.8) node [midway, below, sloped] {qubit};
  \draw [->] (0.6, 1.7) -- (0.9, 2.0);
  \draw [->] (0.7, 1.6) -- (1.0, 1.9);
  \draw [->] (0.8, 1.5) -- (1.1, 1.8);
  \draw (1.0, 1.9) node [anchor=south, rotate=-45] {vapor};
  \draw (0,1.6) .. controls (0.8,2.2) and (1,1.2) .. (2,2) node [near end, above, sloped] {$\Sigma$};
\end{tikzpicture}\vspace{1em}
  \caption{A spacelike slice $\Sigma$ that contains both the infalling qubit and its information present in some part of the black hole vapor.}
  \label{fig:cloning-paradox}
\end{figure}

Violation of the no-cloning theorem\ic{No-cloning theorem} is resolved by black hole complementarity\ic{black hole complementarity|textbf} \cite{bousso-comp-not-enough}.


The idea is simple. The outside observer has access to the qubit in the black hole vapor. The worst case scenario is that after he obtains the qubit in vapor he may decide to jump into the black hole in order to compare what is at hand with the infalling qubit. If this scenario can never be realized \cite{complementarityGedanken}, then no violation of the no-cloning theorem\ic{No-cloning theorem} is observed.

According to black hole complementarity, information about the infalling qubit is either inside or outside of the black hole: it depends on the observer \cite{complementarityGedanken}.


The \emph{stretched horizon}\ic{stretched horizon!definition} is a timelike surface just above the event horizon that contains all the degrees of freedom that can be associated with a black hole \cite{Susskind1993}. It is a phenomenological construct. According to the distant observer, infalling objects do not fall behind the event horizon, rather approach it asymptotically. Hence they fall \emph{onto} the stretched horizon. From the perspective of a distant observer, stretched horizon encodes all the necessary information that describes the black hole. Therefore from the outside observer's perspective, one need not talk about the black hole's \emph{interior}.

Stretched horizon has various properties \cite{Susskind1993}:

\begin{quote}
  If provided with an electrical multimeter, our observer will
  discover that the membrane has a surface resistivity of 377~ohms. If
  disturbed, the stretched horizon will respond like a viscous fluid,
  albeit with negative bulk viscosity.
\end{quote}

Finally, these are the three postulates of black hole complementarity\ic{black hole complementarity!postulates} \cite{Susskind1993}:

\begin{enumerate}
\item[P1] The process of formation and evaporation of a black hole, as viewed by a distant observer, can be described entirely within the context of standard quantum theory. In particular, there exists a unitary $S$-matrix which describes the evolution from infalling matter to outgoing Hawking-like radiation.
\item[P2] Outside the stretched horizon of a massive black hole, physics can be described to good approximation by a set of semiclassical field equations.
\item[P3] To a distant observer, a black hole appears to be a quantum system with discrete energy levels. The dimension of the subspace of states describing a black hole of mass $M$ is the exponential of the Bekenstein entropy $S(M)$.
\end{enumerate}

The firewall paradox\ic{firewall paradox} claims that P1, P2 and the equivalence principle are inconsistent with each other. This is the subject of the next chapter.


\chapter{The Firewall Paradox}\ic{firewall paradox|textbf}
\label{chap:the-paradox}

The epic of \emph{Leyla ile Mecnun}\ic{Leyla ile Mecnun, the epic of} describes an idea of love. However, one can find neither a \emph{Leyla} nor a \emph{Mecnun} in this world. There are approximate personalities, but never true characters. Perhaps the idea of black hole complementarity\ic{black hole complementarity} was a good dream but is untrue nonetheless: AMPS' paper claims that it is inconsistent in itself. Only the efforts of intellectuals and time will tell the answer.

We have already cited three postulates that are explicitly present in \cite{Susskind1993}. AMPS \cite{Almheiri2012} adds Einstein's equivalence principle, which is implicitly assumed in \cite{Susskind1993} as a certainty, as the fourth postulate of black hole complementarity\ic{black hole complementarity!postulates!the fourth}:

\begin{enumerate}
\item[P4] A freely falling observer experiences nothing out of the ordinary when crossing the horizon.
\end{enumerate}

The main argument of the AMPS' article is that P1, P2, P3 and P4 are inconsistent. It would be beneficial to summarize their basic argument in one paragraph.

\ic{firewall paradox!summary}The postulate P1 implies that a sufficiently old black hole is almost maximally entangled with the Hawking radiation it has radiated so far (early radiation). P4 requires that the infalling observer sees a quantum vacuum at the event horizon. However the vacuum state that an infalling observer encounters consists of entangled field modes from either side of the horizon: inside and outside the black hole. Because any Hawking quantum that will be radiated by an old black hole will be entangled with the early radiation, it cannot be entangled with the modes behind the horizon in order to result in a vacuum state. This is contrary to the subaddivity property of entropy. Then by using P2, the Hawking quantum radiated by the old black hole is evolved backwards in time, through which it is blue shifted by huge amounts. Infalling observer encounters this high energy particle at the horizon: Horizon is a special place, equivalence principle is broken. Mode by mode, it is seen that there is a firewall located at the horizon.

Now, we shall go over the article.

A black hole that has formed through collapse of a pure state is considered \cite{Almheiri2012}. Because of P1, the evaporation process is unitary: the black hole vapor (the one that would be produced after complete evaporation) is in the following pure state \cite{Almheiri2012}:

\begin{equation}
  \label{eq:amps-psi}
  \ket \Psi = \sum_i \ket{\psi_i}_E \otimes \ket{i}_L .
\end{equation}

Here the subindices $E$ and $L$ are used to denote the states that are in early or late part of the radiation. The distinction between what constitutes early or late is based on the \emph{Page time}\ic{Page time!definition}: the time that that black hole will have lost half of its entropy \cite{Almheiri2012}. Accordingly, when a black hole's age exceeds the Page time, it is called \emph{old}; otherwise \emph{young}\ic{black hole!young or old} \cite{Almheiri2012}.

In equation~(\ref{eq:amps-psi}) $\{\ket{i}_L\}$ is some orthonormal basis for the late part of the radiation. The kets $\ket{\psi_i}_E$ have no special property apart from being almost orthonormal (or almost orthogonal with almost equal norms) because $\ket \Psi$ is almost maximally entangled.


In (\ref{eq:amps-psi}) it is implicit that $\ket \Psi$ is a state that contains entanglement between early and late parts of the radiation, otherwise it would not be a sum. The entangled nature of the quantum state is expected by virtue of Don Page's\ip{Page, Don N.} calculation \cite{PageBHInfo} that indicates late time radiation is almost maximally entangled with the early radiation. Although he is very well aware that the amount of information to be found in black hole vapor may not be stated without a full theory of quantum gravity, he claims that this amount should be close to the amount found on average \cite{PageBHInfo}. Lowe\ip{Lowe, David A.} and Thorlacius\ip{Thorlacius, Larus} comments on the applicability of Page's result \cite{LoweThorBHInfoProb}:

 \begin{quote}
   If we assume the Hawking radiation and black hole are in a random pure state, as should be true if interactions effectively thermalize the black hole degrees of freedom, then we may apply the results of Page \ldots to compute the typical amount of information contained in the Hawking radiation.
 \end{quote}

Sabine Hossenfelder\ip{Hossenfelder, Sabine} \cite{hossenfelderCommentOnFirewall} raises the question that this may not be so, but agrees that it is a natural assumption:

\begin{quote}
  It basically says that all the partners of the early Hawking
  particles, those that would normally fall into the singularity and
  get lost, should come out as late as possible because that is when
  new quantum gravitational effects most plausibly occur.
\end{quote}


Because the energy carried away by the Hawking radiation is finite, authors use this observation to assert that the Hilbert space that describes the Hawking radiation should be finite dimensional. This assertion in the end allows the use of before mentioned calculation of Page.

So far, we have given reasons why an old black hole is almost maximally entangled with the early Hawking radiation. Next, we consider an infalling observer who ``knows the initial state of the black hole and also the black hole S-matrix'' \cite{Almheiri2012}. As he crosses the horizon, by equivalence principle, he sees a vacuum state. Please see Figure~\ref{fig:fw0}. This point will be important in the following discussion.

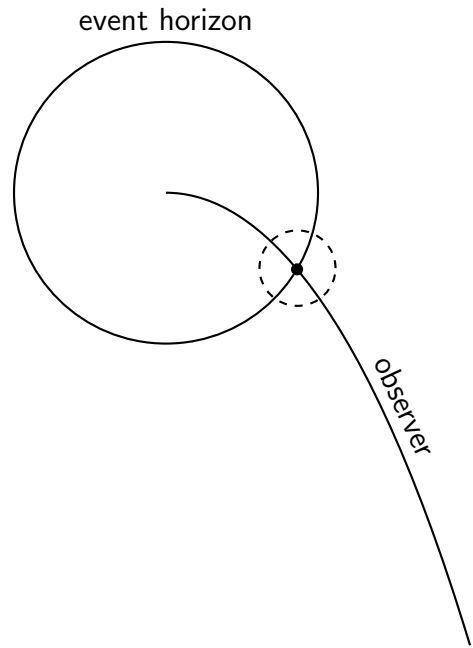
\begin{figure}
  \centering
  \begin{tikzpicture}[scale = 1, thick]
  \draw[name path=BH] (0,0) circle (2);
  \draw (0,2) node [anchor=south] {event horizon};
  \draw[dashed] (1.73,-1) circle (0.5);
  \draw[name path=OBS] (4,-6) .. controls (2.5,-1) and (1,0) .. (0,0) node [near start, above, sloped] {observer};
  \fill[black, name intersections={of=BH and OBS}]
    (intersection-1) circle (0.08);
\end{tikzpicture} \vspace{1em}
  \caption{Trajectory of the infalling observer. He must observe quantum vacuum state whilst passing through the event horizon by virtue of Einstein's equivalence principle.}
  \label{fig:fw0}
\end{figure}

On the other hand, a Hawking mode of width of order $r_s$, where $r_s$ is the Schwarzschild radius of the black hole, is considered. This is essentially a wave packet; one should think of choosing an out-basis constructed out of such wave packets\footnote{They are localized only in the radial direction: they have definite angular momentum.}. It has an associated annihilation operator, named $b$ by AMPS \cite{Almheiri2012}. Far from the black hole, it is a well defined particle.

Since the early and late radiations are almost maximally entangled, by measuring suitable observables in early radiation one can infer results of corresponding measurements that would have been done on the late radiation\footnote{Please see introductory chapter on quantum entanglement.}. If there was exact maximal entanglement, the correspondance would be perfect. However when there is almost maximal entanglement, the relative error is quite low for black holes that are well above the Planck mass\ic{Planck mass}. AMPS calculate the average relative error as $\bar{\varepsilon} = L/E$ where $L,E$ are degrees of freedom present in late and early radiation \cite{Almheiri2012}. This number is approximately equal to $\exp(-2\pi M^2)$ which is extremely close to zero, where $M$ is the initial mass of the black hole. For a solar mass black hole, we calculated before that $M = 9.14 \times 10^{37}$ in Planck units. This should give an idea of how vanishingly small this error is for a solar mass black hole.

Then the expectation value of the number operator, $b^\dagger b$, for the before mentioned mode can be found by making measurements in the early radiation. ``In other words, an observer making measurements on the early radiation can know the number of photons that will be present in a given mode of the late radiation'' \cite{Almheiri2012}. This information allows us to infer that such a Hawking quantum \emph{will have been emitted} after some time.

Because the mode~$b$ will have been emitted when the black hole is old, it must be almost maximally entangled with the early radiation. Therefore, when it is traced back towards the horizon it will not be in an entangled state with an interior mode to yield the vacuum state\footnote{Appendix~\ref{chap:entangledvacuum} illustrates why and how the vacuum is entangled.} that an infalling observer is expected to see. Formation of this entanglement is disallowed by the ``monogamy of entanglement''\ic{entanglement!monogamy of}. Equivalence principle is violated where it is least expected.

We will make clear what is meant by the monogamy of entanglement through strong subadditivity of entropy when we discuss high angular momentum modes. More mathematically, the discussion of low angular momentum modes is as follows.


Since the black hole vapor is dominated by low angular momentum modes \cite{Page1976}, AMPS use simplified gray body factors where they are equal to unity for low angular momentum states and are equal to zero for the rest. This makes it easy to directly express $b$ in terms of operators that an infalling observer would use near the horizon. In their own notation \cite{Almheiri2012}:

\begin{equation}
  b = \int_0^\infty d\omega \; (B(\omega) a_\omega + C(\omega) a_\omega^\dagger),
\end{equation}

here $\{a_\omega\}_\omega$ are the operators that an infalling observer would use. Of course $C(\omega)$ does not vanish, because $a$-vacuum and $b$-vacuum differ, which is in the end related to the difference of the time coordinate appropriate for stationary and infalling observers. Although an infalling observer may detect particles, their occurrence is suppressed exponentially in the energy of particles \cite{BirrellDavies}.

If the gray body factors were not chosen in this simplified form, AMPS would need to take into account the angular momentum barrier which would introduce more complication since there would be scattering as the mode is evolved backwards in time. As we will mention in the discussion of black hole mining experiments, their argument is fully general: for all angular modes, low and high.

All in all, the vacuum around the horizon that an infalling observer would see is not the vacuum that $b$ annihilates: $b$~mode will be detected by the infalling observer. When they are detected, they will have been \emph{highly} blue shifted. Suppose it has energy $E$ around $\mc I^+$. Its 4-momentum is \cite{carroll-gr-book}: 

\begin{equation}
  k^\mu = \left( E \left(1 - \frac{2M}{r} \right)^{-1}, \left[ E^2 - \frac{L}{r^2} \left(1 - \frac{2M}{r} \right) \right]^{1/2}, \pm \frac{L}{r^2},0\right).
\end{equation}

We orient the coordinate system such that $d\theta/d\lambda = 0$ where $\lambda$ is an affine parameter. On the other hand, the 4-velocity of a radially ingoing timelike geodesic that has zero speed near $i^-$ is given by $u^\mu = ((1-2M/r)^{-1},-(2M/r)^{1/2},0,0)$ \cite{hartleGR}. Therefore, energy of the photon detected in the rest frame of the infalling observer, for $r \approx 2M$, is found to be:

\begin{equation}
  -u\cdot k \approx 2 E \left(1 - \frac{2M}{r} \right)^{-1}. \label{eq:53}
\end{equation}

The stretched horizon for a Schwarzschild black hole has area that is one Planck unit larger than the area of the event horizon \cite{complementarityGedanken}.  Therefore $r= 2M + 1$ (remember we use Planck units\ic{Planck units}) is rather outside the stretched horizon, however it will suffice\footnote{If one calculates $r$ for which the difference between the areas of stretched horizon and event horizon is 1, it turns to be about $r=2M + 1/16\pi M$. Then the blue shift factor in (\ref{eq:53}) will be proportional to $M^2$ and blue shifted Hawking particle would have energy of the order $M$. Because this is well above the Planck energy, one may argue that one may evolve photons backwards in time until Planckian energies are obtained. However the energy observed by a \emph{stationary} observer located at fixed $r$ is $E (1-2M/2)^{-1/2}$. For $r = 2M + 1/16\pi M$ and $E = 1/8\pi M$ this is approximately $(2\pi)^{-1/2}$. Therefore, photons can be evolved backwards in time until they reach the stretched horizon.} to show high energy of detected photons. The factor $(1 - 2M/r)^{-1}$ to a good approximation is $2M$. Therefore the energy of the photon is blue shifted by order $4M$ which for a solar mass is about $4 \times 10^{38}$. If the outgoing photon has a typical energy, \ie of the order of Hawking temperature $T = 1/8\pi M$, then observed energy of photon is $1/2\pi$, in Planck units. An infalling observer encounters Planckian radiation\ic{firewall paradox!Planckian radiation} near the horizon.


The discussion up to this point concerns only the low angular momentum modes. Modes of higher angular momentum are mostly reflected back towards the black hole because of the centrifugal potential that increases as $\sim l^2$ as it is present in (\ref{eq:49}) and (\ref{eq:50}). However it is well known \cite{UnruhWaldPRD_25_942, UnruhWaldPRD_27_2271, UnruhWaldEssay} that these modes can be detected by lowering a particle detector behind the centrifugal barrier.

If one solves the geodesic equation for null rays in Schwarzschild geometry, for details one may refer to a classical textbook such as Carroll's \cite{carroll-gr-book}, one sees that the maximum of the centrifugal barrier lies at $r=3M$ although its height is proportional to $l^2$. So there is no barrier for an $l=0$ mode. On the other hand, here we are concerned with light \emph{waves}, not light \emph{rays}. In equation (\ref{eq:50}) the maximum of potential function may be calculated however one should consider that $l$ is now quantized: $l \in \mathbb N$. For $l=0$ it is at $r=8M/3$ whereas for $l>0$ it resides at $r=[(9l^4 + 18l^3 + 23l^2 + 14l + 9)^{1/2}-3]M/2l(l+1) + 3M/2$ which is always greater than $17M/6$ and asymptotically approaches $r=3M$. It increases monotonically. Our main point in this paragraph has been to show the place where a lowered particle detector must pass in order to detect high angular momentum modes efficiently. As a side note, this region where high angular momentum modes are abundant, a shell of proper length of the order $r_s$ from the horizon, is sometimes \cite{harlow-pirsa} called black hole's ``atmosphere''\ic{black hole!atmosphere}\footnote{Calling this region as the atmosphere of a black hole seems quite apt. It is argued \cite{bousso-comp-not-enough} that ``the minable modes in the zone \ldots \emph{must} be considered as a subsystem of the black hole.'' (emphasis in the original). Hence a black hole's atmosphere, like that of any other stellar object, is a part of it.} and sometimes \cite{bousso-comp-not-enough} as ``the zone.''\ic{black hole!the zone@`the zone'}

As for modes of high angular momentum, AMPS argue that the situtation is the same. When a particle is sensed by a lowered particle detector, this particle should have been entangled with the early radiation \emph{before} it was detected \cite{Almheiri2012}. Entanglement of these modes with the early radiation arises ``since they can be mined and thus form a subsystem of the late radiation'' \cite{bousso-comp-not-enough}.  Hence the earlier argument fully applies. AMPS then conclude that ``the infalling observer encounters a Planck density of Planck scale radiation and burns up'' \cite{Almheiri2012}.


A way of expressing the paradox in direct mathematical terms is to show that it causes a violation of the \emph{strong subadditivity}\ic{entropy!strong subadditivity} property of entropy. A three-partite Hilbert space is considered: $\mc H = \mc H_A \otimes \mc H_B \otimes \mc H_C$ and a density matrix defined on $\mc H$. As usual, reduced density matrices in various subspaces of $\mc H$ are obtained through taking partial traces. We shall denote by, for example, $S_{AB}$ the entropy associated with a reduced density matrix on $\mc H_A \otimes \mc H_B$. The strong subadditivity of entropy states that following holds \cite{ssaLiebRuskai}:

\begin{equation}
  S_{AB} + S_{BC} \geq S_{B} + S_{ABC}. \label{eq:ssa}
\end{equation}

Back to the Hawking radiation, AMPS' argument concerning the violation of strong subadditivty is as follows. One observes three Hilbert spaces, in their notation: A) early Hawking radiation, B) a Hawking particle emitted when the black hole is old, C) its interior partner.

B and C must be in a pure entangled state in order to give rise to vacuum observed by an infalling observer: $S_{BC} = 0$. Therefore $S_{ABC} = S_A$. When these results are applied to (\ref{eq:ssa}), we obtain $S_{AB} \geq S_A + S_B$. On the other hand, an old black hole is the one that begins purifying early radiation\footnote{Sometimes, the time when the entropy of the so far emitted Hawking radiation begins to decrease is defined to be the Page time\ic{Page time!alternate definition}.}, which means that the entropy of A and B is less than the entropy of A: $S_{AB} < S_A$. As a result one obtains the following:\ic{entropy!strong subadditivity!violation in firewall paradox}

\begin{equation}
  0 > S_B\quad\text{(contradiction)} \label{eq:48}
\end{equation}

Remember that, in the beginning of this section we menioned the idea of constructing observables acting on the early radiation that will give results of corresponding observations that will have been done on the late radiation. For that purpose, AMPS used the randomness of the Hawking radiation to calculate an average relative error, $\bar{\varepsilon} \approx \exp(-2\pi M^2)$, which is extremely close to zero for semiclassical black holes. Here, as regards the violation of strong subadditivity, AMPS observes \cite{Almheiri2012} that:

\begin{quote}
   [the randomness of the Hawking radiation] is not needed; it is sufficient that the entropy of the black hole be decreasing. From another point of view, one need not be able to predict the state with perfect fidelity: rather, any information about the state of the $b$ mode precludes the state being annihilated by $a$.
\end{quote}

Before the AMPS' article \cite{Almheiri2012} it was supposed that an infalling observer would see a vacuum while passing through the event horizon and the Hawking radiation would be pure as required by black hole complementarity\ic{black hole complementarity}. What AMPS did, is to put emphasis on the consequences of unitary black hole evaporation from the perspective of \emph{infalling} observers. They used the almost maximal entanglement of small subsystems with the rest of the system \cite{pageAvgEntropy, PageBHInfo} quite elegantly and have presented a neat paradox.

This is the firewall paradox.

\begin{figure}[b]
  \centering
  \includegraphics[width=5cm]{./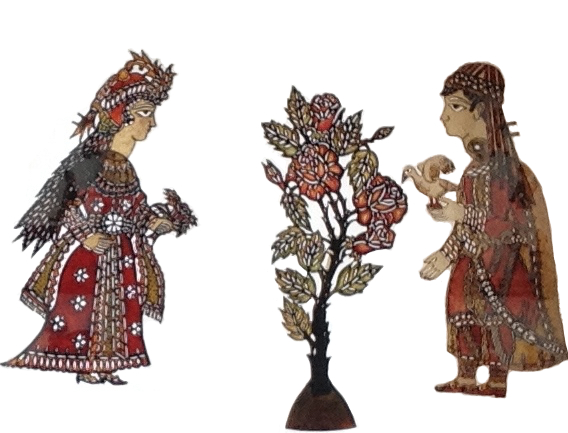}
  \caption{Figures of \emph{Leyla} and \emph{Mecnun} in Turkish shadow theatre. Adapted from a photograph taken in the Museum of Karagöz, Bursa.}
  \label{fig:leyla-and-mecnun}
\end{figure}

\chapter{Multitude of Approaches}
\label{chap:mult-approaches}

In this chapter, we include various approaches that attempt to resolve the firewall paradox.

\section{Harlow-Hayden conjecture}\ic{Harlow-Hayden conjecture|textbf}
\label{sec:hh-conjecture}

This conjecture is about the non-observance of contradiction between the postulates of black hole complementarity from an operational point of view. It is an attempt to save the original complementarity.

Daniel Harlow\ip{Harlow, Daniel} and Patrick Hayden\ip{Hayden, Patrick} wrote an article \cite{harlow-hayden} from the perspective of quantum computation conjecturing that ``the decoding time for Hawking radiation will in general be exponential in the entropy of the remaining black hole'' \cite{harlow-hayden}. This has become to known as \emph{the Harlow-Hayden conjecture}\ic{Harlow-Hayden conjecture}.

The idea is to show the impossibility of a distant observer doing measurements to verify early-radiation/late-Hawking-quantum entanglement and jumping into the black hole to test the interior-mode/late-Hawking-quantum entanglement. If this is operationally untestable, then there is no paradox from the perspective of complementarity.

The importance of distillating early radiation comes into picture because the hypothetical observer (who aims to show that complementarity is inconsistent) should first verify the entanglement between late and early radiation. For that purpose, it is important that the part of early radiation that is entangled with a late Hawking quantum must be distilled into an easily accessible qubit. Otherwise, there is a gigantic cloud of radiation and the chances to test the entanglement are bleak without further quantum computing on the early radiation. According to the Harlow-Hayden conjecture, quantum computation that is necessary to perform this job requires a time of the order of $e^{M^2}$ where $M$ is the black hole mass \cite{harlow-hayden}.

On the other hand the black hole would already have evaporated after $t \sim M^3$, however, it takes $t \sim e^{M^2}$ amount of time to isolate the entangled subsystem in the black hole vapor \cite{harlow-hayden}. Therefore, when the hypothetical observer confirms the entanglement with the black hole vapor, the black hole would have already evaporated. Hence it is impossible to jump into the black hole and test the entanglement between the two sides of the horizon.

Evidence for the correctness of this conjecture seems strong \cite{HarlowPrivate} and hence if no single observer can see a violation of black hole complementarity: the firewall paradox\ic{firewall paradox} is resolved.

\section{Strong Complementarity Principle}\ic{strong complementarity|textbf}
\label{sec:strong-complementarity}

This approach is a stronger version of complementarity, in essence expressing that every observer has his own description of nature which must be in agreement with those of others when the results of observations can be communicated \cite{bousso-comp-not-enough}. It would be beneficial to remember that a strong condition is more restrictive than a weak condition. It is in this sense that this approach is called \emph{strong} complementarity principle. The names \emph{strong complementarity} and \emph{causal patch complementarity}\ic{causal patch complementarity|see {strong complementarity}} are used interchangeably.

The Harlow-Hayden conjecture\ic{Harlow-Hayden conjecture} is sometimes \cite{Almheiri2013} cited among strong complementarity approaches, however it should better be viewed as an idea to save the original complementarity. Because the idea there is that no observer can see a violation of complementarity, hence there appears no problem in supposing that an infalling observer finds quantum vacuum while passing through the event horizon. In particular, even though Harlow\ip{Harlow, Daniel} and Hayden\ip{Hayden, Patrick} discuss strong complementarity at some point in their paper \cite{harlow-hayden}, the conjecture itself is independent of this discussion.

An interesting work \cite{ilgin-yang-inside-old} in this approach is that of İrfan Ilgın\ip{Ilgın, İrfan} and I-Sheng Yang\ip{Yang, I-Sheng}. They focus their attention on the causal patch of an infalling observer and find that there is no spacelike hypersurface on which both the interior mode and early Hawking radiation are both low energy. In the paper, the firewall paradox is postponed to Planck scale physics and possible modifications of low energy physics are found unnecessary. In particular ``through unknown UV physics'' \cite{ilgin-yang-inside-old} at the boundary of the infalling causal patch, need for firewalls is eliminated. On the other hand Ilgın and Yang remark that although distillation of early radiation by an infalling observer may have its own problems, if it is possible then there really is a firewall paradox \cite{ilgin-yang-inside-old}.

\section{ER = EPR}\ic{ER = EPR|textbf}
\label{sec:er-epr}


The abbreviations `ER' and `EPR' stand for Einstein-Rosen bridges (worm holes) and Einstein-Podolsky-Rosen pairs (entangled quantum states) respectively. Juan Maldacena\ip{Maldacena, Juan} and Leonard Susskind\ip{Susskind, Leonard} ``take the radical position that in a theory of quantum gravity they are inseparably linked, even for systems consisting of no more than a pair of entangled particles'' \cite{ErEpr}.

In order to give a justification to their conjecture, they mention \cite{ErEpr} similarities between ER and EPR: 1) One cannot use one of these to make superluminal communication, 2) Entanglement between two systems cannot be created by LOCC\footnote{LOCC: Local Operations and Classical Communication.\ic{LOCC}}, it should be done either causing the systems to interact or merging them with parts of already entangled pairs. In case of separate black holes that are not connected through an EPR bridge, one may only create such bridges by merging the black holes with ones that have already a worm hole between them.

Maldacena and Susskind consider an eternal AdS black hole\ic{black hole!in AdS space} which is described by the following quantum state:

\begin{equation}
  \ket \Psi = \sum_n e^{-\beta E_n / 2} \ket{n} \otimes \ket{n}  ,\label{eq:51}
\end{equation}

where $\ket n$ is the microstate of each black hole with energy $E_n$ \cite{ErEpr}. Here are two black holes that are connected through an Einstein-Rosen bridge. Black holes are causally disconnected.

Their ingenious idea is to interpret the second black hole as the early radiation in the AMPS argument. The black hole is entangled with a second system, and the horizon is smooth: there is no firewall. This is one of their arguments against firewalls. However, as the careful reader has already noticed, the reduced density matrix obtained from the state $\ket \Psi$ in (\ref{eq:51}), $\rho = \sum_n e^{-\beta E_n} \ket n \bra n$, is not maximally entangled. They emphasise this point \emph{en passe} in a footnote \cite{ErEpr}.

In the case of eternal Schwarzschild solution, their reasoning is similar. However this time, they put the two casually disconnected black holes in casual contact, as two one sided Schwarzschild black holes that are very far away from each other. Of course this solution is only approximate. However since the black holes are assumed to be very far away from each other, this is not a crude approximation. Black holes should be in a specific entangled state for an Einstein-Rosen bridge between them to exist, and this is ``a third interpretation of the eternal Schwarzschild black hole'' \cite{ErEpr}.

On the contrary, whether firewalls exist or not depends on what is done with the radiation \cite{ErEpr}. If the black hole vapor is collected and then condensed into another black hole, by a quantum computer that is acting on the new black hole the system can be arranged such that there is a smooth ER bridge between the black holes \cite{ErEpr}. However, by an appropriate quantum computation action on the second black hole a firewall can be created behind the event horizon of the first horizon \cite{ErEpr}.

When nothing is done on the radiation, first guess would be that each one of Hawking quanta is connected to the black hole through ER bridges of quantum nature \cite{ErEpr}. These bridges better be of unknown quantum nature, because they are expected to connect particles to black hole and the concept of particle in QFT in curved spacetimes is ambiguous, in general.

The question of the smoothness of black hole horizons is related to the connection structure of these quantum ER bridges to the black hole and apart from equivalence principle --which is already challenged by the firewall paradox\ic{firewall paradox}-- there is currently no independent argument for its smoothness \cite{ErEpr}.

As far as the AMPS argument is concerned ``[Maldacena and Susskind] have given enough reasons for not believing the AMPS argument that there \emph{had} to be firewalls'' \cite{ErEpr} (emphasis in the original).


For further progress on this approach, readers may like to read \cite{susskind-new-old-bh, susskind-butterfly, Susskind2014, susskind-addendum, jensen-karch, sonner}. As for ideas about the relation between entanglement and geometry, one may see Brian Swingle's\ip{Swingle, Brian} article \cite{swingle} and Mark van Raamsdonk's\ip{van Raamsdonk, Mark} essay \cite{RammsdonkEssay}.

\section{Fuzzball Complementarity}\ic{fuzzball complementarity|textbf}
\label{sec:fuzzball-comp}


Under certain assumptions, conditions for unitary black hole evaporation are found \cite{mathur-info}: one of the following items must be true:

\begin{itemize}
\item Physics needs to be modified to yield unitary evaporation, \eg inclusion of nonlocal interactions.
\item Black holes have hair, \ie degrees of freedom located around the horizon.
\item Assumptions used to reach these options are wrong or insufficient.
\end{itemize}

Otherwise, information in the universe would not be preserved and black hole evaporation would not be unitary as expected through quantum mechanics. On the other hand, it would be useful to remember as noted by \cite{mathur-info} that unitarity does not imply information preservation.

A \emph{fuzzball}\ic{fuzzball!definition} is a string theoretic quantum state of black hole that exhibits structure almost at the event horizon and has no interior geometry \cite{mathur-info}. It is the second view mentioned above towards the resolution of the information paradox. In Mathur and Turton's words \cite{MathurTurtonFlaw}: ``Hawking radiation arises from the surface of the fuzzball just like radiation from a piece of coal. This resolves the information paradox.''


Introducing fuzzballs, we would like to mention the idea of \emph{fuzzball complementarity}\ic{fuzzball complementarity}. Its explicit postulates can be found in reference \cite{MathurSSaboutBH} which we summarize as below:

\begin{itemize}
\item [F1] Black hole microstates have no interior. They end compactly before the horizon is reached. An actual state of a black hole is a superposition of these microstates.
\item [F2] The fields around the horizon are not in vacuum state.
\item [F3] Processes involving infalling quanta of energy $E \gg T$ can be described to a good accuracy by general relativistic black hole solutions that have interior geometry.
\end{itemize}

When a particle of energy that is well higher than the temperature ($E \gg T$) hits a fuzzball, it excites modes on the fuzzball surface and most of these modes are not entangled with the early radiation as a black hole older than the Page time\ic{Page time} would have been \cite{MathurTurtonFlaw}. Moreover the dynamics of these excitations have a complementary description from the perspective of an infalling observer falling in a smooth geometry \cite{MathurTurtonFlaw}. Therefore the firewall paradox is resolved in the fuzzball picture.

As a disclaimer it would be appropriate to point out that fuzzball complementarity, however, does not claim that there is a complementary description for quanta that have energy of the order of black hole temperature \cite{MathurTurtonFlaw}.

On the other hand, one already observes that fuzzballs are quite different from the physics we are used to. For example, there is no black hole interior and the spacetime region around the horizon is not in vacuum \cite{MathurTurtonFlaw}. Black hole complementarity we presented in Chapter~\ref{chap:intro} is called ``traditional complementarity'' \cite{MathurTurtonFlaw} in presence of fuzzball complementarity. With this distinction in mind, it is expressed in \cite{MathurSSaboutBH} that the firewall paradox is put forward to ``exclude traditional complementarity.''

\section{Holographic Principle and Black Hole Interior}
\label{sec:paparaju}


The holographic principle\ic{holographic principle} states that the number of degrees of freedom in a region of spacetime scales with the area of its boundary \cite{BoussoHolPrinc, tHooftDimReduction}. There is strong support for the holographic principle, however it is still a conjecture \cite{BoussoHolPrinc}. Reader may like to see \cite{BoussoHolPrinc} for a review on holographic principle\ic{holographic principle}.

On the other hand, AdS/CFT correspondence\ic{AdS/CFT correspondence} conjectures that what happens in the anti-de~Sitter spacetime\ic{anti-de~Sitter spacetime} is related to what happens in conformal field theory in its boundary. It is seen as a specific realization of the holographic principle\ic{holographic principle}. For an introduction to AdS/CFT correspondence references  \cite{RamalloAdSCFT, NastaseAdSCFT, PetersenAdSCFT} may be of interest.


\ic{Papadodimas-Raju proposal|textbf}In joint works of Kyriakos Papadodimas\ip{Papadodimas, Kyriakos} and Suvrat Raju\ip{Raju, Suvrat} \cite{PapaRajuInFall, PapaRajuPRL, PapaRajuPRD} it is argued that the interior of a black hole in AdS can be described by CFT in the spacetime boundary.

Their construction of the black hole interior depends on the state of CFT at the boundary, and this state dependence is what allows black holes to evaporate unitarily while maintaining smooth horizons \cite{PapaRajuPRL}. They especially dismiss the fuzzball conjecture, and the way they evade the conclusions of \cite{mathur-info} is via the claim that the modes in the black hole interior are not independent from those in early Hawking radiation \cite{PapaRajuPRL}. For more on this proposal, readers may like to see Daniel Harlow's\ip{Harlow, Daniel} paper \cite{HarlowOnPapaRaju}.

\section{Extreme Cosmic Censorship Conjecture}\ic{extreme cosmic censorship conjecture|textbf}
\label{sec:ext-cos-cens}

This viewpoint is put forward by Don Page\ip{Page, Don N.} \cite{page-ecc}. He cites two reasons that cause the firewall paradox. The first one is the assumption of validity of semiclassical approximation: ``Effective field theory is local, whereas the constraint equations of gravity are nonlocal, so the assumption of effective field theory is almost certainly incorrect'' \cite{page-ecc}. The second one is claimed to be the overcounting of black hole microstates \cite{page-ecc}. Page expresses that in arguments behind the firewall proposal, it is implicitly assumed that one counts states that will exhibit singular structure when they are evolved backwards in time \cite{page-ecc}. In order to exclude these states, he proposes the \emph{extreme cosmic censorship} conjecture \cite{page-ecc}:

\begin{quote}
  The universe is entirely nonsingular (except for singularities deep inside black holes and/or white holes which do not persist to the infinite future or past, with these singularities coming near the surface only when the holes have masses near the Planck mass that normally happens only close to the ends and/or beginnings of their lifetimes).
\end{quote}

Page considers the following categories of quantum states \cite{page-ecc}:

\begin{itemize}
\item \textbf{Unconstrained kinematic states:} These are most general states one considers in a theory. They do not have to obey conditions such as gauge conditions. They do not have to be realized.

\item \textbf{Constrained physical states:} These are most general states that satisfy various constraint equations in the theory.

\item \textbf{Nonsingular realistic states:} These are constrained physical states that obey the extreme cosmic censorship criterion.

\item \textbf{The actual state:} This is the realized state of the universe.

\end{itemize}

According to Page, the actual state of the universe is a nonsingular realistic state. So extreme cosmic censorship is valid. Nonsingular realistic states do not have any singular structure, such as a firewall, at the event horizon. This is true by definition. Therefore in all such states, an infalling observer sees a quantum vacuum at the horizon: the degrees of freedom inside and outside the black hole are entangled.

Page's important observation is that this entanglement is illusory: the main reason is that, according to the extreme cosmic censorship, only realizable states of the universe are nonsingular realistic, all of which contain entanglement between inside and outside of the black hole in a form that would result in vacuum state for an infalling observer.

Of course, if the semiclassical theory outside the stretched horizon is valid, then an outgoing mode must be entangled with early radiation if black hole evaporation is a unitary process. Page is well aware of this fact, and he proposes that effective field theory should be abandoned \cite{page-ecc}. In order to evade the firewall paradox, in this respect, it is thought that there must be a process that will transfer the entanglement between late Hawking quanta with interior modes to entanglement between late Hawking quanta with early radiation \cite{page-ecc}. However the existence of such a mechanism is not shown yet.


\section{Icezones: the anti-thesis of Firewalls}\ic{icezone|textbf}

The article \cite{Stojkovic2013} written by John Hutchinson\ip{Hutchinson, John} and Dejan Stojkovic\ip{Stojkovic, Dejan} explores two ideas: 1) low energy physics may be enough to preserve unitartiy black hole evaporation while not violating the principle of equivalence, 2) observer dependence of entanglement makes it harder to obtain a practical paradox.

We begin with the second one. In the Section~\ref{sec:hh-conjecture} on the Harlow-Hayden conjecture\ic{Harlow-Hayden conjecture}, it was discussed that if no observer can detect a violation of the monogamy of entanglement then the black hole complementarity will be secure. Reference \cite{Stojkovic2013} mentions the role of observer in measuring entanglement. In particular ``[t]he entanglement of two nearby modes separated by an event horizon is not just dependent on proximity but also on the acceleration of the observer'' \cite{Stojkovic2013}. The source of entanglement degradation is pointed out \cite{Stojkovic2013} as the Unruh radiation\ic{Unruh radiation}. This argument only makes it harder for an infalling observer to observe a violation in the laws of quantum mechanics. Therefore it is in support of impracticality arguments, such as the Harlow-Hayden conjecture, against the firewalls.

The main argument is \emph{icezones}. Reference \cite{Stojkovic2013} accepts, in accordance with the AMPS \cite{Almheiri2012}, that Rindler vacuum is no longer a good approximation around the horizon for old black holes. However the main difference is that there are only low energy quanta present around the black hole \cite{Stojkovic2013} hence the name icezones.

The aim is to modify the mechanism underlying the production of Hawking quanta. In the usual picture, outgoing modes are created with ingoing modes: in pairs. In the icezone picture, this is modified and alongside particle pairs are created low energy quanta \cite{Stojkovic2013}. It is then expected that these low energy particle will collide with the already emitted radiation and arrange the necessary order of entanglements so that a particle emitted by an old black hole is entangled both with the interior and the early radiation \cite{Stojkovic2013}.

From another perspective, an icezone performs entanglement transfer: when a particle is emitted by an old black hole it is almost maximally entangled with the interior mode, however as it moves further away from the black hole it becomes almost maximally entangled with the rest of the early radiation \cite{Stojkovic2013}. A similar mechanism of entanglement transfer is proposed conceptually by Page in \cite{page-ecc} though he commented that this mechanism ``would almost certainly involve violations of local effective field theory.'' Icezones do not violate locality \cite{DejanPrivate} however without an explicit theory it would be hard to tell if icezones can be realized.

On the other hand the equivalence principle remains intact in presence of icezones, which is basically because of the low energy of the icezone quanta \cite{DejanPrivate}. So it is not violated, as it has not been by the presence of quanta of energy $\mc O(1/M)$ near the horizon of black holes due to effects of curved spacetime.

Samir Mathur\ip{Mathur, Samir} gave the conditions for unitary black hole evaporation in his paper \cite{mathur-info} which we cited before in Section~\ref{sec:fuzzball-comp}: either black hole horizon is not in vacuum state and/or there are $\mc O(1)$ modifications to the Hawking radiation. Although it is hard to tell, in the absence of an explicit icezone theory, the general attitude of the paper \cite{Stojkovic2013} indicates that black hole horizon is not at the vacuum state and there may be non-perturbative effects in action.

All in all, if an explicit mechanism could be provided for icezones in the future, it has the potential of being the most favored approach.


\section{Shape Dynamics}\ic{shape dynamics|textbf}

Perhaps if the classical theory of gravitation, which currently is general relativity, conjectured that event horizons were \emph{special places}, firewalls would not be disliked as much. There are various alternatives to general relativity that exist in the literature. For example, $f(R)$ theories \cite{DeFelice2010} and massive gravity \cite{DeRham2014} would be only two examples from a rather vast list. Shape dynamics is one such theory, however, it is at the moment at its initial stages of developement. It is a theory of gravitation that is ``a novel formulation of Einstein's equations in which refoliation invariance is replaced with local spatial conformal (Weyl) invariance'' \cite{KoslowskiGomesFreq}. For more about shape dynamics and its connection to general relativity one may find references \cite{KoslowskiGomesFreq, KoslowskiGomesLink, SD-Extra-1, SD-Extra-2, SD-Extra-3, SD-Extra-4} useful.

Henrique Gomes\ip{Gomes, Henrique} and Gabriel Herczeg\ip{Herczeg, Gabriel} in their article \cite{GomesHerczeg} noted that event horizons of shape dynamic black holes\ic{black hole!shape dynamic} can be detected by local measurements, whereas these black holes are indistinguishable from general relativistic ones for distant observers. In shape dynamics the equivalence principle is not fundamental but rather an emergent property, and it is not exhibited at the event horizon \cite{GomesHerczeg}. Shape dynamic black holes should be further investigated as regards the experience of infalling observers, especially by taking into account the quantum fields. The question of how different the experience of an infalling observer from that in general relativity should be answered.

A remark would however be that because the interior of this type of black holes is another mirror universe \cite{GomesHerczeg}, they seem to violate postulate P3 of black hole complementarity which states that the number of interior black hole states is $\exp(A/4)$ where $A$ is the area of the horizon.


\section{The Backreaction of Hawking Radiation}\ic{Hawking radiation!backreaction|textbf}

The back reaction of Hawking quanta onto the black hole has been of concern in discussions regarding quantum evaporation of black holes. The work that Laura Mersini\ip{Mersini, Laura} published \cite{Mersini-Houghton2014-I} deals exactly with this problem. The results are intriguing.

Under the assumptions of pressureless dust, homogeneity and spherical symmetry gravitational collapse of a star is considered including the backreaction of the Hawking radiation \cite{Mersini-Houghton2014-I} where the initial vacuum state is the Hartle-Hawking vacuum\ic{Hartle-Hawking vacuum}.

The key idea that makes the calculation of backreaction possible is
``[k]nowing that particle creation occurs during the collapse stage,
means that we can include the backreaction of Hawking radiation onto
the collapse dynamics of the star to find out if a singularity forms
at the end of the collapse" \cite{Mersini-Houghton2014-I}.

What is found in \cite{Mersini-Houghton2014-I} is quite interesting: no black hole ever forms, and the star begins to expand after reaching a certain size which is outside its Schwarzschild radius. This result is obtained mainly because \cite{Mersini-Houghton2014-I} takes into acocunt the negative energy Hawking radiation that travels inwards and this violates the conditions of the Penrose-Hawking singularity theorem \cite{HawkingPenroseSingularityThm}. Depending on the details of the collapse, a temporary trapped surface might form around the star only in the collapse phase \cite{Mersini-Houghton2014-I}.

The result is approximate, however. Because gravitational collapse and backreaction of the Hawking radiation are not considered at the same time. Rather, what is done is to put a mirror to produce the expected Hawking radiation when there was no backreaction, and then calculating the effect of \emph{this} radiation on the evolution of the star. However, the main result of \cite{Mersini-Houghton2014-I}, that black holes do not form, is shown to be the case in a number of numerical calculations Mersini did with Harald P. Pfeiffer\ip{Pfeiffer, Harald P.} in \cite{Mersini-Houghton2014-II}.

In these numerical calculations the homogeneity assumption is relaxed, the initial vacuum is taken to be the Unruh vacuum, and the obtained results showed that gravitationally collapsing stars reverse their implosion and begin to explode at radii larger than their Schwarzschild radii \cite{Mersini-Houghton2014-II}. The collapse dynamics are found \cite{Mersini-Houghton2014-II} to follow the Openheimer-Snyder model \cite{PhysRev.56.455}\footnote{This model describes the classical implosion of a spherically symmetric star consisting of dust (so there is no pressure) into a black hole. For a quick review about the importance of this paper, readers may like to see this APS Focus article \cite{PhysRevFocus.13.23}.} until a radius near the Schwarzschild radius is approached, at which time the situation change drastically and star begins to bounce, to explode due to backreaction of Hawking radiation.

As a self-criticism \cite{Mersini-Houghton2014-I} mentiones that the effect of perturbations on the stability of solutions needs to be investigated. Reference \cite{Mersini-Houghton2014-II}, on the other hand, gives promises about future calculations where fluid dynamics will have been included for the star. Backscattering by the centrifugal barrier is not considered in both of the works \cite{Mersini-Houghton2014-I, Mersini-Houghton2014-II}. Future work in this direction that includes the backreaction of the Hawking radiation seems to bring many unexpected results onto the discussion table.

We would like to mention \emph{en passe} that Stephen Hawking\ip{Hawking, Stephen W.} published a note \cite{HawkingWeather} which is related to his talk\footnote{Readers may find Hawking's as well as other participants' talks through the web address given in \cite{FuzzorFire}.} at Kavli Institute for Theoretical Physics. He contemplates on the idea that only apparent horizon but no event horizons form and mentions that this ``is the only resolution of the paradox compatible with CPT'' invariance\ic{CPT invariance} \cite{HawkingWeather}. This idea, of course, requires a modification of classical gravitation. Perhaps, when the backreaction is taken into account, there will be no need to modify gravity at the classical level as suggested by the results presented in this section.

Mersini and Pfeiffer declares the solution of the information loss paradox and therefore implying that of the firewall paradox \cite{Mersini-Houghton2014-II}, because no singularity or event horizon ever formed when they took into account the backreaction at least for the cases that were under consideration. Hence the subtitle of \cite{Mersini-Houghton2014-II}: ``Fireworks instead of firewalls.'' Fireworks, as a description of star explosions; fireworks, as a sign of celebration.

\section{Balanced Holography Interpretation of Black Hole Entropy}\ic{balanced holography|textbf}

In a series of two papers \cite{Verlinde2013a, Verlinde2013b} that appeared on the same day on the arXiv, Erik Verlinde\ip{Verlinde, Erik}  and Hermann Verlinde\ip{Verlinde, Hermann} put forward \cite{Verlinde2013a} a third interpretation for the Bekenstein entropy of a black hole: the \emph{balanced holography} interpretation. The first interpretation of the Bekenstein entropy is that it counts the number of black hole quantum states, whereas the second interpretation is that it merely expresses the cross horizon entanglement of field modes present in the vacuum state \cite{Verlinde2013a}.

Two hypothesis are proposed \cite{Verlinde2013a}, the first one is the following:

\begin{enumerate}
\item[H1] A typical quantum black hole, soon after it is formed, is close
  to maximally entangled with its environment.
\end{enumerate}

This is another expression for saying that the horizon geometry is smooth.

It is supposed that with $N$ many qubits one can describe the state of the
black hole, and then they define ``the entangled environment E of a
young black hole as the 2N dimensional Hilbert space spanned by all
states that are entangled with the black hole interior H''
\cite{Verlinde2013a}. Now the entangled state of the black hole lies
in the product space $\mc H_H \otimes \mc H_E$, which is however
$2^{2N}$ dimensional. If all the states were available, then the
entropy of the black hole would be twice the value of Bekenstein
entropy \cite{Verlinde2013a}. As regards the number of physical vacuum
states, here comes their second hypothesis \cite{Verlinde2013a}:

\begin{enumerate}
\item[H2] The physical Hilbert space of a young black hole and its
  entangled environment E is $e^{S_{\text{BH}}} = 2^N$ dimensional.
\end{enumerate}

So there should be $N$ many conditions that are satisfied by vacuum solutions, whose explicit form is to be determined by the specific theory to be considered \cite{Verlinde2013a}.

Erik and Hermann Verlinde make a distinction between virtual\ic{qubit!virtual} and logical qubits\ic{qubit!logical}: virtual qubits do not carry information and are determined by the vacuum state, whereas logical qubits can carry any information \cite{Verlinde2013a}. An example would be appropriate to clarify the meaning of virtual qubits. Two electons of a neutral helium atom that is in its ground state exist in a specific entangled state: they have anti-parallel spins. However this entanglement does not carry any information. The very fact that the atom is in its ground state requires this specific state. According to \cite{Verlinde2013a}, the key to solution of the firewall paradox lies in this distinction.

Due to maximal entanglement between the black hole and its entangled environment, by supposing an implicit symmetry between the two, it is expressed that the two subsystems consist of more or less equal number of virtual and logical qubits \cite{Verlinde2013a}. The task is to reveal the virtual and logical qubits in the state vector. For that purpose, quantum CNOT\ic{CNOT} gate\footnote{Controlled NOT gate. It is a unitary operation acting on two qubits. It changes the state of the first, provided that the second one reads one. Otherwise it acts as an identity. For example: $\ket{\psi}\otimes\ket{0} \mapsto \ket{\psi}\otimes\ket{0}$ and $\ket{0}\otimes\ket{1} \mapsto \ket{1}\otimes\ket{1}$.} is considered. For example, \cite{Verlinde2013a} applies the $\UCNOT$ operation to the following state (balanced black hole state\ic{balanced black hole state}):

\begin{equation}
  \ket{\Psi} = \alpha_0 \ket{0}_{\text{H}} \otimes \ket{0}_{\text{E}} + \alpha_1 \ket{1}_{\text{H}} \otimes \ket{1}_{\text{E}},
\end{equation}

where the first is a horizon state and the second is a state in the entangled environment. When $\UCNOT$ is applied one gets:

\begin{equation}
  \UCNOT \ket \Psi = \ket{0}_{\text{H}} \otimes (\alpha_0 \ket{0}_{\text{E}} + \alpha_1 \ket{1}_{\text{E}}).
\end{equation}

From the point of view of \cite{Verlinde2013a}, this operation has isolated the virtual and logical qubits. First one in the outer product represents the horizon in its ground state and the terms in the paranthesis is just one qubit that is suitable to carry information about the black hole quantum state. If one would apply $\UCNOT$ to, for example, $\ket \Phi = \beta_0 \ket{1}_{\text{H}} \otimes \ket{0}_{\text{E}} + \beta_1 \ket{0}_{\text{H}} \otimes \ket{1}_{\text{E}}$ one would get:

\begin{equation}
  \UCNOT \ket \Phi = \ket{1}_{\text{H}} \otimes (\beta_0 \ket{0}_{\text{E}} + \beta_1 \ket{1}_{\text{E}}).
\end{equation}

Reference \cite{Verlinde2013a} would interpret this state as a firewall state, since the horizon degree of freedom is seen to be excited.

This revelation process is generalized for larger numbers of qubits
and the name of the operator becomes $\UQT$, where similar to $\UCNOT$, it
``must execute unitary quantum teleportation protocol, or entanglement
swap, that transports all free quantum information from the interior
H to the exterior E\ldots'' \cite{Verlinde2013a}.

The main idea of balanced holography\ic{balanced holography!main idea} is that while a distant observer sees the state $\ket \Psi$, an infalling observer encounters the state $\UQT \ket \Psi$ which has all the information in the outside and in particular possesses a smooth horizon \cite{Verlinde2013a}. This dual view is made possible by black hole complementarity\ic{black hole complementarity}. Nevertheless, in order to bring a solution to the paradox, \cite{Verlinde2013a} deviates from the complementarity by modifying one of its postulates, the one about the interpretation of the black hole entropy.

The arguments presented so far are about \emph{young} black holes. The case of \emph{old} black holes is taken into account in the second article \cite{Verlinde2013b}. The idea is to write it an old black hole state in terms of balanced black hole states. For example, if the old black hole has mass $M$, there are young black hole states of this mass as well. In total, there are $\exp(S_{\text{BH}})$ many states. Because young and old black holes' Hilbert spaces are of the same dimension, it is then inferred that young black hole states are sufficient to describe the state of the old black hole \cite{Verlinde2013b}. Because for old black holes, all of the degrees of freedom are entangled with the early radiation, the quantum state of the system is given to be \cite{Verlinde2013b}:

\begin{equation}
  \ket \Psi _{\text{old}} = \sum_i \ket{i,i}_{\text{HE}} \otimes \ket{\Phi_i}_{\text{R}} \label{eq:58}
\end{equation}

where $\ket{i,i}_{\text{HE}}$ denotes a balanced black hole state, and $\ket{\Phi_i}_{\text{R}}$ is an early radiation state. 

When $\UQT$ is applied to (\ref{eq:58}) it will yield someting of the form $\ket{\text{vac}}_{\text{H}} \otimes \sum_i \ket{i}_{\text{E}} \otimes \ket{\Phi_i}_{\text{R}}$. The infalling observer does not encounter a singularity at the horizon. Reference \cite{Verlinde2013b} observes that balanced black hole states are similar to topological qubits\ic{qubit!topological} (which store quantum information non-locally \cite{Verlinde2013b}) mathematically, then puts forward as a hypothesis that ``Black hole information is protected from local sources of decoherence. As a result, it can not be measured, altered, or mined by local probes inside the zone'' \cite{Verlinde2013b}. From this idea, it is inferred that the interaction Hamiltonian that gives rise to time evolution does not alter the form of the (\ref{eq:58}) and hence preserve the balanced black hole states \cite{Verlinde2013b}.

Summing up the main arguments of \cite{Verlinde2013a, Verlinde2013b}, it is put forward that by adopting a third interpretation of the Bekenstein entropy one opens a room for non-local information storage which will, in the end, cause black holes to evaporate unitarily and not possess firewalls. The main idea is the transfer of entanglement relative to the particular observer under consideration, which is backed by complementarity ideas.


\chapter{Conclusion}
\label{chap:conclusion}

We have discussed a few introductory concepts in quantum gravity such as Hawking radiation\ic{Hawking radiation} and black hole complementarity\ic{black hole complementarity} in order to gather the necessary tools to understand the firewall paradox. Then, we introduced the paradox. It states that black hole complementarity (the most popular solution to the information paradox) is in contradiction with Einstein's equivalence principle (the building principle behind the general theory of relativity). The creators of the paradox suggested \cite{Almheiri2012} the existence of a highly excited region around the black hole's event horizon in violation of Einstein's equivalence principle as the ``most conservative'' solution of the tenison. Hence the name \emph{firewall}.

In another article named ``An apologia for firewalls'' \cite{Almheiri2013}, with the addition of Douglas Stanford\ip{Stanford, Douglas} as the fifth author, the AMPSS defended the existence of firewalls against the critics. As Sean Carroll\ip{Carroll, Sean} mentioned in his blog-post \cite{carroll-firewall-blog} the word \emph{apologia} ``means ``defense,'' not ``apology.'' '' As of now, the paradox stands still.

We tried to cover various paths of research on this subject. The perspectives we included are good to exemplify the multitude of approaches that exist in the literature towards the resolution of the paradox.

There is a dislike for firewall among some physicists. This is not a secret. One of the reasons for this attitude is that if one holds onto the black hole complementarity, then the equivalence principle can break down at arbitrarily low curvature regions. It can happen mainly because an old black hole can be of any mass, so the curvature around its horizon can be arbitrarily low. This is really \emph{tragic}. One is reminded of the following passage from Nietzsche\ip{Nietzsche, Friedrich Wilhelm} \cite{NietzscheHumantooHuman} (emphasis in the original):

\begin{quote}
  The laws of numbers were invented on the basis of the initially prevailing error that there are various identical things (but actually there is nothing identical) or at least there are things (but there is no ``thing''). The assumption of multiplicity always presumes that there is \emph{something}, which occurs repeatedly. But this is just where error rules; even here, we invent entities, unities, that do not exist.

Our feelings of space and time are false, for if they are tested rigorously, they lead to logical contradictions. Whenever we establish something scientifically, we are inevitably always reckoning with some incorrect quantities; but because these quantities are at least \emph{constant} (as is, for example, our feeling of time and space), the results of science do acquire a perfect strictness and certainty in their relationship to each other. One can continue to build upon them--up to that final analysis, where the mistaken basic assumptions, those constant errors, come into contradiction with the results, for example, in atomic theory.
\end{quote}

Tragedy is an essential constituent of a meaningful life. If one wants to climb the mount Everest, then one must accept the possibility of falling down off a cliff.\footnote{The idea of \emph{insurance} takes away all the joy, sorrow and meaning from one's life.} What happened in quantum gravity is that we are about to fall off a cliff, though not yet, if we cannot hold onto a principle. Almost everybody thinks about which principle of nature to give up: 1) unitary black hole evaporation, 2) semiclassical approximation outside stretched horizon, 3) the number of black hole quantum states is given by $\exp(A/4)$, 4) equivalence principle, 5) proximity postulate. The proximity postulate\ic{black hole complementarity!proximity postulate} is the idea that ``the interior of a black hole must be constructed from degrees of freedom that are physically near the black hole'' \cite{SusskindHHConj}. Perhaps we need to think about what idea, which principle to hold on to. Julian Barbour's\ip{Barbour, Julian} interpretation of the Mach's principle\ic{Mach's principle} \cite{Barbour2010} would be a good starting point.

There are many ideas in the air. One gets the impression that any one of them, in itself, is not sufficient to solve the paradox; however there may be little pieces towards the resolution that are scattered across the multitude of ideas. On the other hand, the ties between physics and philosophy has weakened considerably\footnote{Readers may find Pablo Echenique-Robba's\ip{Echenique-Robba, Pablo} work \cite{Echenique-Robba} useful as a counter-attack to the dogmatic ``Would you please be quiet and do your calculation?'' approach.}. It seems that this paradox will not be solved if philosophy is contuniously disregarded by physicists. Occasionally, the current situation is likened to the one in the beginning of the 20th century where major paradigm shifts occurred. However what we miss is philosophy. The situations are not quite the same. For example, the \emph{fact} that there is an information \emph{paradox} for black holes is because of the \emph{belief} that one should be able to reverse the \emph{evolution} of a physical system. This \emph{belief} is non-scientific, it is philosophical. Perhaps, in the end, there are dry mathematical rules that govern the cosmos; however one \emph{should} be able to fall in love. Philosophy is a necessity.

\bibliography{./content/refs}
\bibliographystyle{plain}

\appendix

\chapter{An Analogue of Heisenberg's Uncertainty Principle}\ic{uncertainty principle|textbf}
\label{cha:an-anal-heis}

Heisenberg's uncertainty relation $\Delta x \Delta p \geq \hbar/2$ is a relation difficult to grasp for humans because our intuition evolved in situations where nonrelativistic classical physics is valid. However the following experiment might be an analogue way of understanding it.

We consider a fan that is mounted on a wall, see Figure~\ref{fig:ccomp-2}. The fan has an opaque lid that closes the propeller completely when it is closed. We suppose that the weather outside is moderately windy. The idea is that, when the lid is opened there occurs a current of air that makes the propeller rotate. Of course if the lid is closed, there can occur no sensible amount of air current that is enough to move it, hence the propeller stands still. The more one opens the lid, the more quickly the propeller rotates.

We interpret the situation as follows: the degree of visibility of the fan corresponds to position measurements with increasing precision (decreasing $\Delta x$), hence it begins to rotate faster and faster as one uncovers more of it (increasing $\Delta p$). On the contrary, if one closes the lid more and more (increasing $\Delta x$ because one no longer sees it), the fan begins to slow down (decreasing $\Delta p$) because the air current no longer flows as freely as it used to when the lid was open.

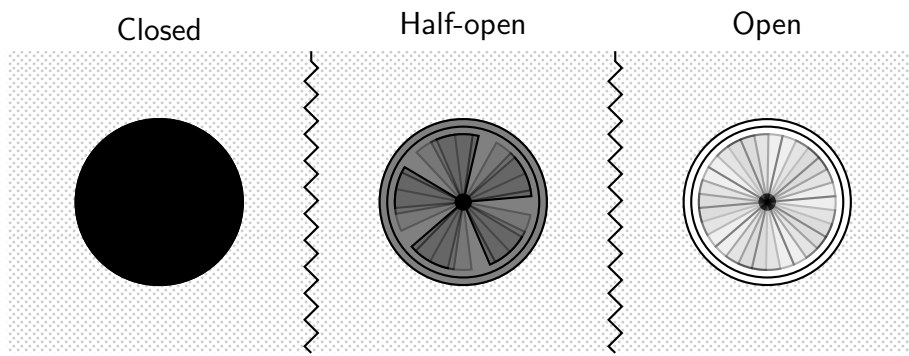
\begin{figure}
  \centering
  \begin{tikzpicture}[scale = 1, thick]
  \pattern [pattern=crosshatch dots, pattern color=black!20] (10,-2) rectangle (-2,2); 
  \filldraw [fill=white] (0,0) circle (1.1); 
  \draw (0,0) circle (1);   
  \filldraw [rotate=5, fill=black!20, draw=black] (0,0) -- (0.9,0) arc (0:36:0.9) -- cycle;
  \filldraw [rotate=77, fill=black!20, draw=black] (0,0) -- (0.9,0) arc (0:36:0.9) -- cycle;
  \filldraw [rotate=149, fill=black!20, draw=black] (0,0) -- (0.9,0) arc (0:36:0.9) -- cycle;
  \filldraw [rotate=221, fill=black!20, draw=black] (0,0) -- (0.9,0) arc (0:36:0.9) -- cycle;
  \filldraw [rotate=293, fill=black!20, draw=black] (0,0) -- (0.9,0) arc (0:36:0.9) -- cycle;
  \filldraw [fill=black, draw=black] (0,0) circle (0.1); 
  \filldraw [fill=black, draw=black, opacity=1] (0,0) circle (1.1); 
  \draw[decorate, decoration=zigzag] (2,-2) -- (2,2);
  \draw[decorate, decoration=zigzag] (6,-2) -- (6,2);
  \draw (0,2) node [anchor=south, above] {Closed};
  \draw (4,2) node [anchor=south, above] {Half-open};
  \draw (8,2) node [anchor=south, above] {Open};
  \begin{scope}[xshift=4cm]
      \filldraw [fill=white] (0,0) circle (1.1); 
      \draw (0,0) circle (1);   
      \begin{scope}
        \filldraw [rotate=5, fill=black!20, draw=black] (0,0) --
        (0.9,0) arc (0:36:0.9) -- cycle; \filldraw [rotate=77,
        fill=black!20, draw=black] (0,0) -- (0.9,0) arc (0:36:0.9) --
        cycle; \filldraw [rotate=149, fill=black!20, draw=black] (0,0)
        -- (0.9,0) arc (0:36:0.9) -- cycle; \filldraw [rotate=221,
        fill=black!20, draw=black] (0,0) -- (0.9,0) arc (0:36:0.9) --
        cycle; \filldraw [rotate=293, fill=black!20, draw=black] (0,0)
        -- (0.9,0) arc (0:36:0.9) -- cycle;
      \end{scope}
      \begin{scope}[opacity=0.5, rotate=5]
        \filldraw [rotate=5, fill=black!20, draw=black] (0,0) --
        (0.9,0) arc (0:36:0.9) -- cycle; \filldraw [rotate=77,
        fill=black!20, draw=black] (0,0) -- (0.9,0) arc (0:36:0.9) --
        cycle; \filldraw [rotate=149, fill=black!20, draw=black] (0,0)
        -- (0.9,0) arc (0:36:0.9) -- cycle; \filldraw [rotate=221,
        fill=black!20, draw=black] (0,0) -- (0.9,0) arc (0:36:0.9) --
        cycle; \filldraw [rotate=293, fill=black!20, draw=black] (0,0)
        -- (0.9,0) arc (0:36:0.9) -- cycle;
      \end{scope}
      \begin{scope}[opacity=0.3, rotate=20]
        \filldraw [rotate=5, fill=black!20, draw=black] (0,0) --
        (0.9,0) arc (0:36:0.9) -- cycle; \filldraw [rotate=77,
        fill=black!20, draw=black] (0,0) -- (0.9,0) arc (0:36:0.9) --
        cycle; \filldraw [rotate=149, fill=black!20, draw=black] (0,0)
        -- (0.9,0) arc (0:36:0.9) -- cycle; \filldraw [rotate=221,
        fill=black!20, draw=black] (0,0) -- (0.9,0) arc (0:36:0.9) --
        cycle; \filldraw [rotate=293, fill=black!20, draw=black] (0,0)
        -- (0.9,0) arc (0:36:0.9) -- cycle;
      \end{scope}
      \filldraw [fill=black, draw=black] (0,0) circle (0.1); 
      \filldraw [fill=black, draw=black, opacity=0.5] (0,0) circle (1.1); 
    \end{scope}
  \begin{scope}[xshift=8cm]
      \filldraw [fill=white] (0,0) circle (1.1); 
      \draw (0,0) circle (1);   
      \begin{scope}[opacity=0.3]
        \filldraw [rotate=5, fill=black!20, draw=black] (0,0) --
        (0.9,0) arc (0:36:0.9) -- cycle; \filldraw [rotate=77,
        fill=black!20, draw=black] (0,0) -- (0.9,0) arc (0:36:0.9) --
        cycle; \filldraw [rotate=149, fill=black!20, draw=black] (0,0)
        -- (0.9,0) arc (0:36:0.9) -- cycle; \filldraw [rotate=221,
        fill=black!20, draw=black] (0,0) -- (0.9,0) arc (0:36:0.9) --
        cycle; \filldraw [rotate=293, fill=black!20, draw=black] (0,0)
        -- (0.9,0) arc (0:36:0.9) -- cycle; \filldraw [fill=black,
        draw=black] (0,0) circle (0.1); 
      \end{scope}
      \begin{scope}[opacity=0.3, rotate=12]
        \filldraw [rotate=5, fill=black!20, draw=black] (0,0) --
        (0.9,0) arc (0:36:0.9) -- cycle; \filldraw [rotate=77,
        fill=black!20, draw=black] (0,0) -- (0.9,0) arc (0:36:0.9) --
        cycle; \filldraw [rotate=149, fill=black!20, draw=black] (0,0)
        -- (0.9,0) arc (0:36:0.9) -- cycle; \filldraw [rotate=221,
        fill=black!20, draw=black] (0,0) -- (0.9,0) arc (0:36:0.9) --
        cycle; \filldraw [rotate=293, fill=black!20, draw=black] (0,0)
        -- (0.9,0) arc (0:36:0.9) -- cycle; \filldraw [fill=black,
        draw=black] (0,0) circle (0.1); 
      \end{scope}
      \begin{scope}[opacity=0.3, rotate=24]
        \filldraw [rotate=5, fill=black!20, draw=black] (0,0) --
        (0.9,0) arc (0:36:0.9) -- cycle; \filldraw [rotate=77,
        fill=black!20, draw=black] (0,0) -- (0.9,0) arc (0:36:0.9) --
        cycle; \filldraw [rotate=149, fill=black!20, draw=black] (0,0)
        -- (0.9,0) arc (0:36:0.9) -- cycle; \filldraw [rotate=221,
        fill=black!20, draw=black] (0,0) -- (0.9,0) arc (0:36:0.9) --
        cycle; \filldraw [rotate=293, fill=black!20, draw=black] (0,0)
        -- (0.9,0) arc (0:36:0.9) -- cycle; \filldraw [fill=black,
        draw=black] (0,0) circle (0.1); 
      \end{scope}
      \begin{scope}[opacity=0.3, rotate=36]
        \filldraw [rotate=5, fill=black!20, draw=black] (0,0) --
        (0.9,0) arc (0:36:0.9) -- cycle; \filldraw [rotate=77,
        fill=black!20, draw=black] (0,0) -- (0.9,0) arc (0:36:0.9) --
        cycle; \filldraw [rotate=149, fill=black!20, draw=black] (0,0)
        -- (0.9,0) arc (0:36:0.9) -- cycle; \filldraw [rotate=221,
        fill=black!20, draw=black] (0,0) -- (0.9,0) arc (0:36:0.9) --
        cycle; \filldraw [rotate=293, fill=black!20, draw=black] (0,0)
        -- (0.9,0) arc (0:36:0.9) -- cycle; \filldraw [fill=black,
        draw=black] (0,0) circle (0.1); 
      \end{scope}
    \end{scope}
\end{tikzpicture}\vspace{1em}
  \caption{An analogue of uncertainty between position and momentum in quantum mechanics using a fan placed on a wall. There is a lid that may close the fan. When the lid is opened, there occurs a rather moderate amount of air current that flows from outside to inside which in the end makes the propeller rotate. We represent lid-opening in the figure by the opacity of the lid: the more translucent it is, the more open it is and vice versa.}
  \label{fig:ccomp-2}
\end{figure}

\chapter{Derivation of the $U(1)$-inner product}
\label{cha:derivation-u1-inner}

We shall derive the $U(1)$-charge\ic{U(1)@$U(1)$-charge} and show that it is constant on spacelike hypersurfaces. The conserved current (which is a functional of one field solution) that gives rise to this charge will then be modified to define another conserved current (which is a functional of two field solutions).

The full Lagrangian is present in equation \eqref{eq:4} and we rewrite it here for convenience:

\begin{equation}
  \label{eq:14}
  \mc L_r = \frac 1 2 \left( \nabla^\mu \phi \nabla_\mu \phi + m^2 \phi^2 + \xi R \phi^2 \right),
\end{equation}

where we denote it as $\mc L_r$ (Subscript $r$ is used because of $\phi$'s being real valued). The Euler-Lagrange equation obtained is $\left( \nabla^2 - m^2 - \xi R \right) \phi = 0$. However, there is another Lagrangian that gives the same equation of motion at the expense of $\phi$'s being complex valued:

\begin{equation}
  \label{eq:15}
  \mc L_c = \nabla^\mu \phi^* \nabla_\mu \phi + m^2 \phi^* \phi + \xi R \phi^* \phi .
\end{equation}

Variation with respect to $\phi^*$ yields the Euler-Lagrange equation mentioned above that is yielded by $\mc{L}_r$.

As we shall see shortly, we can easily find a conserved charge for complex scalar fields. However, since complex and real scalar fields satisfy the same equation of motion, we will have obtained a conserved charge for real scalar fields. Because the modes, or in other words basis vectors, upon which we will expand the real scalar field $\phi$ will be complex valued, we are going to keep the complex conjugation symbol throughout.

There is an \emph{internal symmetry}\ic{internal symmetry} present in $\mc L_c$. An internal symmetry\ic{internal symmetry!definition} concerns the invariance of the action under the variations of fields only. For example Lorentz invariance in usual, \ie flat spacetime, QFT involves the transformation on spacetime points, hence it is not an internal symmetry. However, the invariance of $\mc L_c$ under the mapping $\phi \mapsto e^{i\theta} \phi$ where $\theta \in \reals$ \emph{is} an internal symmetry, which we will use to define the $U(1)$-charge\ic{U(1)@$U(1)$-charge}.

As a side note, the term ``$U(1)$'' is used to denote the charge because it is the name of the \co{group} that corresponds to considered symmetry transformation of fields. It is the group of $1 \times 1$ dimensional unitary matrices. Its elements can be represented on a circle. Please see Figure~\ref{fig:u1-group}.

\begin{figure}
  \centering
  \begin{tikzpicture}[scale = 1, thick]
  \draw[step=0.5, gray, very thin] (-1.9,-1.9) grid (1.9,1.9);
  \draw (0,0) circle (1);
  \draw (-2,0) -- (2,0);
  \draw (0,-2) -- (0,2);
  \draw (-2,2) node [anchor=north west] {$\mathbb C$};
\end{tikzpicture}\vspace{1em}
  \caption{The elements of the group $U(1)$ form a circle. One may see this fact easily by considering where the elements $e^{i\theta}$ lie for $\theta \in \reals$.}
  \label{fig:u1-group}
\end{figure}
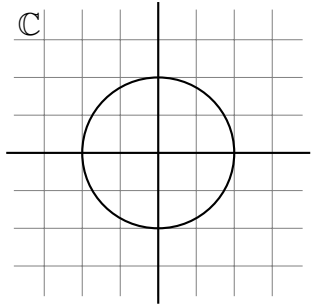

There are two ways to derive the $U(1)$-charge\ic{U(1)@$U(1)$-charge}. The first \cite{qft-demystified} is more direct, it involves the variation of Lagrangian with respect to $\phi$ and its invariance under a \emph{particular}\footnote{Of course $\delta \mc L$ cannot be equal to zero for \emph{all} variations, this is the essence of the variational principle.} type of variation described by the $U(1)$ symmetry. The second involves the invariance of the action, making the global $U(1)$ symmetry local and proceeding afterwards. Because the second method is more indirect and less obvious, we shall not show it. Interested reader may find the details of this approach in \cite{maggiore-qft}, pages 46--49.

\section{Conserved Current}
\label{sec:conserved-current}

We consider the Lagrangian $\mc L_c$. As the reader may have checked already, it is invariant under the mapping $\phi \mapsto e^{i\theta}\phi$. We consider an infinitesimal transformation, $\theta \simeq 0$, which means $\delta \phi = e^{i\theta}\phi - \phi \simeq i\theta\phi \simeq 0$.

We vary $\mc L_c$ with respect to $\phi,\phi^*$ and obtain:

\begin{align}
  \delta \mc L_c &= \pd{\mc L_c}{\phi} \delta \phi + \pd{\mc L_c}{\nabla_\mu \phi} \delta \nabla_\mu \phi + \pd{\mc L_c}{\phi^*} \delta \phi^* + \pd{\mc L_c}{\nabla_\mu \phi^*} \delta \nabla_\mu \phi^* ,\\
\intertext{Noting that $\delta$ and $\nabla_\mu$ commute, we write:}
                &= \pd{\mc L_c}{\phi} \delta \phi + \pd{\mc L_c}{\nabla_\mu \phi} \nabla_\mu \delta \phi + \pd{\mc L_c}{\phi^*} \delta \phi^* + \pd{\mc L_c}{\nabla_\mu \phi^*} \nabla_\mu \delta \phi^* .
\intertext{We can write $\nabla_\mu \delta \phi (\partial\mc L_c / \partial \nabla_\mu \phi)$ as $\nabla_\mu \left[\delta\phi (\partial\mc L_c/\partial \nabla_\mu \phi)\right] - \delta\phi \nabla_\mu (\partial\mc L_c / \partial\nabla_\mu \phi)$.  It is here we require $\phi$ to satisfy the Euler-Lagrange equation, in order to write the second term as $-\delta\phi (\partial\mc L_c / \partial \phi)$. One should do similar calculations for the $\phi^*$ part. After the cancelations we obtain:}
               &= \nabla_\mu \left[\delta \phi \pd{\mc L_c}{\nabla_\mu \phi}\right] + \nabla_\mu \left[\delta \phi^* \pd{\mc L_c}{\nabla_\mu \phi^*}\right],\\
               &= \nabla_\mu \left[\delta \phi \pd{\mc L_c}{\nabla_\mu \phi} + \delta \phi^* \pd{\mc L_c}{\nabla_\mu \phi^*}\right],\label{eq:32}\\
               &= 0.
\end{align}

The last equality is because of invariance of the Lagrangian. In the line before that, the term in square brackets represents a \co{conserved current}. Using $\delta \phi = i\theta\phi$ upto first order in $\theta$ one can calculate the current. It is found to be:

\begin{equation}
  \label{eq:16}
  i\theta (\phi \nabla^\mu \phi^* - \phi^* \nabla^\mu \phi).
\end{equation}

Because a constant times a conserved current is still a conserved current, we neglect the coefficient $i\theta$ in front and define $j^\mu[\phi]$ as:

\begin{equation}
  \label{eq:17}
  j^\mu[\phi] = \phi \nabla^\mu \phi^* - \phi^* \nabla^\mu \phi .
\end{equation}

It should be noticed that the conserved current does not depend on any addition to the Lagrangian that is only a function of the fields, not of their derivatives. For example, an addition of a function of $\phi^* \phi$ to $\mc L_c$ gives no contribution to conserved current, as can be seen via (\ref{eq:32}). Therefore, the conserved current found here can be useful for various other similar Lagrangians. However, of course the parts added should be invariant under the considered symmetry transformation.

Suppose $\phi$ and $\psi$ are two field solutions. If we defined:

\begin{equation}
  \label{eq:18}
  j^\mu[\phi,\psi] = \phi \nabla^\mu \psi^* - \psi^* \nabla^\mu \phi ,
\end{equation}

would this quantity describe a conserved current? The answer is positive, by virtue of field equations. We only need to show that $\nabla_\mu j^\mu[\phi,\psi] = 0$.

\begin{align}
  \nabla_\mu j^\mu[\phi,\psi] &= \nabla_\mu (\phi \nabla^\mu \psi^* - \psi^* \nabla^\mu \phi),\\
    &= \nabla_\mu \phi \nabla^\mu \psi^* + \phi \nabla^2 \psi^* - \nabla_\mu \psi^* \nabla^\mu \phi - \psi^* \nabla^2 \phi ,\\
    &= \phi \nabla^2 \psi^* - \psi^* \nabla^2 \phi .
\end{align}

By the field equations, we can write $(m^2+\xi R)\phi$ in lieu of $\nabla^2 \phi$ and do the same for $\psi^*$. When this is done, two terms cancel each other.

Hence, $j^\mu[\phi,\psi]$ is a conserved current. From now on, we sometimes denote it just as $j^\mu$.

\section{Definition of the Inner Product}
\label{sec:defin-inner-prod}

We consider a region of spacetime ($\mc M$) that has the \co{topology} $\reals \times \Sigma$, \ie it can be foliated by spacelike hypersurfaces\ic{hypersurface!spacelike}. For an illustration please see Figure~\ref{fig:foliation}.

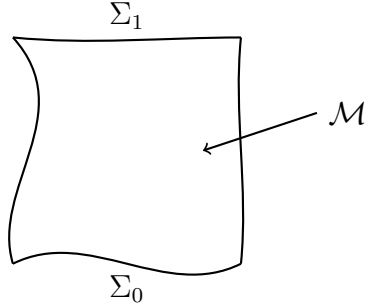
\begin{figure}
  \centering
  \begin{tikzpicture}[scale = 1, thick]
  \draw (0,0) .. controls (1,0.5) and (2,-0.5) .. (3,0) node [midway,below] {$\Sigma_0$};
  \draw (0,3) .. controls (1,2.9) and (2,3) .. (3,3) node [midway,above] {$\Sigma_1$};
  \draw (0,0) .. controls (-0.3,1) and (0.9,2) .. (0,3);
  \draw (3,0) .. controls (3.1,1) and (2.9,2) .. (3,3);
  \draw[->] (4,2) node[anchor=west] {$\mathcal M$} -- (2.5,1.5);
\end{tikzpicture}\vspace{1em}
  \caption{$\Sigma_0$ and $\Sigma_1$ are two spacelike hypersurfaces. We will see that the inner product we will define is constant on any spacelike hypersurface under consideration. This is the reason why it deserves the name ``inner product.''}
  \label{fig:foliation}
\end{figure}

Since $j^\mu$ is a \co{conserved current}, its divergence vanishes. If it is integrated on $\mc M$, the result will be zero of course. However, if we can use the \co{Stoke's theorem}, we can express this integral on the boundary ($\partial\mc M$) of $\mc M$. Part of the boundary that will be of importance to us is the union of two spacelike hypersurfaces, $\Sigma_0$ and $\Sigma_1$. We will suppose that the integral of the flux of $j^\mu$ on the rest of $\partial \mc M$ vanishes. For generality, we assume the spacetime is $(n + 1)$-dimensional.

\begin{align}
  0 &= \int_{\mc M} d^{n+1}x \abs g^{1/2} \nabla_\mu j^\mu ,\\
  \intertext{We use the Stoke's theorem.}
  0 &= \int_{\partial\mc M} d^nx \abs g^{1/2} n_\mu j^\mu .
\end{align}

Here $n^\mu$ is perpendicular to the boundary. We divide the boundary into three disjoint parts\footnote{In the presence of a black hole, $\Sigma_0$ will be the lightlike past. However, because we are foliating the spacetime outside the event horizon, as $t$ approaches $\infty$ the spacelike hypersurface that is appropriate to a stationary distant observer will asymptote to the event horizon and the lightlike future. Because massless particles do not reach the timelike future, $\Sigma_1$ is taken to consist of two pieces: the event horizon and the lightlike future.}: $\Sigma_0$, $\Sigma_1$ and $\partial\mc M \setminus (\Sigma_0 \cup \Sigma_1)$.

\begin{equation}
     0 = \int_{\Sigma_0} d^nx \abs g^{1/2} n_\mu j^\mu + \int_{\Sigma_1} d^nx \abs g^{1/2} n_\mu j^\mu + \int_{\partial\mc M \setminus (\Sigma_0 \cup \Sigma_1)} d^nx \abs g^{1/2} n_\mu j^\mu .
\end{equation}

We suppose that the last integral vanishes and hence require fields to vanish on this part of the boundary as a boundary condition. For example, usually field solutions on spacelike infinity are assumed to vanish.

\begin{equation}
    0 = \int_{\Sigma_0} d^nx \abs g^{1/2} n_\mu j^\mu + \int_{\Sigma_1} d^n \abs g^{1/2} n_\mu j^\mu .
\end{equation}

Here it should be noted that the vectors $n^\mu$ in $\Sigma_0$ and $n^\mu$ in $\Sigma_1$ point in \emph{opposite} directions. We can make both vectors point in the same direction, either forward or backward in time, by multiplying one of them with negative one. In this case we obtain:

\begin{equation}
  \label{eq:19}
  \int_{\Sigma_0} d^nx \abs g^{1/2} n_\mu j^\mu = \int_{\Sigma_1} d^n \abs g^{1/2} n_\mu j^\mu ,
\end{equation}

where both $n^\mu$ vectors point in the \emph{same} direction. As a convention we choose $n^\mu$ to point in the future direction. We therefore see that the quantity

\begin{equation}
  \int_\Sigma d^nx \abs g^{1/2} n_\mu j^\mu[\phi,\psi]
\end{equation}

has the same value in all spacelike hypersurfaces\ic{hypersurface!spacelike}. We would like to define an inner product\ic{inner product} $(\phi,\psi)$ proportional to this integral. However, we require $(\phi,\psi)^* = (\psi,\phi)$. In case of the $j^\mu$ given above, this is not satisfied: because of a factor of negative one. If we define $j^\mu = i j^\mu_{\text{(old)}}$, we get a pretty good inner product. One more subtlety is that we would like complex factors that multiply $\psi$ to pass outside the inner product in the same way, whereas the ones that multiply $\phi$ are to be complex conjugated. This point is not quite important, however it carries on the custom of quantum mechanics where $\phi$ is regarded as a bra and $\psi$ as a ket. For that purpose we map $j^\mu_{\text{(old)}} \mapsto j^{*,\mu}_{\text{(old)}}$ and use $j^\mu = -i j^{*,\mu}_{\text{(old)}}$ as the current that defines the charge. The $U(1)$-inner product\ic{U(1)@$U(1)$-inner product!definition} obtained is as follows:

\begin{equation}
  \label{eq:20}
  (\phi,\psi) = i \int_\Sigma d^nx \abs g^{1/2} n_\mu (\psi \nabla^\mu \phi^* - \phi^* \nabla^\mu \psi).
\end{equation}

Careful readers may have already noticed that we used complex conjugation in defining the inner product, even though we were interested in real scalar fields. The field $\phi$ is indeed required to be real valued, however, we use the symbols above (\ie $\phi$ and $\psi$ in (\ref{eq:20})) simply to denote \emph{any} solution of the field equation. Therefore, if the basis we choose for this function space, that includes all the field solutions, includes some complex valued functions, then the appearance of complex conjugation operation is necessary.

We list basic properties of this inner product:

\begin{enumerate}[(i)]
\item $(\phi,\psi)^* = (\psi,\phi)$
\item $(\psi^*,\phi^*) = - (\phi,\psi)$
\item $\forall a,b \in \comps, (a\phi + b\theta,\psi) = a^*(\phi,\psi) + b^*(\theta,\psi)$
\item $\forall a,b \in \comps, (\phi, a\psi + b\chi) = a(\phi,\psi) + b(\phi,\chi)$
\end{enumerate}

\chapter{Bogoliubov Coefficients for Hawking Radiation}
\label{chap:bog-coef-Hawking}

\section{Continuum Case}
\label{sec:continuum-case}

Bogoliubov coefficients, in general, appear in the following expansion:

\begin{equation}
  \fout_\omega = \int d\omega' \; (\alpha_{\omega\omega'} \fin_{\omega'} + \beta_{\omega\omega'} \fins_{\omega'}).\label{eq:43}
\end{equation}

Let us remember that we have $\fin_\omega = (4\pi\omega)^{-1/2} e^{-i\omega v}/r$. As mentioned in the text, we no longer write the angular dependency explicitly; otherwise $\fin_\omega$ must be multiplied by $Y_{lm}(\theta,\phi)$. If we multiply (\ref{eq:43}) by $e^{\pm i\Omega v}$ (where $\Omega > 0$) and integrate\footnote{Note that $\fout_\omega$ has no support on $\mc I^-$ for $v > v_H$. We integrate over $v \in \reals$ and use $\fout_\omega = 0$ for $v > v_H$.} over $v$, we can isolate $\alpha$ for the plus sign and $\beta$ for the minus sign. Rearranging the both sides of the obtained relations and using the form of $\fout$ near $\mc I^-$ as given in (\ref{eq:11}) we obtain:

\begin{align}
  \alpha_{\omega\omega'} &= \frac{1}{2\pi} \left(\frac{\omega'}{\omega}\right)^{1/2} \int_{-\infty}^{v_H} dv\; e^{i\omega'v} e^{-i\omega \uout(v)},\\
  \beta_{\omega\omega'}  &= \frac{1}{2\pi} \left(\frac{\omega'}{\omega}\right)^{1/2} \int_{-\infty}^{v_H} dv\; e^{-i\omega'v} e^{-i\omega \uout(v)}.
\end{align}

The incoming wave from $\mc I^-$ for which $v=v_H$ holds, does not reach $\mc I^+$ but intersects the event horizon. Therefore in the limit $v \to v_H^-$, $\uout$ must approach to $\infty$. Using $v$ in lieu of $\uin$ in equation (\ref{eq:uout-intermsof-uin}) we see that $v_H = v_0 - 4M$ should hold. If we use $v_H$ instead of $v_0$ we have the following:

\begin{equation}
  \uout(v) = v - 4M \ln\left( \frac{v_H-v}{4M} \right).
\end{equation}

Using this form for $\uout(v)$ and performing changes of variables in integrals, we obtain:

\begin{align}
  \alpha_{\omega\omega'} &= \frac{e^{i(\omega'-\omega)v_H}}{2\pi} \left(\frac{\omega'}{\omega}\right)^{1/2} \int_0^{\infty} dv\; e^{-i\sigma\abs{\omega'-\omega}v} \exp[i 4 \omega M \ln(v/4M)],\\
  \beta_{\omega\omega'}  &= \frac{e^{-i(\omega'+\omega)v_H}}{2\pi} \left(\frac{\omega'}{\omega}\right)^{1/2} \int^0_{-\infty} dv\; e^{-i(\omega'+\omega)v} \exp[i 4 \omega M \ln(-v/4M)],
\end{align}

where we have written $\omega'-\omega$ as $\sigma \abs{\omega'-\omega}$ for $\sigma = \sgn(\omega'-\omega)$ because when evaluating this integral after a Wick rotation\ic{Wick rotation}, whether $\sigma$ equals plus or minus one will be important in choosing how to close the contour. Figure~\ref{fig:bog-coef-contour} illustrates appropriate choice of contour in each case.

\begin{figure}
  \centering
  \begin{tikzpicture}[scale = 1, thick]
  \begin{scope}
      \draw[thin] (-2,0) -- (2,0);
      \draw[thin] (0,-2) -- (0,2);
      \draw[ultra thick,postaction={decorate, decoration={markings, mark=between positions 0.5 and 0.5 step 0.5 with {\arrow{stealth}}}}] (0,0) -- (2,0);
      \draw[ultra thick,postaction={decorate, decoration={markings, mark=between positions 0.5 and 0.5 step 0.5 with {\arrow{stealth}}}}] (0,2) -- (0,0);
      \draw[ultra thick,postaction={decorate, decoration={markings, mark=between positions 0.5 and 0.5 step 0.5 with {\arrow{stealth}}}}] (2,0) arc (0:90:2);
      \draw (0,2.2) node [anchor=south west] {$\!\!\!\alpha; \sigma=-$};
  \end{scope}
  
  \begin{scope}[xshift=5cm]
      \draw[thin] (-2,0) -- (2,0);
      \draw[thin] (0,-2) -- (0,2);
      \draw[ultra thick,postaction={decorate, decoration={markings, mark=between positions 0.5 and 0.5 step 0.5 with {\arrow{stealth}}}}] (0,0) -- (2,0);
      \draw[ultra thick,postaction={decorate, decoration={markings, mark=between positions 0.5 and 0.5 step 0.5 with {\arrow{stealth}}}}] (0,-2) -- (0,0);
      \draw[ultra thick,postaction={decorate, decoration={markings, mark=between positions 0.5 and 0.5 step 0.5 with {\arrow{stealth}}}}] (2,0) arc (0:-90:2);
      \draw (0,2.2) node [anchor=south west] {$\!\!\!\alpha; \sigma=+$};
  \end{scope}

  \begin{scope}[xshift=10cm]
      \draw[thin] (-2,0) -- (2,0);
      \draw[thin] (0,-2) -- (0,2);
      \draw[ultra thick,postaction={decorate, decoration={markings, mark=between positions 0.5 and 0.5 step 0.5 with {\arrow{stealth}}}}] (-2,0) -- (0,0);
      \draw[ultra thick,postaction={decorate, decoration={markings, mark=between positions 0.5 and 0.5 step 0.5 with {\arrow{stealth}}}}] (0,0) -- (0,-2);
      \draw[ultra thick,postaction={decorate, decoration={markings, mark=between positions 0.5 and 0.5 step 0.5 with {\arrow{stealth}}}}] (0,-2) arc (-90:-180:2);
      \draw (0,2.2) node [anchor=south west] {$\!\!\!\beta$};
  \end{scope}  
\end{tikzpicture}\vspace{1em}
  \caption{Contours on the complex plane to be considered for Wick rotating the integrals present in expressions for Bogoliubov coefficients. Integrals on the quarter circles vanish.}
  \label{fig:bog-coef-contour}
\end{figure}
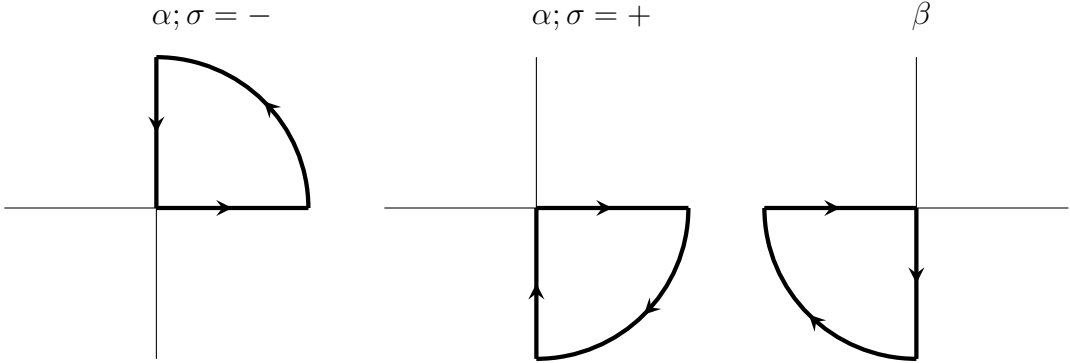

Integrals on the quarter circles vanish, which allows us to Wick rotate the variable $v$. The resulting integrals can be evaluated easily in terms of $\Gamma$ functions. The results are found as follows:\ic{Bogoliubov coefficients!for Hawking radiation}

\begin{align}
  \alpha_{\omega\omega'} &= \frac{-i\sigma}{2\pi} \left(\frac{\omega'}{\omega}\right)^{1/2} e^{2\pi\sigma\omega M} \frac{e^{i(\omega'-\omega)v_H} \Gamma(1+i4\omega M)}{(4M)^{i4\omega M} \abs{\omega'-\omega}^{1+i4\omega M}},\\
  \beta_{\omega\omega'} &= \frac{i}{2\pi} \left(\frac{\omega'}{\omega}\right)^{1/2} e^{-2\pi\omega M} \frac{e^{-i(\omega'+\omega)v_H} \Gamma(1+i4\omega M)}{(4M)^{i4\omega M} (\omega'+\omega)^{1+i4\omega M}}.
\end{align}

We can further simplify these by noting $\Gamma(1+ix) = [\pi x / \sinh(\pi x)]^{1/2}$:

\begin{align}
  \alpha_{\omega\omega'} &= -i\sigma \left( \frac{\omega' M}{\pi} \right)^{1/2} \frac{e^{\sigma 2\pi \omega M}}{\sinh^{1/2}(4\pi \omega M)} \frac{e^{i(\omega' - \omega)v_H}}{(4M)^{i4\omega M} \abs{\omega' - \omega}^{1+i4\omega M}},\\
  \beta_{\omega\omega'} &= i \left( \frac{\omega' M}{\pi} \right)^{1/2} \frac{e^{- 2\pi \omega M}}{\sinh^{1/2}(4\pi \omega M)} \frac{e^{-i(\omega' + \omega)v_H}}{(4M)^{i4\omega M} (\omega' + \omega)^{1+i4\omega M}}.
\end{align}

\section{Semi-discrete Case}
\label{sec:semi-discrete}

We defined wave packets $g_{jn}$ in the text as follows:

\begin{equation}
  g_{jn} \equiv \ve^{-1/2} \int_{j\ve}^{(j+1)\ve} d\omega \; e^{-i2\pi n \omega/\ve} \fout_\omega, \quad j \in \mathbb{Z}^{\geq 0}, n \in \mathbb{Z},
\end{equation}

and expanded these in terms of in-modes:

\begin{equation}
  g_{jn} = \int d\omega \; (\alpha_{jn\omega} \fin_\omega + \beta_{jn\omega} \fins_\omega).
\end{equation}

Of course, we determine semi-discrete Bogoliubov coefficients using the $U(1)$-inner product:

\begin{align}
  \alpha_{jn\omega'} &= (\fin_{\omega'},g_{jn}), & \beta_{jn\omega'} &= -(\fins_{\omega'},g_{jn}).
\end{align}

By using the explicit form of $g_{jn}$, one can relate these semi-discrete coefficients to the old ones:

\begin{align}
  \gamma_{jn\omega'} = \ve^{-1/2} \int_{j\ve}^{(j+1)\ve} d\omega \; e^{-i2\pi n \omega/\ve} \gamma_{\omega\omega'},
\end{align}

where $\gamma$ stands for $\alpha$ or $\beta$. Explicitly:

\begin{align}
  \alpha_{jn\omega'} &= -i\sigma h(\omega';\ve) \int_{j\ve}^{(j+1)\ve} \negmedspace\negmedspace d\omega \;
     \frac{e^{\sigma 2\pi \omega M}}{\sinh^{1/2}(4\pi\omega M)}
     \frac{e^{-i(v_H + 2\pi n /\ve)\omega}}{(4M)^{i4\omega M} \abs{\omega' - \omega}^{1+i4\omega M}},\\
   \beta_{jn\omega'} &= i h^*(\omega';\ve) \int_{j\ve}^{(j+1)\ve} \negmedspace \negmedspace d\omega \; 
     \frac{e^{- 2\pi \omega M}}{\sinh^{1/2}(4\pi\omega M)}
     \frac{e^{-i(v_H + 2\pi n /\ve)\omega}}{(4M)^{i4\omega M} (\omega' + \omega)^{1+i4\omega M}},
\end{align}

where $h(\omega';\ve) = e^{i\omega' v_H} (\omega' M / \pi \ve)^{1/2}$.

\chapter{The Entangled Nature of Vacuum}
\label{chap:entangledvacuum}

In section~\ref{sec:hawking-radiation} about the Hawking radiation, even though we calculated the spectrum of radiation we never attempted to expand the field in the out-modes. In order to fully expand the field, we need outgoing modes that reach $\mc I^+$ as well as the modes that fall through the horizon. Because event horizon together with the lightlike future is a Cauchy surface, we should be able to express any in-state in the Fock space of horizon modes and modes that reach $\mc I^+$.

The in-vacuum when expressed in late times, corresponds to an entangled state. Horizon degrees of freedom are entangled with outgoing Hawking radiation. We will not derive this well known result from scratch. Readers may like to see \cite{Fabbri2005, HossenfelderHawkingRad, FrolovNovikovBHPhysics}.

We, however, give a sketch of an idea of how it can be done. In order to express the in-vacuum in late times one acts on it with the identity operator: $\invac = \one \invac = \sum_\psi \ket{\psi}\braket{\psi \mid \text{in}}$ where the sum is over the Fock basis in late times. The state $\ket \psi$ is proportional to $a_{k_1}^\dagger a_{k_2}^\dagger \cdots a_{k_n}^\dagger \outvac$ for some $k_1,k_2,\ldots,k_n$ where $a_{k}^\dagger$ is a creation operator in late times. For now, we do not distinguish horizon modes from Hawking modes: $a_k^\dagger$ may be related to any one or both of them. By using Bogoliubov coefficients\ic{Bogoliubov coefficients} we can write: $\ain_k = \int dl \; (\alpha_{kl} a_l + \beta_{kl} a_l^\dagger$). Applying both sides to $\invac$ we obtain: $\int dl \; \alpha_{kl} a_l \invac = -\int dl \; \beta_{kl} a_l^\dagger \invac$. Multiplying both sides by\footnote{For the existence of $\alpha^{-1}_{k'k}$ readers may see a footnote in \cite{Fabbri2005} on page 79 which refers to Wald's\ip{Wald, Robert M.} \cite{WaldParticleBH}.} $\alpha_{k'k}^{-1}$, integrating over $k$ and renaming variables we obtain: $a_k \invac = -\int dk' dl \; \alpha_{kk'}^{-1}\beta_{k'l} a_l^\dagger \invac$. If we define $V_{kl} \equiv -\int dk' \; \alpha_{kk'}^{-1} \beta_{k'l}$, we can rewrite the previous expression as:

\begin{equation}
  a_k \invac = \int dl \; V_{kl} a_l^\dagger \invac .
\end{equation}

Hence the product $\braket{\psi \mid \text{in}}$ is proportional to $\int dl \;  V_{k_n l} \bra{\text{out}} a_{k_1} a_{k_2} \cdots a_{l}^\dagger \invac$. The product of operators inside can be replaced by the commutator

\begin{equation}
  [a_{k_1} a_{k_2} \cdots ,a_{l}^\dagger],
\end{equation}

because $a_l^\dagger$ annihilates $\bra{\text{out}}$. This commutator can be calculated\footnote{Readers may find the worm rule of commutators\ic{worm rule of commutators} presented in Appendix~\ref{chap:wormrule} useful.} and can be expressed as a sum of $n-1$ terms each containing an expectation value of $n-2$ operators. Recursively one will in the end reach a summation of integrals of various products of $V_{kl}$, where the total expression will be multiplied by $\braket{\text{out}\mid \text{in}}$.  By the way, this argument explains why an in-vacuum contains particles in \emph{pairs}. If we began with an odd number of operators, in the end only one operator would be left in between $\bra{\text{out}} \cdot \invac$ and it would vanish. Hence, the innerproduct between states in late times that contain odd number of particles and in-vacuum is zero: there is no transition to these states. Afterwards a nice expression of $\invac$ in late Fock basis can be obtained:

\begin{equation}
  \invac = \braket{\text{out} \mid \text{in}} \exp\left( \frac{1}{2} \int dk dl \; V_{kl} a_k^\dagger a_l^\dagger\right) \outvac .
\end{equation}

Reference \cite{Fabbri2005} may be seen for more discussion. Fabbri\ip{Fabbri, Alessandro} and Navarro-Salas\ip{Navarro-Salas, José} \cite{Fabbri2005} use out-modes that are the same with ours apart from a factor of $(4\pi)^{1/2}$ and define infalling horizon modes using the form of $\fout_\omega$ near $\mc I^-$: infalling horizon modes and outgoing Hawking modes are separate from each other and do not mix. We quote their result, while making clear that their summation is integration and explicitly writing angular dependencies, which is \cite{Fabbri2005}:

\begin{equation}
  \invac = \braket{\text{out} \mid \text{in}} \exp\left( \sum_{lm} \int_0^\infty d\omega \; e^{-4\pi M \omega} \aintd_{\omega lm} \aoutd_{\omega lm}\right) \outvac  , \label{eq:55}
\end{equation}

where a superscript `int' is used to denote horizon modes, the indices $l$ and $m$ denote angular properties, $a_k^{\text{out}}$ is the same as before and $\outvac$ is now the vacuum that is annihilated by both $a_k^{\text{int}}$ and $a_k^{\text{out}}$.

For a stationary observer\footnote{I would like to thank Sabine Hossenfelder\ip{Hossenfelder, Sabine} who turned my attention to this point and made myself aware that Hawking radiation and Unruh radiation has the same entanglement structure.} located at fixed Schwarzschild $r$ coordinate, the appropriate vacuum state is $\outvac$. This is because his proper time is proportional to Schwarzschild time that is used to define positive frequency modes that is appropriate for distant observers. The only difference is that the detected Hawking quanta will be blue shifted by an amount $(1-2M/r)^{-1/2}$. Stationary observers will observe increased temperatures as the stretched horizon is approached.

On the other hand, a freely falling observer will rather see a local vacuum while passing through the event horizon. Let $p$ be the event that the infalling geodesic intersects with the horizon. Expressing the local spacetime around $p$ in Riemann normal coordinates\ic{Riemann normal coordinates} one can see that it is locally flat, which is required by the manifold structure of spacetime. In this local region, the worldline of the observer is almost a straight line. Please see Figure~\ref{fig:localinfalling}.

\begin{figure}
  \centering
  \begin{tikzpicture}[scale = 2] 
  \draw[dashed] (-2,-2) -- (2, 2) node [near end, above, sloped] {Event horizon};
  \draw[dashed] (-2, 2) -- (2,-2);
  \draw[->] (-2,0) -- (2,0) node [anchor=west] {$\rho$};
  \draw[->] (0,-2) -- (0,2) node [anchor=south] {$\tau$};
  
  \draw[thin, domain=-0.5:1.5] plot (\x, {1-2*\x});
  
  \draw[thin, domain=-1.31695789692:1.31695789692] plot ({1.0*cosh(\x)}, {1.0*sinh(\x)});
\end{tikzpicture}\vspace{1em}
  \caption{Illustration of the locally flat spacetime region that is seen by an infalling observer. Here $\tau$ and $\rho$ are the locally flat temporal and radial coordinates, respectively. The straight line describes the worldline of the infalling observer, whereas the hyperbola does that of the stationary observer. Event horizon is juxtaposed on one of Rindler horizons.}
  \label{fig:localinfalling}
\end{figure}
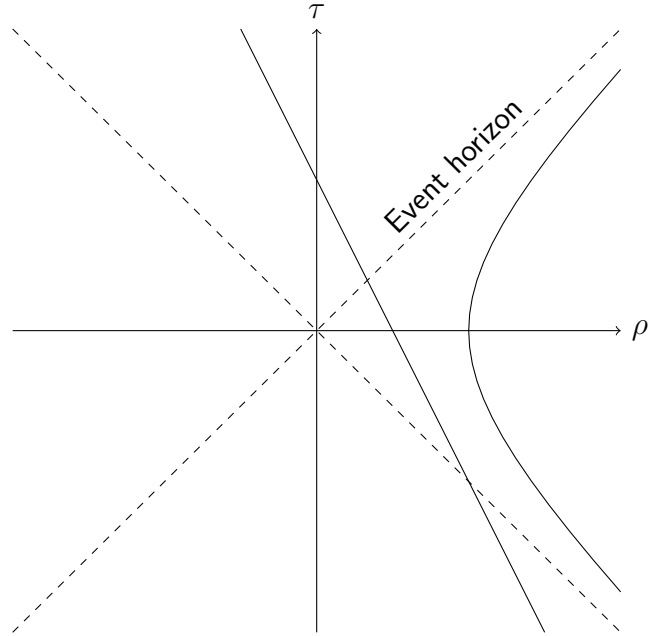

We described the motion of infalling observers, let us now turn our atention to stationary observers. A stationary observer who is located at $r \approx 2M$ does not follow a geodesic: his motion is accelerated. What is more, his worldline will turn out to match that of a Rindler observer.

We begin with the Schwarzschild metric:

\begin{equation}
  ds^2 = -\left(1- \frac{2M}{r}\right) dt^2 + \left(1- \frac{2M}{r}\right)^{-1} dr^2 + r^2 d\Omega^2 \label{eq:54} .
\end{equation}

We are interested in regions close to event horizon, for that purpose, we let $r = 2M (1+x)$ where $x \ll 1$. In this approximation, (\ref{eq:54}) becomes:

\begin{equation}
  ds^2 \approx -x dt^2 + \frac 1 x dr^2 + r^2 d\Omega^2 .
\end{equation}

Defining $x = \xi^2 / 16 M^2$, and $\eta = t/4M$ we obtain:

\begin{equation}
  ds^2 \approx -\xi^2 d\eta^2 + d\xi^2 + r^2 d\Omega^2 .
\end{equation}

The first two terms constitute the Rindler metric. Reference \cite{BergmannCollapse} exhibits similar calculations. In order to obtain the form we used in section~\ref{sec:unruh-radiation} one may let $\xi = e^{a\xi'} / \xi'$, however this is not necessary. What we observe is that constant $r$ worldlines corresponds to constant $\xi$ worldlines. These are the worldlines of Rindler observers.

The Hawking radiation detected by the stationary observer is indeed\footnote{The question of why a distant observer encounters Hawking radiation, when he is not accelerated at all, arises. The answer is the existence of \emph{event horizon}.} the radiation seen by a Rindler observer: Unruh radiation\ic{Unruh radiation}.

Here is the crux of the entanglement conflict. According to semiclassical calculation, the state of the black hole vapor is expressed by (\ref{eq:55}):

\begin{equation}
  \invac = \braket{\text{out} \mid \text{in}} \exp\left( \sum_{lm} \int_0^\infty d\omega \; e^{-4\pi M \omega} \aintd_{\omega lm} \aoutd_{\omega lm}\right) \outvac . \tag{\ref{eq:55}}
\end{equation}

However if the evaporation is unitary, black vapor must begin being purified after the Page time\ic{Page time} \cite{PageBHInfo}: it cannot be in the above state. If it is not in the above state at least for the modes that are emitted when the black hole is old as argued by AMPS \cite{Almheiri2012} and they are almost maximally entangled with early radiation, then they cannot be entangled with interior modes to yield the Minkowski vacuum when traced backwards in time to the near horizon region: because these modes are Rindler modes around the horizon.

\chapter{Two Proofs of the No-cloning Theorem}
\label{chap:proof-no-cloning}

Here we provide two proofs of the no-cloning theorem. First one shows that cloning of arbitrary states violate the linearity of quantum mechanics, whereas the second one obtains a contradiction using a more general approach known as the generalized measurement formalism.

\section{Linearity of Quantum Mechanics}
\label{sec:no-clone-linear-qm}

Here we provide a proof that reach its conclusion by showing that an operation that clones arbitrary quantum states is non-linear. Because operators in quantum mechanics are linear, it is concluded that cloning is not allowed in QM.

The cloning operation on states is of the following form:

\begin{equation}
  \label{eq:30}
  \ket \psi \otimes \ket \phi \mapsto \ket \psi \otimes \ket \psi ,
\end{equation}

where $\ket \psi$ is the state to be copied and $\ket \phi$ is some standard state, for example the state in which the storage unit is. Because quantum mechanics is linear, there must be a linear operator $L$ that performs the transformation of $\ket \psi \otimes \ket \phi$ into $\ket \psi \otimes \ket \psi$:

\begin{equation}
  \label{eq:31}
  L \ket \psi \otimes \ket \phi = \ket \psi \otimes \ket \psi .
\end{equation}

Because superposed states must be cloned by this operation as well, a contradiction can be obtained as in reference \cite{Susskind2005} whose argument is as follows. First of all, a cloning operation must perform the following:

\begin{align}
  \ket{+} \otimes \ket \phi &\rightarrow \ket + \otimes \ket + ,\\
  \ket{-} \otimes \ket \phi &\rightarrow \ket - \otimes \ket - ,
\end{align}

where we use $\pm$ notation instead of $\uparrow\downarrow$ notation for spins. On the other hand, the following two expressions must be true as well, because $L$ is a linear operator:

\begin{align}
  \frac 1 {\sqrt 2} (\ket + + \ket -) \otimes \ket \phi &\rightarrow \frac 1 {\sqrt 2} (\ket + \otimes \ket + + \ket - \otimes \ket -).\\
  \frac 1 {\sqrt 2} (\ket + + \ket -) \otimes \ket \phi &\rightarrow \frac 1 {\sqrt 2} (\ket + + \ket -) \otimes \frac 1 {\sqrt 2} (\ket + + \ket -),\\
  &= \frac 14 \ket + \otimes \ket + + \frac 14 \ket + \otimes \ket - ,\\
  &\quad + \frac 14 \ket - \otimes \ket + + \frac 14 \ket - \otimes \ket - .
\end{align}

Because the states $\ket +$ and $\ket -$ are linearly independent, the contradiction follows.

\section{Generalized Measurement Approach}
\label{sec:no-clone-gen-measure}

The proof by generalized measurement approach is done in one of Sadi Turgut's\ip{Turgut, Sadi} lectures on quantum information\footnote{I would like to thank Ümit Alkuş\ip{Alkuş, Ümit} for sharing his lecture notes about the proof of the theorem. What follows below is based on these notes.}.

We suppose there exist generalized measurement operators $\{M_i\}_i$ such that:

\begin{equation}\label{eq:35}
  \sum_i M_i^\dagger M_i = \one .
\end{equation}

We would like these operators to clone some given state $\ket \psi$:

\begin{equation}\label{eq:33}
  M_i \ket \psi \otimes \ket 0 = c_i \ket \psi \otimes \ket \psi ,
\end{equation}

where $c_i$'s depend on $\ket \psi$. On the other hand, (\ref{eq:35}) causes $c_i$'s to satisfy:

\begin{equation}
  \sum_i \abs{c_i}^2 = 1.  
\end{equation}

If this operation is able to copy the state $\ket \psi$, it should also do the same for another state $\ket{\psi'}$:

\begin{equation}\label{eq:34}
  M_i \ket{\psi'} \otimes \ket 0 = c_i' \ket{\psi'} \otimes \ket{\psi'}.
\end{equation}

Because $c$'s depend on the state to be cloned, $c_i'$ may differ from $c_i$: they are not necessarily the same. We apply $\dagger$-operation to (\ref{eq:33}) and apply it to (\ref{eq:34}). Then, we sum over $i$:

\begin{align}
  \sum_i \bra{\psi} \otimes \bra 0 M_i^\dagger M_i \ket{\psi'} \otimes \ket 0 &= \sum_i c_i^* c_i' \bra{\psi} \otimes \bra{\psi} \cdot \ket{\psi'} \otimes \ket{\psi'},
  \intertext{We simplify the left hand side by using (\ref{eq:35}) and evaluate the inner-products:}
  \braket{\psi|\psi'} &= \sum_i c_i^* c_i' \braket{\psi|\psi'}^2,
  \intertext{Let us assume that $\ket{\psi'}$ and $\ket \psi$ are non-orthogonal: $\braket{\psi|\psi'} \neq 0$:}
  \braket{\psi|\psi'}^{-1} &= \sum_i c_i^* c_i' .\label{eq:36}
\end{align}

By the Cauchy-Schwarz inequality, the following holds:

\begin{equation}
  \label{eq:cici'-ineq}
  \abs{\sum_i c_i^* c_i'} \leq \sum_i \abs{c_i}^2 \cdot \sum_i \abs{c_i'}^2.
\end{equation}

Because $\sum_i \abs{c_i}^2 = \sum_i \abs{c_i'}^2 = 1$, we have:

\begin{equation}
  \abs{\sum_i c_i^* c_i'} \leq 1.
\end{equation}

Using this inequality in (\ref{eq:36}), we obtain:

\begin{equation}
  \abs{\braket{\psi|\psi'}} \geq 1 .
\end{equation}

Because we have chosen $\ket{\psi'}$ different from $\ket\psi$, we obtain $\abs{\braket{\psi|\psi'}} > 1$: This is a contradiction. Therefore, cloning is not allowed in quantum mechanics.

\chapter{The Worm Rule of Commutators}\ic{worm rule of commutators|textbf}
\label{chap:wormrule}

\begin{figure}[h]
  \centering
  \includegraphics[width=10cm]{./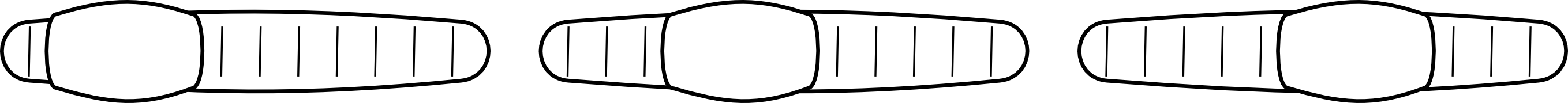}\vspace{1em}
  \caption{A simplistic illustration of how an earthworm moves.}
  \label{fig:earthworm}
\end{figure}

This is a very nice property of operations $[\cdot,\cdot]$ that satisfy $[a,bc] = [a,b]c + b[a,c]$. Commutators in quantum mechanics and Poisson brackets\ic{Poisson bracket} of classical mechanics are two examples. We express this as a theorem:

\begin{thm}
  Let $[\cdot,\cdot]$ be an operation such that for all $a,b,c$ the equality $[a,bc] = [a,b]c + b[a,c]$ holds. Then the following is true:

  \begin{equation}
    [a,bcde\ldots] = [a,b]cde\ldots + b[a,c]de\ldots + bc[a,d]e\ldots + bcd[a,e]\ldots + \cdots\label{eq:52}
  \end{equation}

\end{thm}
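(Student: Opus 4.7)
The plan is to prove the formula by induction on the number $n$ of factors appearing to the right of $a$ in the commutator $[a, x_1 x_2 \cdots x_n]$. The statement as written in the theorem is slightly informal because of the ``$\ldots$'', so I would first restate it precisely as: for every $n \geq 2$ and every collection of elements $a, x_1, x_2, \ldots, x_n$, one has
\begin{equation}
[a, x_1 x_2 \cdots x_n] = \sum_{k=1}^{n} x_1 x_2 \cdots x_{k-1} [a, x_k] x_{k+1} \cdots x_n,
\end{equation}
where the empty product (for $k=1$ on the left, or $k=n$ on the right) is understood as the identity factor. This is the clean form the theorem wants to express; the ``worm'' picture in Figure~\ref{fig:earthworm} corresponds to the commutator $[a,\cdot]$ sliding, segment by segment, through the product.

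The base case $n=2$ is just the hypothesis $[a, bc] = [a,b]c + b[a,c]$, so nothing needs to be done there. For the inductive step I would assume the formula holds for some $n \geq 2$ and establish it for $n+1$. The trick is to group the product as $x_1 x_2 \cdots x_{n+1} = (x_1 x_2 \cdots x_n) \, x_{n+1}$ and apply the given two-factor rule with $b = x_1 x_2 \cdots x_n$ and $c = x_{n+1}$, which yields
\begin{equation}
[a, x_1 \cdots x_{n+1}] = [a, x_1 \cdots x_n]\, x_{n+1} + (x_1 \cdots x_n)\, [a, x_{n+1}].
\end{equation}
Then I would apply the inductive hypothesis to $[a, x_1 \cdots x_n]$ on the first term, expanding it as a sum of $n$ terms in which $[a,\cdot]$ visits each of $x_1, \ldots, x_n$ in turn, and right-multiply each summand by $x_{n+1}$. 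Adding the extra summand $(x_1 \cdots x_n)[a, x_{n+1}]$ produces exactly the $n+1$ terms required.

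No step here looks like an obstacle; the proof is essentially bookkeeping. The only point that deserves a line of care is the empty-product convention at the endpoints $k=1$ and $k=n+1$, to make sure the first and last summands of the expanded formula come out correctly (namely $[a,x_1] x_2 \cdots x_{n+1}$ and $x_1 \cdots x_n [a, x_{n+1}]$). Once that convention is fixed, the induction closes immediately and the pattern in \eqref{eq:52} is justified.
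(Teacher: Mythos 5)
Your proof is correct and follows essentially the same route as the paper: induction on the number of factors, grouping $x_1\cdots x_n$ as a single term, applying the two-factor rule with $c=x_{n+1}$, and expanding the first summand by the inductive hypothesis. Your restatement with the explicit summation $\sum_{k=1}^{n} x_1\cdots x_{k-1}[a,x_k]x_{k+1}\cdots x_n$ is a welcome tightening of the paper's informal ``$\ldots$'' notation, but the argument is the same.
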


  The niceness of the pattern is that the \emph{order} of the sequence $bcde\ldots$ remains the same while $[a,\cdot]$ is applied to each term. Since the sequence of terms that are summed in (\ref{eq:52}) resembles how an earthworm moves, we call this relation \emph{the worm rule of commutators}\ic{worm rule of commutators!origin of name}. This property is not very hard to notice, therefore it might have appeared elsewhere; however we would like to call it that way.

If, as it is for a quantum mechanical commutator, $[a,b] = -[b,a]$, then the reverse relation ($[bcd,a] = [b,a]cd + b[c,a]d + bc[d,a]$) can be shown to hold as well, which we will not do it explicitly.

\begin{proof}
  Proof is by recursion. Let $n$ denote the number of terms on the right argument of the commutator. Theorem holds for $n=2$ by definition. Now suppose $n \geq 2$. We have an expression of the form $[a,b_1 \ldots b_n b_{n+1}]$. We consider $b_1 \ldots b_n$ as one term and use the theorem:

  \begin{equation}
    [a,b_1 \ldots b_n  b_{n+1}] = [a,b_1 \ldots b_n]  b_{n+1} + b_1 \ldots b_n [a,b_{n+1}]
  \end{equation}

The first term can be expanded to give a sum of $n$ terms where $[a,\cdot]$ is applied from $b_1$ to $b_n$. On the other hand the second term in the equation is the term that would appear when $[a,\cdot]$ is applied to $b_{n+1}$. In total, in each one of $n+1$ terms the order of $b_1 \ldots b_n b_{n+1}$ is preserved and $[a,\cdot]$ is applied to each one of them. This completes the proof.
\end{proof}

\newpage
\phantomsection
\addcontentsline{toc}{chapter}{Index}

\printindex[people][Index of People]
\printindex[concepts][Index of Concepts]

\end{document}